\newtheorem{theorem}{Theorem}[section]
\newtheorem{lemma}[theorem]{Lemma}
\theoremstyle{definition}
\newtheorem{definition}{Definition}[section]
\newtheorem{example}{Example}[section]
\newtheorem{assumption}{Assumption}[section]
\providecommand{\algorithmname}{Algorithm}
\begin{document}

\title{An optimal test for strategic interaction \\ in social and economic network formation \\ between heterogeneous agents}
\author{Andrin Pelican and Bryan S. Graham
\thanks{\underline{Pelican}: e-mail: [pelicanandrin@gmail.com]. 
\underline{Graham}: Department of Economics, University of California - Berkeley, 530 Evans Hall \#3380, Berkeley, CA 94720-3880 and National Bureau of Economic Research, e-mail: [bgraham@econ.berkeley.edu], web: \texttt{http://bryangraham.github.io/econometrics/}. We thank seminar participants at the University of California - Berkeley, the University of Cambridge, ITAM, Jinan University, Yale University, Microsoft Research, Princeton University, Rice University, Boston University, the World Congress of the Econometric Society as well as the North American Winter Meetings of the Econometric Society for feedback and suggestions. We also thank Daniele Ballinari, Vincent Boucher, Shuowen Chen, Enrico De Giorgi, Aureo de Paula, P\'eter Erd\"os, Bo Honor\'e, Hiroaki Kaido, Piotr Lukaszuk, Ulrich Mueller, Kaïla A. Munro, Gabriel Okasa, Seth Richards-Shubik, Xun Tang and Yi Zhang for their generous feedback and input on earlier drafts. A special thanks to Winfried Hochstättler for continuously pointing out the connection to the discrete mathematics literature and to Michael Jannson for help and insight into the nature of our testing problem. This revision has also benefited from discussions about the theory of strategic network formation with Sanjeev Goyal and Matt Jackson, very detailed comments from the co-editor, as well as the feedback and suggestions provided by four anonymous referees. All the usual disclaimers apply. Financial support from NSF grant SES \#1357499 is gratefully acknowledged by the second author. This paper is a revised version of the co-authored chapter ``Testing Strategic Interaction in Networks" which appeared in Pelican's 2019 doctoral dissertation at the University of St. Gallen.}} 

\date{\today}

\maketitle

\newpage

\begin{abstract}
Consider a setting where $N$ players, partitioned into $K$ observable types, form a directed network. Agents' preferences over the form of the network consist of an arbitrary network benefit function (e.g., agents may have preferences over their network centrality) and a private component which is additively separable in own links. This latter component allows for unobserved heterogeneity in the costs of sending and receiving links across agents (respectively out- and in- degree heterogeneity) as well as homophily/heterophily across the $K$ types of agents. In contrast, the network benefit function allows agents' preferences over links to vary with the presence or absence of links elsewhere in the network (and hence with the link formation behavior of their peers). In the null model which excludes the network benefit function, links form independently across dyads in the manner described by \cite{Charbonneau_EJ17}. Under the alternative there is interdependence across linking decisions (i.e., strategic interaction). We show how to test the null with power optimized in specific directions. These alternative directions include many common models of strategic network formation (e.g., ``connections" models, ``structural hole" models etc.). Our random utility specification induces an exponential family structure under the null which we exploit to construct a similar test which exactly controls size (despite the the null being a composite one with many nuisance parameters). We further show how to construct locally best tests for specific alternatives without making any assumptions about equilibrium selection. To make our tests feasible we introduce a new MCMC algorithm for simulating the null distributions of our test statistics. 

\end{abstract}

\textbf{\underline{JEL Codes:}} C31, C57

\textbf{\underline{Keywords:}} \textit{Network formation, Locally Best Tests, Similar Tests, Exponential Family, Incomplete Models, Degree Heterogeneity, Homophily, Binary Matrix Simulation, Edge Switching Algorithms}

\newpage

\pagenumbering{arabic}
\onehalfspacing

\vspace{5mm}

In an \emph{economic} model of (directed) network formation agents \emph{purposefully} direct links to one another in order to maximize utility. Specifically, a payoff function maps all possible network configurations into agent utilities. Agents use this payoff function to weigh the benefits of directing any particular link against the costs of doing so. A Nash Equilibrium (NE) network arises when all agents link choices are individually optimal given the choices made by other agents \citep[e.g.][]{Bala_Goyal_EM00}.

Important examples of such processes include firms choosing their suppliers \citep[e.g.,][]{Atalay_et_al_PNAS11}, adolescents choosing friends \citep[e.g.,][]{Christakis_et_al_Book2020}, banks engaging in interbank lending to meet statutory reserve requirements \citep[e.g.,][]{Boss_et_al_QF2004}, and village households choosing partners for risk-sharing \citep[e.g.,][]{deWeerdt_IAP04}. \cite{Jackson_et_al_JEL17} present many other examples of networks in economics. Such data abound in the other social sciences as well \citep[e.g.,][]{Apicella_et_al_Nat12}.

The utility an agent receives when she directs a link to another agent can be usefully divided into two components.\footnote{In digraphs, or directed networks, it is customary to refer to edges as ``arcs". Here we use the terms link, edge, arc, friendship, relationship etc. interchangeably.} The first component is ``private", or, more precisely, invariant to the presence or absence of other links in the network.\footnote{Note ``other links" include those possibility directed by the sending agent to targets other than the one at hand. An alternative to the ``private" nomenclature would be ``dyadic" or ``direct".} The second component is ``social", or varying with the presence or absence of other links in the network.

An example of the first component is the payoff associated with a homophilous link \citep{McPherson_et_al_ARS01}. This payoff component only depends on the attributes of the sending (ego) and receiving (alter) agents. Another example is associated with ``degree heterogeneity": agents may vary systematically in their propensity to direct links, or in their attractiveness as link targets for others. Finally we might posit that the payoff from any particular link varies for idiosyncratic reasons, as in other random utility models (RUMs) of discrete choice \citep{McFadden_FinE74}. Empirical models of network formation with these features were introduced by \cite{Charbonneau_EJ17}, \cite{Graham_EM17}, \cite{Dzemski_RESTAT18}, \cite{Jochmans_JBES18} and \cite{Yan_et_al_JASA18}. These models are fundamentally \emph{dyadic}: agents' network payoffs are a simple sum of link-specific payoffs and, crucially, invariant to the linking behavior of other agents.

In some settings, however, agents may also value indirect links. For example, an arc from $j$ to $k$ may incidentally reduce the shortest path length from $i$ to $k$, allowing agent $i$ better access to $k$'s information \citep[e.g.,][]{Jackson_Wolinsky_JET96, Bala_Goyal_EM00}. While arc $jk$ is valued by $i$, this value is not incorporated into $j$'s decision to direct the arc or not. Preferences of this type mean agents' decisions impose \emph{externalities} on others. The detection of such externalities is the subject of this paper.

Payoff functions with externalities feature prominently in formal theoretical models of network formation \citep[cf.,][]{Jackson_NetBook08, Goyal_Book2021}. Equilibrium in network formation models with externalities may be analyzed using the tools of game theory. Indeed such models are typically called strategic network formation models. In what follows we say a network formation model is \emph{strategic} if agents value indirect links or, equivalently, their optimal linking strategy varies with the linking behavior of others.

When links made by one agent alter the incentives for link formation faced by others, equilibrium network configurations may diverge from socially optimal ones \citep{Goyal_Book2021}. This, in turn, suggests that well-designed interventions might make agents better off. In contrast, without a wedge between the private and social benefits of link formation, equilibrium and socially optimal networks will coincide. This paper introduces a test for whether agents' own incentives to form links vary with the choices of others. A rejection of our test, under the maintained model, indicates the presence of externalities, with their attendant implications for optimal policy design.

\subsubsection*{An overview of the test and its uses}
Strategic network formation games are complicated. In a directed network with $N$ agents, there are $2^{N(N-1)}$ possible action profiles or network configurations; many of which may be Nash Equilibria (NE). In the seminal model of directed network formation introduced by \cite{Bala_Goyal_EM00}, for example, with $N=5$ agents there are $1,069$ NE networks. Because of this combinatoric complexity, methods pioneered for the econometric analysis of discrete games with just a few players are not directly applicable -- at least in practice -- to network formation games.

In recent work, \cite{Christakis_et_al_Book2020}, \cite{Mele_EM17}, \cite{Miyauchi_JOE16}, \cite{dePaula_et_al_EM18} and \cite{Sheng_EM20} each proposed empirical models of strategic network formation.\footnote{\cite{dePaula_ARE2020} surveys work in this area and provides additional references.} Each of these models impose particular restrictions on the form of the network payoff function, the nature of any unobserved heterogeneity, and/or make assumptions about equilibrium selection. Even with these restrictions, estimating the identified set for the parameters indexing the network payoff function in these models is challenging, as is conducting inference.\footnote{We wish to emphasize that these ``critiques" reflect the inherent difficulty of the problem, not any deficiencies in the above cited papers. Indeed these researchers have shown considerable ingenuity in proposing ways to make methods designed for games with just a few players scale to the considerably more complicated many-player network setting.}

In this paper we introduce an econometric model of strategic network formation which, we believe for the first time, simultaneously allows (i) for agents to value both direct and indirect links, (ii) for the systematic returns to link formation to vary with observed dyad attributes, and (iii) for unobserved agent-specific correlated degree heterogeneity. Our setup maps neatly into the ``costs versus benefits" payoff structures emphasized in theoretical models of strategic network formation (see, for example, \citet[Chapters 6 \& 11]{Jackson_NetBook08} and \citet[Chapter 3]{Goyal_Book2021}). Examples of models -- suitably enriched to include covariates, unobserved heterogeneity, and random link utility -- encompassed by our framework include the ``connections" model \citep[e.g.,][]{Jackson_Wolinsky_JET96, Bala_Goyal_EM00}, ``structural hole" or ``bridging" models \citep[e.g.,][]{Goyal_Vega-Redondo_JET2007, Kleinberg_et_al_ACM2008} and the favor exchange or ``supported links" model of \cite{Jackson_et_al_AER12}. We can also accommodate tastes for reciprocity, transitivity, network centrality and other forms of indirect link valuation.

We begin with the baseline dyadic logistic regression model for directed networks introduced by \cite{Charbonneau_EJ17}.\footnote{Although the dissertation from which \cite{Charbonneau_EJ17} was drawn appears to be the first formal analysis of the dyadic logit model (especially in terms of exploring the implications of its exponential family structure for estimation), its use in empirical work arose earlier. For example, in the empirical network analysis of \cite{deWeerdt_IAP04}; see also \cite{Holland_Leinhardt_JASA81}.} This model is useful for modelling homophily and degree heterogeneity. We then augment this model with a network payoff term which additionally allows agents to value indirect links. The resulting model is quite complicated. Formally it is a very large complete information simultaneous move game. While we assume that the observed network is a NE, we make no auxiliary equilibrium selection assumptions.\footnote{More precisely the observed network is either a pure strategy NE or in the support of a mixed strategy NE (in fact our results hold under an even weaker notion of equilibrium, as explained below).}

Let $K$ be the number of support points in the distribution of observed agent attributes and $N$ the number of agents in the network. Our model includes (i) $K^2$ ``homophily"  parameters, $\Lambda\overset{def}{\equiv}\left[\lambda_{kl}\right]$ for $k,l=1 \dots K$, capturing how link returns vary systematically with ego and alter attributes, (ii) two $N \times 1$ parameter vectors $\mathbf{A}\overset{def}{\equiv}[A_i]$ and $\mathbf{B}\overset{def}{\equiv}[B_i]$ for $i=1 \dots N$, capturing, respectively, agent-specific out- and in-degree heterogeneity, and (iii) a scalar parameter, $\gamma$, measuring the extent to which agents value indirect links. Our model also includes (iv) an ``equilibrium selection" function. Since we are agnostic about which NE is selected in the presence of multiple equilibria, this function is not specified by the analyst, but enters our analysis abstractly (see Theorem \ref{thm: measurability} below).

We treat $\delta=(\Lambda',\bf{A}',\bf{B}')'$ as a (high dimensional) nuisance parameter and the equilibrium selection mechanism as a nuisance function.\footnote{This function assigns probabilities to all NE equilibria for every possible realization of the random utility shocks.} This focuses our attention solely on $\gamma$. While, in principle, an analysis of the identified set for $\gamma$ might be possible, we instead focus on the one-sided hypothesis of $H_0:\gamma=0$ versus $H_1:\gamma>0$. Or, put differently, we identify the \emph{sign} of $\gamma$.\footnote{Our focus on one-sided hypotheses results in a particularly clean exposition and analysis, allows for the statement of some optimality results, and covers our main examples of interest. However, as will be apparent, our basic set-up extends naturally to two-sided hypotheses.}

Our test involves comparing a statistic of the observed network (e.g., its transitivity index) with a critical value derived from a reference distribution. Natural questions are: (i) which reference distribution? (ii) how do I compute the critical value? (iii) which network statistic should I use? We provide answers to all three of these questions.

There is a long tradition in empirical work of using the \"{E}rdos-R\'{e}nyi model to generate the reference distribution. This invariably results in ``straw man" tests since few real world networks are well described by the \"{E}rdos-R\'{e}nyi model. To avoid spurious rejection of the null of no strategic interaction ($H_0:\gamma=0$) it is therefore important to have a rich null model; one that might actually describe real world networks. \cite{Charbonneau_EJ17} provides an easy to interpret, random utility based, and ``credible" null model.\footnote{The \cite{Charbonneau_EJ17} model can match any observed in- and out-degree sequence as well as rich patterns of homophilous linking. This is important since heavy-tailed degree distributions characterize many real world networks, as does homophily \citep[e.g.,][]{Barabasi_Book16, McPherson_et_al_ARS01}.},\footnote{An analogy: consider the challenge of determining whether persistence in panel data is due to state-dependence or unobserved heterogeneity (or both). Any credible test for state dependence needs to include as part of its null a correct specification of unobserved agent-specific heterogeneity \citep[e.g.,][]{Chamberlain_LALMD85}.} 

Because $\delta$ may range freely across its parameter space when $\gamma=0$ our null hypothesis is a composite one. Test size equals the supremum of the rejection rate across all data generating processes (DGPs) with $\gamma=0$. Because $\delta$ is high dimensional, the null space is large and constructing a test with good size and power properties is non-trivial  \citep[cf.,][]{Moreira_JOE09}. An additional non-standard feature of our testing problem is that the nuisance equilibrium selection function is only present under the alternative \citep[cf.,][]{Andrews_Ploberger_EM1994}. 

Under a logistic assumption on the random component of link utility, using a classic exponential family conditioning argument, we introduce a family of similar tests. We provide an \emph{exact} characterization of the null distributions of the test statistics in this family and, crucially, a feasible Markov Chain Monte Carlo (MCMC) algorithm for simulating from them. Simulating the null distribution requires drawing a binary adjacency matrix uniformly at random from the set of all adjacency matrices satisfying certain constraints. Constrained binary matrix simulation has numerous applications in biology, psychology, ecology and other fields \citep[cf.,][]{Sinclair_Book1993, Blitzstein_Diaconis_IM11}. Unfortunately, extant simulation algorithms cannot be used to simulate the null distribution needed here; our algorithm is therefore novel and of independent interest.

We also derive the form of the locally best test under the alternative $H_1:\gamma>0$. Remarkably we are able to do this while remaining agnostic about equilibrium selection. Finally, because our test is exact, we also side-step difficult issues that arise when undertaking asymptotic analysis in the single network context (see \cite{Graham_HBE20} for references and discussion).

Possible use cases for the methods introduced in this paper include:
\begin{enumerate}
    \item \textbf{Assessing model adequacy or goodness-of-fit:} The researcher believes the null model of \cite{Charbonneau_EJ17} is adequate for the setting at hand, but wishes to report an omnibus goodness-of-fit test (similar to the practice of reporting the Sargan-Hansen J-Statistic in the context of GMM estimation). While a rejection in this setting is interpreted as evidence against the baseline null model, it not interpreted as evidence in favor of any particular alternative.\footnote{Dyadic regression analysis has a long history in economics going back, at least, to the work of \cite{Tinbergen_SWE62}. See \cite{Graham_HBE20} for a survey and references. We note that this use case has the potential to introduce pre-testing bias if researchers only report their results conditional on accepting the null.} Our use of ``classic" sufficiency arguments separates the the information in the data relevant for estimation of $\delta$ -- the model parameter under the null -- from that relevant for assessing model adequacy \citep[cf.,][p. 29]{Barndorff-Nielsen_Cox_IA1994}. As is well-known, it is not possible to construct a test with good power in all directions of mis-specification \citep[][Theorem 14.6.2]{Lehmann_Romano_TSH05}. The researcher's choice of test statistic should therefore, at least heuristically, reflect those directions of mis-specification of most concern. 
   \item \textbf{Detecting strategic interaction of a specific form:} The researcher's primary interest is in the specified model and she wishes to sign identify $\gamma$. In this example the analysts undertakes empirical work under the maintained assumption that \emph{the true model is either in the null model space or in the specified alternative model space.} The data are used to determine which case prevails. This knowledge is actionable. For example, knowledge that $\gamma>0$ may be sufficient to justify policies which subsidize link formation. 
   \item \textbf{Cataloging ``unusual" network features:} The researcher wishes to assess whether certain features of the network in hand are ``unusual". In contrast to the first use case, here the researcher suspects that the network in hand is not well-described by the \cite{Charbonneau_EJ17} null, but, in contrast to the second use case, she remains somewhat agnostic about the form of the true model. The null model defines a set of reference networks with certain properties identical to those in the network of interest (e.g., the in- and out- degree sequences, numbers of links between agents with different covariate configurations). The researcher can compare features of interest in their network (e.g., diameter, reciprocity, support) with their distributions across the null reference set to assess whether their network is, indeed, ``unusual". There is a long history, as noted earlier, of comparing network statistics to their expected value under an \"{E}rdos and R\'{e}nyi null. Here we provide a more realistic reference null distribution. See \cite{Holland_Leinhardt_SM76} for a discussion of this type of analysis in sociology, Section 5 of \cite{Jackson_et_al_AER12} for an example from economics; \cite{Milo_el_al_Sci02} for an example from computational biology, and \cite{Gotelli_EC2000} for a discussion of applications to species co-occurrence analysis in ecology. Researchers undertaking this last type of analysis might be best described as doing structured data exploration.
    
\end{enumerate}

While our focus is on strategic network formation, it seems likely that the ideas developed below could be adapted to design tests appropriate for other incomplete econometric models. In recent work \cite{Chen_et_al_EM2018} and \cite{Kaido_Zhang_arXiv2019} introduced likelihood ratio type tests applicable to incomplete models. Our test, in contrast, is a conditional score test. Conditioning, while requiring exponential family structure, is helpful in settings with a high dimensional nuisance parameter \citep[cf.,][]{Moreira_JOE09}. Our score-based approach may also have computational advantages in settings where likelihood evaluation under the alternative is difficult (e.g., when enumeration of all NE is impossible).

\subsubsection*{Outline of the paper}

Section \ref{sec:model} presents our model of strategic network formation. We begin by defining agent preferences and characterizing equilibrium networks. With this foundation we are able to write down a likelihood function for the network. Since there may exist multiple equilibrium networks, this likelihood depends on an unknown (and unmodelled) equilibrium selection mechanism. Although well-defined (see Theorem \ref{thm: measurability} below), our likelihood function cannot be numerically evaluated in practice.

Section \ref{sec:test} outlines our approach to testing. We first characterize the exact distribution of any statistic of the adjacency matrix under the null. By conditioning on a sufficient statistic for the parameter of the null model we guarantee similarity of our text. Our test exactly controls size across all null model parameter values. Next we derive the form of the locally best test statistic for specific alternatives. 

Although we characterize the exact null distribution of our test statistics, for reasons of practically, we approximate this distribution by simulation. Section \ref{sec:simulation} outlines our new Markov Chain Monte Carlo (MCMC) algorithm for generating random draws from the required null distribution. 

Section \ref{sec:application} illustrates our methods in the context of the Nyakatoke risk-sharing network studied by \cite{deWeerdt_IAP04} and others. We construct a test with power for an alternative where agents value their ability to broker transactions among otherwise disconnected agents (as in Burt's \citeyear{Burt_StructuralHoles1995} theory of ``structural holes"). We formalize this idea using the model of \cite{Kleinberg_et_al_ACM2008}. This example is interesting because the form of a good test statistic is \emph{ex ante} non-obvious, but flows naturally from the \cite{Kleinberg_et_al_ACM2008} model and our results. The example also illustrates how test statistics need not be simple functions of the adjacency matrix or even exist in closed form. An extensively narrated Python Jupyter Notebook replicating this empirical illustration is available as part of the Supplemental Materials.

Section \ref{sec:extensions} finishes with a short discussion of limitations of our methods as well as a few thoughts on possible areas for additional research.

Proofs as well as some Monte Carlo simulation results are collected in a Supplemental Web Appendix. This appendix also includes a discussion of some additional applications of our MCMC simulation algorithm.

Readers interested primarily in applications can read Section \ref{sec:model}, the first part of Section \ref{sec:test}, and the empirical illustration of Section \ref{sec:application}. The balance of the paper can be read later (perhaps after viewing the Python Jupyter Notebook available in the supplemental materials).

\section{An family of empirical models of strategic network formation}\label{sec:model}

\subsection{Notation and setup}
A directed graph $G(\mathcal{V},\mathcal{A})$ consists of a set of vertices (agents) $\mathcal{V}=\{ 1,\ldots,N\}$ and a set of ordered pairs of nodes, respectively called \emph{egos} and \emph{alters}, $\mathcal{A}=\{( i,j) , (k,l) ,\ldots\}$ for $i\neq j$, $k\neq l$, and $i,j,k,l\in\mathcal{N}$. The elements of $\mathcal{A}$ correspond to those arcs, or directed links, present in $G(\mathcal{V},\mathcal{A})$. 

In what follows we typically work with the adjacency matrix $\mathbf{D}=[D_{ij}]$ where
\begin{equation}
D_{ij}=\left\{ 
\begin{array}{cc}
1 & \mbox{if}\,\, ij \in\mathcal{A}\\
0 & \mbox{otherwise}
\end{array}
\right. .
\end{equation}
Since we rule out self-links, the diagonal of $\mathbf{D}$ consists of structural zeros.

Let $G-ij$ denote the network obtained by deleting link $ij$ from $G$ (if present), and $G+ij$ the network one gets after adding this link (if absent). Let $\mathbf{D}\pm ij$ denote the adjacency matrix associated with the network obtained by adding/deleting link $ij$ from $G$. 

The set of all $2^{N(N-1)}$ possible adjacency matrices is denoted by $\mathbb{D}_{N}$. Hence $\mathbf{d}\in\mathbb{D}_N$ is a feasible network wiring or, equivalently, a pure strategy profile. Let $\mathbf{d}_{i}$ be the $i^{th}$ row of $\mathbf{d}$, or the pure strategy selection of agent $i$ (i.e., a binary vector indicating which edges she chooses to direct). The pure strategy profile for all players other than $i$ is denoted by $\mathbf{d}_{-i}$. We will sometimes refer to ``players other than $i$" as $i$'s \emph{peers}.

For each agent there are $M\overset{def}{\equiv}2^{N-1}$ possible actions, corresponding to all possible configurations of links she may direct towards her peers. A mixed strategy for agent $i$, $\sigma_{i}=\left(\pi_{1i},\pi_{2i},\ldots,\pi_{Mi}\right)'$, is probability distribution on these $M$ pure strategies; $\sigma=\left(\sigma_{1},\sigma_{2},\ldots,\sigma_{N}\right)'$ is a mixed strategy profile for all $N$ agents, while $\sigma_{-i}$ is the strategy profile of agent $i$'s peers.

\subsection{Preferences}

 The utility or payoff agent $i$ gets from network $\mathbf{d}$ is
\begin{equation} \label{eq: payoff_function}
\nu_{i}\left(\mathbf{d}_{i},\mathbf{d}_{-i};\theta, \mathbf{U}_i\right)=\underset{\text{Network Benefit}}{\underbrace{\gamma_{0}g_{i}\left(\mathbf{d}\right)}} - \underset{\text{Link Costs}}{\underbrace{\sum_{j}d_{ij}c_{ij}\left(X_i,X_j;\delta, U_{ij}\right)}}\end{equation}
with $g_{i}\left(\mathbf{d}\right)$ a known, but not necessarily closed-form, function of the network adjacency matrix, normalized such that $g_{i}\left(\mathbf{0}\right)=0$, $\theta=\left(\gamma,\delta'\right)'$, and the link ``costs'' function taking the form
\begin{equation}
    c_{ij}\left(X_i,X_j;\delta, U_{ij}\right) = -\left[A_{i}+B_{j}+X_{i}'\Lambda_{0} X_{j}-U_{ij}\right]
\end{equation}
where $X_{i}$ is a $K \times 1$ vector of mutually exclusive group membership indicators that \textit{is observed} by the econometrician and $\mathbf{U}_i=\left(U_{i1},\ldots,U_{ii-1},U_{ii+1},\ldots,U_{iN}\right)'$ is agent $i$'s vector of idiosyncratic logistic preference shocks over the $N-1$ possible links she can direct (and $\mathbf{U}=\left(\mathbf{U}_1', \ldots , \mathbf{U}_N'\right)'$).\footnote{More generally $X_{i}$ enumerates the support points of a collection of (observed) discrete agent-specific regressors (or a partition of this support into $K$ regions).} All agents observe their own, as well as their peers', preference shock vectors. As is standard in game theory, we use, in a small abuse of notation, $\nu_{i}\left(\sigma_{i},\sigma_{-i};\theta, \mathbf{U}_i\right)$ to denote agent $i$'s \emph{expected} utility under the mixed strategy profile $\sigma=\left(\sigma_{i},\sigma_{-i}\right)$.

The first term in \eqref{eq: payoff_function} captures how agent $i$'s utility varies with the entire structure of the network; this may include benefits from direct, as well as indirect connections. The second term in \eqref{eq: payoff_function} captures the net costs agent $i$ pays in order to maintain those links she chooses to direct.

In theoretical work $g_{i}\left(\mathbf{d}\right)$ is often called the \emph{network benefit} function, while $c_{ij}\left(X_i,X_j;\delta, U_{ij}\right)$ would be associated with the cost of forming edge $ij$ \citep[e.g.,][]{Jackson_NetBook08, Goyal_Book2021}. These costs are generally assumed constant in theory research, while -- as is appropriate given the empirical context -- they are heterogeneous across agents and links here.\footnote{\cite{Johnson_Gilles_RED2000} study the implications of cost heterogeneity on equilibrium network structure in the ``connections" model.} 

While the benefit-cost typology is useful for developing intuitions about the form of NE in this setting, what is essential here is that the first term may vary arbitrarily with $\mathbf{d}$, and hence with peer actions, while the second term is invariant to peers' actions and, furthermore, additively separable in own actions. In some setting the second term in \eqref{eq: payoff_function} may be positive, as occurs when links generate intrinsic surplus. It what follows we call the (negative of the) $j^{th}$ summand in the second part of \eqref{eq: payoff_function} the \emph{baseline utility} that $i$ gets from directing edge $ij$. Of course the appropriate nomenclature is context-specific.

\subsubsection*{Baseline utility}
Considering baseline utility first, we see it is increasing in the heterogeneity terms, assumed unobserved by the econometrician, $A_{i}$ and $B_{j}$. Agents with high values of \emph{out-degree heterogeneity} $A_{i}$ get a large amount of baseline utility from any link they send. In a social network context high $A_{i}$ agents are ``extroverts". Agents with high \emph{in-degree heterogeneity} $B_{j}$, in contrast, are especially attractive targets, or alters, for links sent by others. In a social network high $B_{j}$ agents are ``prestigious" or ``popular".\footnote{Alternatively we can think of high $A_i$ agents as being able to direct links at low cost, and high $B_j$ agents as being low cost alters.}

The $X_{i}'\Lambda_{0} X_{j}\overset{def}{\equiv}W_{ij}'\lambda_{0}$ term allows baseline utility to depend on whether agents assortatively match on their attributes.\footnote{We define $W_{ij}=\left(X_{i}\otimes X_{j}\right)$ and $\lambda=\mathrm{vec}\left(\Lambda'\right)$.} The elements of the $K\times K$ matrix $\Lambda=\left[\lambda_{kl}\right]$ parameterize the systematic utility generated by links, say, from group $k$ to group $l$. For example, in a social network girls might, all things equal, prefer other girls as friends. The $\Lambda_{0}$ matrix parameterizes homophily (or heterophily) of this type.

We leave the joint distribution of $(A_{i},B_{i},X_{i}')'$ unrestricted.\footnote{This distribution does have implications for test power, as will become apparent below. We also comment that $\left\{(A_{i},B_{i},X_{i}')'\right\}_{i=1}^{N}$ need not be i.i.d. There is no requirement, for example, that the agents in the network are a random sample from some population.} This implies that the unobserved degree heterogeneity $(A_{i},B_{i})'$ may be correlated with the observed covariates $X_{i}$, as in fixed effects panel data analyses. 

The final component of baseline utility is idiosyncratic; we assume that the $\{ U_{ij}\} _{i\neq j}$ are independent and identically distributed (iid) logistic random variables. The logistic assumption generates exponential family structure which we exploit when forming our test.

Equation \eqref{eq: payoff_function} with $\gamma=0$ gives agent preferences under our baseline or null model (essentially the dyadic link formation model introduced by \cite{Charbonneau_EJ17}. This model, when fitted by maximum likelihood, can successfully match many features of real world networks. Specifically arbitary in- and out-degree sequences and assortative linking patterns on discrete agent attributes \citep[cf.,][]{Graham_HBE20}. Of course, we are especially interested in settings where the \cite{Charbonneau_EJ17} model does not provide a good description of the network in hand.

\subsubsection*{Network benefit function}
When $\gamma>0$, the first term in \eqref{eq: payoff_function} -- the \emph{network benefit} function $g_{i}(\mathbf{d})$ -- enriches the baseline model to allow agent preferences over links to vary with the presence or absence of links elsewhere in the network. The researcher is free to specify the network benefit function as desired. A few selected examples, drawn from recent theoretical work on strategic network formation, gives a sense of the range of possibilities.

\begin{figure}
\caption{Network benefit function examples}
\begin{centering}
\includegraphics[scale=0.75]{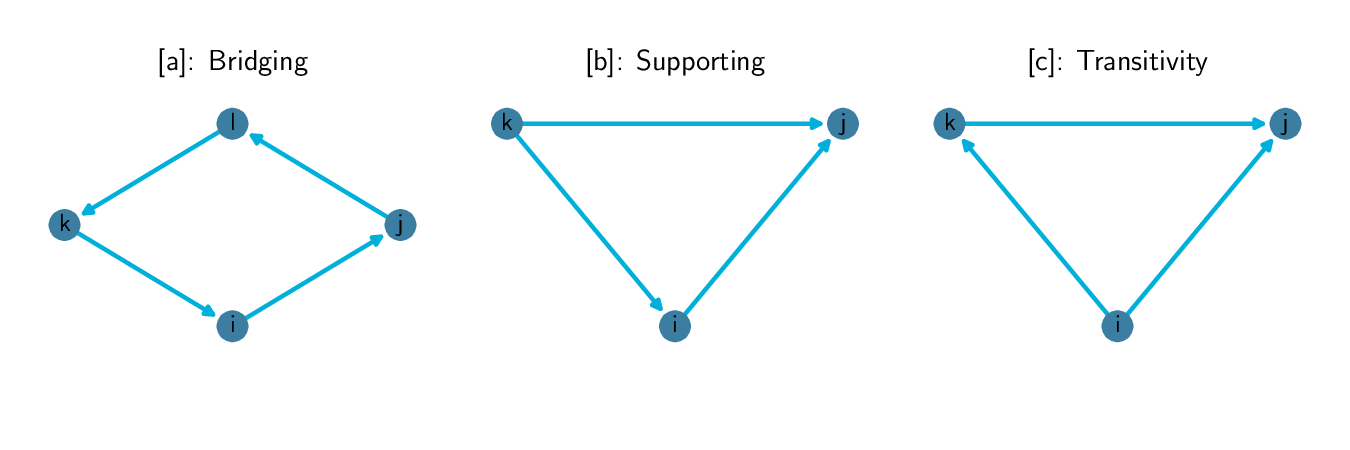}
\par\end{centering}
\caption*{\underline{Source:} Authors' calculations.\hfill\break
\underline{Notes:} Panel [a]: agent $i$ is a bridge from $k$ \emph{to} $j$ and agent $l$ is a bridge from $j$ to $k$. Panel [b]: edge $ij$ is supported by agent $k$. Panel [c]: adding edge $ij$ generates a transitive triad.} 
\end{figure}\label{fig: Network_Benefit_Examples}

\begin{example} \label{ex: connections} \textsc{(Connections)}
    In a seminal paper, \cite{Jackson_Wolinsky_JET96}, introduced the \emph{connections model}. In a directed variant, \cite{Bala_Goyal_EM00} set $g_{i}\left(\mathbf{d}\right)=\sum_{i\neq j}\phi\left(\ell_{ij}\left(\mathbf{\tilde{d}}\right)\right)$ where $\tilde{\mathbf{d}}$ is the undirected network obtained from $\mathbf{d}$  (i.e., $\mathbf{\tilde{d}}=\left[\tilde{d}_{ij}\right]$ with $\tilde{d}_{ij}=1-\left(1-d_{ij}\right)\left(1-d_{ji}\right)$), $\phi:\left\{ 1,2,\ldots,N-1\right\} \rightarrow\mathbb{R}$ is a known function with $\phi\left(k\right)>\phi\left(k+1\right)>0$ for any $k=1,2,\ldots,N-1$, and $\ell_{ij}\left(\mathbf{\tilde{d}}\right)$ the shortest path length between agents $i$ and $j$ in $\mathbf{\tilde{d}}$. Agents prefer to be close to other agents in the network in order to easily access their information, but also wish to maintain as few links as possible, since links are costly to direct. Strong externalities arise in this model: edge $ij$ may incidentally reduce the shortest path length between agents $k$ and $l$, but such benefits are not internalized by agent $i$. Also, since information flows bidirectionally, both agents $i$ and $j$ benefit from edge $ij$, while the cost is shouldered by $i$ alone. 
\end{example}

\begin{example} \label{ex: bridging} \textsc{(Structural Hole / Bridging)}
    \cite{Kleinberg_et_al_ACM2008} introduce a model of network formation inspired by Burt's \citeyearpar{Burt_StructuralHoles1995} theory of “structural holes”. Burt \citeyearpar{Burt_StructuralHoles1995} argued that individuals that connect disparate groups within a network gain “bridging”, “middle-person” or intermediation benefits. Such benefits arise from lying on a (shortest) path connecting two agents not directly connected themselves. Citing empirical evidence, \cite{Kleinberg_et_al_ACM2008} emphasize the special benefits of lying on length two paths between disconnected agents.\footnote{“[T]here appears to be much less measurable benefit to $u$ if it is the internal node on a path between two nodes at graph distance greater than two” \citep[][p. 285]{Kleinberg_et_al_ACM2008}.} If $d_{ki}d_{ij}\left(1-d_{kj}\right)=1$, then $i$ serves as a ``bridge" between $k$ and $j$ (see Panel [a] of Figure \ref{fig: Network_Benefit_Examples}). The summation $\sum_{l}d_{kl}d_{lj}\left(1-d_{kj}\right)$ yields a count of the total number of bridging agents between $k$ and $j$. While agents benefit from serving as a bridge between two agents, these benefits decline in the number of other agents also serving as bridges for the same (directed) dyad. This yields a network payoff function of the form $g_{i}\left(\mathbf{d}\right)=\sum_{j}\sum_{k\neq j}\phi\left(d_{ki}d_{ij}\left(1-d_{kj}\right),\sum_{l}d_{kl}d_{lj}\left(1-d_{kj}\right)\right)$ with $\phi\left(0,k\right)\equiv0$ and $\phi\left(1,k\right)>\phi\left(1,k+1\right)>0$ for $k=1,\ldots,N-2$. See \cite{Goyal_Vega-Redondo_JET2007} for a related model.\footnote{We could, inspired by \cite{Freeman_SM1977}, also consider the model where agents directly value their network betweeness centrality such that $g_{i}\left(\mathbf{d}\right)=\frac{1}{\left(N-1\right)\left(N-2\right)}\sum_{j,k\in\mathcal{N}\setminus\left\{ i\right\} }\frac{\#\text{ of shortest paths from agents \ensuremath{j} to \ensuremath{k} which pass through \ensuremath{i}}}{\#\text{ of shortest paths from agents \ensuremath{j} to \ensuremath{k}}}$.}
\end{example}

\begin{example} \label{ex: support} \textsc{(Supported Links, Transitivity, Reciprocity)}
    \cite{Jackson_et_al_JEL17} introduce a model where agents value supported links. Edge $ij$ is supported by agent $k$ if $d_{ij}d_{ki}d_{kj}=1$ (see Panel [b] of Figure \ref{fig: Network_Benefit_Examples}). This configuration allows agent $k$ to monitor, or referee, relationship $ij$, making it more valuable. This suggests a network benefits function of $g_{i}\left(\mathbf{d}\right)=\sum_{j}d_{ij}\left(\sum_{k}d_{ki}d_{kj}\right)$. If, instead, agents value reciprocity we would set $g_{i}\left(\mathbf{d}\right)=\sum_{j}d_{ij}d_{ji}$; while if they value transitivity in links we would set $g_{i}\left(\mathbf{d}\right)=\sum_{j}d_{ij}\left(\sum_{k}d_{ik}d_{kj}\right)$.
\end{example}

\subsubsection*{Marginal utility}
Let, in an abuse of notation, $\nu_{i}\left(\mathbf{d}\right)\equiv\nu_{i}\left(\mathbf{d}_{i},\mathbf{d}_{-i};\theta, \mathbf{U}_i\right)$; the marginal utility of arc $ij$ for agent $i$ equals
\begin{equation} \label{eq: marginal_utility}
    MU_{ij}\left(\mathbf{d}\right)=\left\{\begin{array}{cc}
    \nu_{i}\left(\mathbf{d}\right)-\nu_{i}\left(\mathbf{d}-ij\right) & \text{if}\thinspace d_{ij}=1\\
    \nu_{i}\left(\mathbf{d}+ij\right)-\nu_{i}\left(\mathbf{d}\right) & \text{if}\thinspace d_{ij}=0
    \end{array}\right.    
\end{equation}
Marginal utility measures the utility gain (loss) to agent $i$ from adding (subtracting) link $ij$ holding the structure of all other links in the network constant (including any other links agent $i$ directs). 
The component of marginal utility associated with the network benefit function $g_{i}\left(\mathbf{d}\right)$ plays an important role in our analysis. Define the marginal network payoff associated with agent $i$ directing a link to $j$ as
\begin{equation} \label{eq: s_ij}
    s_{ij}\left(\mathbf{d}\right)=\left\{\begin{array}{cc}
    g_{i}\left(\mathbf{d}\right)-g_{i}\left(\mathbf{d}-ij\right) & \text{if}\thinspace d_{ij}=1\\
    g_{i}\left(\mathbf{d}+ij\right)-g_{i}\left(\mathbf{d}\right) & \text{if}\thinspace d_{ij}=0
    \end{array}\right.
\end{equation}
Using \eqref{eq: payoff_function} and definition \eqref{eq: s_ij} yields a marginal utility for arc $ij$ of 
\begin{equation}
    MU_{ij}\left(\mathbf{d}\right)=A_{i}+B_{j}+W_{ij}'\lambda_{0}+\gamma_{0}s_{ij}\left(\mathbf{d}\right)-U_{ij}.
\end{equation}
As it features in the computation of the optimal test statistic introduced below, it is helpful to derive the form of $s_{ij}\left(\mathbf{d}\right)$ for the example network benefit functions introduced earlier.

\vspace{5mm}
\noindent \textbf{Example \ref{ex: connections}.} \textsc{(Connections)} In the connections model, when $i$ directs a link to $j$ she weakly reduces her shortest path length to all other agents in the network. In this model $s_{ij}\left(\mathbf{d}\right) \geq 0$ for all $\mathbf{d} \in \mathbb{D}_N$. While there is no closed form expression for $s_{ij}\left(\mathbf{d}\right)$ in the connections model, it is straightforward to compute shortest path lengths between agents numerically (many network manipulation software libraries include routines to do this).  If removing (adding) arc $ij$ increases (decreases) $i$'s distance to many other agents in the network, then $s_{ij}\left(\mathbf{d}\right)$ will be large.

\vspace{5mm}
\noindent \textbf{Example \ref{ex: bridging}.} \textsc{(Structural Hole / Bridging)} 
For the bridging network benefit function $s_{ij}\left(\mathbf{d}\right)$ equals
\begin{equation*}
    s_{ij}\left(\mathbf{d}\right)=\sum_{k\neq j}\phi\left(d_{ki}\left(1-d_{kj}\right),1 + \sum_{l \neq i}d_{kl}d_{lj}\left(1-d_{kj}\right)\right).
\end{equation*}
The marginal utility of edge $ij$ is therefore increasing in the number of agents $k$ which direct edges to $i$, but not to $j$. It is decreasing in the number of agents $l$ and $k$ in which edges $kl$ and $lj$ are present (but edge $kj$ is not). 

\vspace{5mm}
\noindent \textbf{Example \ref{ex: support}.} \textsc{(Supported Links, Transitivity, Reciprocity)} In the support model $s_{ij}\left(\mathbf{d}\right)=\sum_{k}d_{ki}d_{kj}$, which is simply a count of how many agents would support edge $ij$ if it were formed. When agents have a taste for transitivity we have instead
\begin{equation*}
    s_{ij}\left(\mathbf{d}\right)=\sum_{k}d_{ik}d_{kj}+\sum_{k\neq j}d_{ik}d_{jk}
\end{equation*}
which is a count of how many transitive triads (involving agent $i$) would be created if edge $ij$ is added. Finally if agents have a taste for reciprocity we have $s_{ij}\left(\mathbf{d}\right)=d_{ji}$; indicating that the marginal utility of edge $ij$ varies with the presence or absence of the reciprocal edge $ji$.

\subsection{Equilibrium networks}

We assume that the observed network $\mathbf{D}$ coincides with the equilibrium outcome of an $N$-player complete information game. Each agent (i) observes $\left\{(A_{i},B_{i},X_{i}')\right\}_{i=1}^{N}$ and $\left\{U_{ij}\right\}_{i \neq j}$ and then (ii) decides which, out of the $N-1$ other agents, to send links to. Agents may play mixed strategies.

A mixed strategy profile $\sigma^{*}$ is a NE when $\theta = \theta_0$ and $\mathbf{U}=\mathbf{u}$, if for all $i=1,\ldots,N$,
\begin{equation}\label{eq:nash_equilibrium}
\nu_{i}\left(\sigma_{i}^{*},\sigma_{-i}^{*};\theta_0,\mathbf{u}_i\right)\geq\nu_{i}\left(\mathbf{d}_{i},\sigma_{-i}^{*};\theta_0,\mathbf{u}_i\right)
\end{equation}
for all possible pure strategy selections $\mathbf{d}_{i}$. We assume that the \emph{observed} network $\mathbf{D}$ is either a pure strategy NE or in the support of a mixed strategy NE.\footnote{Observe that agent $i$ must consider $2^{N-1}$ different pure strategy deviations in order to verify that their chosen strategy is optimal. This may be unrealistic when $N$ is large. A weaker equilibrium requirement, akin to the notion of pairwise stability introduced by \cite{Jackson_Wolinsky_JET96} for undirected networks, is to require agents to only consider the effects of adding or deleting a \emph{single} link at time on their utility.

Under this weaker stability notion, which we call \emph{single deviation stable} (SDS), we only require that the marginal utility of any link present in the network is non-negative, while that of any link not present is negative. This implies that the observed network $\mathbf{D}$ satisfies the $N\left(N-1\right)$ non-linear equations
\begin{equation*}
    D_{ij}=\mathbf{1}\left(A_{i}+B_{j}+W_{ij}'\lambda_{0}+\gamma_{0}s_{ij}\left(\mathbf{D}\right)\geq U_{ij}\right)
\end{equation*}
for $i,j=1,\ldots,N$ and $j\neq i$.While we maintain the NE assumption in what follows, it turns out that our test is also valid if, instead, the observed network is only SDS. Although single deviation stability is a natural directed analog of pairwise stability, we are not aware of this equilibrium concept being considered before.}

\begin{assumption} \label{ass: DGP} \textsc{(Data generating process)}
Let $\mathbf{U}$ be an $N(N-1)$ vector of iid logistic link preference shocks observed by all agents and $\theta_0 \in \Theta$ be the parameter indexing the payoff function \eqref{eq: payoff_function}. The observed network $\mathbf{D}$ is either a pure strategy NE or contained in the support of a mixed strategy NE of the strategic form game $\left(\mathcal{V}, \mathbb{D}_N, \left\{\nu_{i}\left(\cdot,\cdot;\theta_0,\mathbf{U}_i\right)\right\}_{i \in \mathcal{V}} \right).$
\end{assumption}

\subsection{Likelihood}

In the presence of multiple NE, Assumption \ref{ass: DGP} imposes no restrictions on which one is actually realized in the observed network. Our strategic network formation model is \emph{incomplete}. Although we remain agnostic about equilibrium selection, it is nevertheless useful to develop a notation for, and establish some properties of, the unknown equilibrium selection rule. This allows us to write down a (well-defined) likelihood for the network, albeit abstractly. 

Let $\mathcal{N}(\mathbf{d},\mathbf{u};\theta)$ be a function which assigns, for $\mathbf{U}=\mathbf{u}$, a probability weight to network $\mathbf{d}$: 
\begin{equation}\label{eq:equilibrium_selection}
\mathcal{N}(\mathbf{d}, \mathbf{u};\theta):\mathbb{D}_{N}\times\mathbb{R}^{n}\rightarrow[0,1]    
\end{equation}
In order for $\mathcal{N}(\mathbf{d}, \cdot;\theta)$ to be a valid NE selection function it must satisfy the conditions of Definition \ref{def: equilibrium_selection}.

\begin{definition} \label{def: equilibrium_selection} \textsc{(Equilibrium Selection Function)} For $\mathbf{U}=\mathbf{u}$ the realized vector of logistic link preference shocks and $\theta_0$ the payoff function parameter, let $\mathbf{d}^{*}\left(\mathbf{u};\theta_0 \right)$ be a pure strategy NE or a network contained in the support of a mixed strategy NE and $\mathbb{D}^{*}_N\left(\mathbf{u};\theta_0 \right)$ be the set of all such networks. Function \eqref{eq:equilibrium_selection} is such that (i) $\mathcal{N}\left(\mathbf{d},\mathbf{u};\theta_0\right) \geq 0$ for all $\mathbf{d} \in \mathbb{D}^{*}_N\left(\mathbf{u};\theta_0 \right)$ (ii) $\sum_{\mathbf{d}\in\mathbb{D}^{*}_N\left(\mathbf{u};\theta_0 \right)}\mathcal{N}\left(\mathbf{d},\mathbf{u};\theta_0\right)=1$ and (iii) $\mathcal{N}\left(\mathbf{d},\mathbf{u};\theta_0\right) = 0$ for all $\mathbf{d} \in  \mathbb{D}_N \backslash \mathbb{D}^{*}_N\left(\mathbf{u};\theta_0 \right)$.
\end{definition}

If $\mathcal{N}(\mathbf{d},\cdot;\theta)$ satisfies the conditions of Definition \ref{def: equilibrium_selection}, then the likelihood of observing network $\mathbf{D}=\mathbf{d}$ is
\begin{equation}\label{eq:likelihood}
P\left(\mathbf{d};\theta,\mathcal{N}\right)=\int_{\mathbf{u}\in\mathbb{R}^{n}}\mathcal{N}(\mathbf{d},\mathbf{u};\theta)f_{\mathbf{u}}(\mathbf{u})\mathrm{d}\mathbf{u},
\end{equation}
where $f_{\mathbf{u}}(\mathbf{u})=\prod_{i\neq j}f_{U}(u_{ij})$ with $f_{U}(u)=e^{u}/[1+e^{u}]^{2}.$ Of course, for the likelihood \eqref{eq:likelihood} to be well-defined we require that $\mathcal{N}(\mathbf{d}, \cdot;\theta)$ is measureable.

\begin{theorem} \label{thm: measurability} \textsc{(Likelihood)} For any network $\mathbf{d}\in\mathbb{D}_{N}$ there exists a measurable function $\ensuremath{\mathcal{N}(\mathbf{d},\cdot;\theta)}\thinspace:\thinspace\mathbb{R}^{n}\rightarrow\left[0,1\right]$, which assigns to $\mathbf{u}\in\mathbb{R}^{n}$ a $\mathrm{NE}$ weight on the pure strategy combination corresponding to $\mathbf{d}$.
\end{theorem}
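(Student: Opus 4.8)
The plan is to build $\mathcal{N}(\mathbf{d},\cdot;\theta)$ explicitly on the region where a pure strategy equilibrium exists, and to fall back on a measurable selection theorem on the complementary region where every Nash equilibrium is genuinely mixed. The starting observation is that for a \emph{fixed} pure strategy combination $\mathbf{d}$ the strategic term $s_{ij}(\mathbf{d})$ is a constant, so each of the $n=N(N-1)$ equilibrium conditions packed into \eqref{eq:fixed_point} reduces to a single linear inequality in the scalar $u_{ij}$. Hence the set
\begin{equation*}
E_{\mathbf{d}}=\left\{\mathbf{u}\in\mathbb{R}^{n}:\mathbf{d}=\mathrm{vec}_{*}^{-1}(\varphi(\mathbf{d}))\text{ when }\mathbf{U}=\mathbf{u}\right\}
\end{equation*}
of draws for which $\mathbf{d}$ is a pure strategy NE is a product of half-lines, hence a Borel box in $\mathbb{R}^{n}$. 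Because $\mathbb{D}_{N}$ is finite, the region $P=\bigcup_{\mathbf{d}\in\mathbb{D}_{N}}E_{\mathbf{d}}$ on which at least one pure strategy NE exists is a finite union of boxes, and is therefore Borel.

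On $P$ I would fix a tie-breaking rule. Enumerate the networks as $\mathbf{d}^{(1)},\ldots,\mathbf{d}^{(2^{n})}$ and, at each $\mathbf{u}\in P$, place unit weight on the pure strategy NE of smallest index, i.e.\ set $\mathcal{N}(\mathbf{d}^{(k)},\mathbf{u};\theta)=\mathbf{1}\bigl(\mathbf{u}\in E_{\mathbf{d}^{(k)}}\setminus\bigcup_{j<k}E_{\mathbf{d}^{(j)}}\bigr)$. Each such indicator is a finite Boolean combination of the Borel sets $E_{\mathbf{d}}$, hence measurable; by construction the weights are nonnegative, sum to one, place zero mass on non-equilibria, and collapse to a point mass when the NE is unique. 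This verifies conditions (1)--(3) on $P$.

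The remaining work is on $P^{c}$, where no pure strategy NE exists but, by \citet{Nash_PNAS1950}, a mixed one does. Here I would view each $\mathbf{u}$ as indexing a finite normal-form game $\Gamma(\mathbf{u})$ whose payoffs $\nu_{i}(\mathbf{d};\mathbf{u})$ are affine---hence jointly continuous---in $\mathbf{u}$ and in the mixed strategies. The equilibrium correspondence $\mathbf{u}\mapsto NE(\mathbf{u})$ then has a closed graph with nonempty values inside the compact mixed strategy space, so it is upper hemicontinuous with compact values; composing with the continuous map that sends a mixed profile to the product distribution it induces over $\mathbb{D}_{N}$ yields a correspondence $\Phi(\mathbf{u})\subseteq\Delta(\mathbb{D}_{N})$ with the same properties. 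Since $\Delta(\mathbb{D}_{N})$ is a compact metric space and $\Phi$ has a closed graph with nonempty values, a measurable selection $p^{\ast}(\mathbf{u})\in\Phi(\mathbf{u})$ exists by the Kuratowski--Ryll-Nardzewski theorem. Setting $\mathcal{N}(\mathbf{d},\mathbf{u};\theta)=p^{\ast}(\mathbf{u})(\mathbf{d})$ on $P^{c}$ delivers condition (4). Patching the two pieces---measurable on the Borel set $P$ and on its complement---produces a globally measurable, $[0,1]$-valued function satisfying (1)--(4).

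The pure strategy construction is elementary; the main obstacle is the mixed strategy region, where I must establish that the NE correspondence is measurable before any selection theorem applies. The crux is upper hemicontinuity, which follows from continuity of the expected payoffs in $(\mathbf{u},\sigma)$ via the standard closed-graph argument for finite games; once the closed graph with compact values is in hand, measurability of $\Phi$ and the existence of a Borel selection are standard. I would take care to confirm that the induced-distribution map preserves upper hemicontinuity and that the selection theorem's hypotheses---nonempty closed values in a Polish space---are met. Note that the two-region split also resolves the apparent tension with condition (2): that condition governs only the pure regions collected in $P$, while condition (4) governs $P^{c}$.
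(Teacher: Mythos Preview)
Your argument is correct and takes a genuinely different route from the paper's. The paper does not split $\mathbb{R}^{n}$ into pure and mixed regions; instead it works globally, first restricting to a compact cube $[-M,M]^{n}$, then proving from scratch a bespoke measurable selection lemma (via dyadic partitions and pointwise limits of simple functions) for the compact set $E=\{(\gamma,\sigma):\sigma\text{ is a NE of }\gamma\}$, and finally patches over the integer-indexed cubes $[z,z+1)^{n}$, $z\in\mathbb{Z}^{n}$, to extend to all of $\mathbb{R}^{n}$. Your version is shorter and more standard: the elementary Borel-box calculation disposes of the pure-equilibrium region without any selection machinery, and on the complement you simply invoke Kuratowski--Ryll-Nardzewski after the usual closed-graph/compact-range observation, avoiding both the hand-built selection lemma and the compact-cube patching. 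The paper's approach buys self-containment (no appeal to a named selection theorem), while yours buys transparency and brevity; note, incidentally, that your pure/mixed split is a convenience rather than a necessity, since the KRN step would already cover the pure region as a special case.
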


The proof of Theorem \ref{thm: measurability} can be found in Appendix \ref{measurablity}. 

\section{Testing for strategic interaction}\label{sec:test}

The development in this section parallels the first and second use cases outlined in the introduction. We first discuss how to assess the adequacy of the baseline model as a description of the network in hand. Utilizing a conditioning argument we construct an \emph{exact} test (up to simulation error) of the null of ``correct specification". 
An alternative model is not explicitly formulated in this case, although researcher intuitions about plausible directions of mis-specification typically guides the choice of test statistic. As shown by \cite{Lehmann_Romano_TSH05}, it is impossible to construct a test with power in all possible directions of mis-specification.

Next we consider applications where the analyst carefully specifies the alternative model (through an explicit choice of the network benefit function, $g_i\left(\mathbf{d}\right)$). Here the researcher believes the true network formation model lies in either the null or the (specified) alternative model space; the purpose of testing is to determine which situation prevails. In this second application we seek to construct a test which rejects with high probability when the alternative is true, while continuing to control size under the null. 

Throughout, and crucially, we wish to remain agnostic about the distribution of any degree heterogeneity across agents as well as the form of any homophily and/or heterophily. Let $\Delta$ denote a subset of the $K^{2}+2N$ dimensional Euclidean space in which $\delta_{0}=\left(\lambda_{0},\mathbf{A}_{0},\mathbf{B}_{0}\right)$ is, a priori, known to lie, and 
\begin{equation}\label{eq:null_parameter_space}
\Theta_{0}=\left\{ (\gamma,\delta')\thinspace:\thinspace\gamma=0,\delta\in\Delta\right\}.    
\end{equation}
Our null hypothesis is the \textit{composite} one: 
\begin{equation}\label{eq:null_hypothesis}
H_{0}\thinspace:\thinspace\theta\in\Theta_{0}   
\end{equation}
since $\delta$ may range freely over $\Delta\subset\mathbb{R}^{K^{2}+2N}$ under the null.

Under the null the likelihood is $P_{0}(\mathbf{d};\delta)\overset{def}{\equiv}P(\mathbf{d};(0,\delta')',\mathcal{N}_{0})$ with
\begin{align*}
\mathcal{N}_{0}(\mathbf{d},\mathbf{u};\theta)=&\prod_{i}\prod_{j}\mathbf{1}\left(A_{i}+B_{j}+W_{ij}'\lambda\geq u_{ij}\right)^{d_{ij}} \\
 	                                          &\times\mathbf{1}\left(A_{i}+B_{j}+W_{ij}'\lambda<u_{ij}\right)^{1-d_{ij}}.
\end{align*}
Under the null the unique ``equilibrium'' network is the one where all links with positive marginal utility are present and those with negative marginal utility are not. These marginal utilities are invariant to the presence or absence of links elsewhere in the network; $\mathcal{N}_{0}(\mathbf{d},\mathbf{u};\theta)$ places a probability of $1$ on this network. Evaluating the integral \eqref{eq:likelihood} under the null yields
\begin{align*}
P_{0}(\mathbf{d};\delta)	=	&\prod_{i=1}^{N}\prod_{j\neq i}\left[\frac{\exp\left(W_{ij}'\lambda+R_{i}'\mathbf{A}+R_{j}'\mathbf{B}\right)}{1+\exp\left(W_{ij}'\lambda+R_{i}'                                  \mathbf{A}+R_{j}'\mathbf{B}\right)}\right]^{d_{ij}} \\		                       &\times\left[\frac{1}{1+\exp\left(W_{ij}'\lambda+R_{i}'\mathbf{A}+R_{j}'\mathbf{B}\right)}\right]^{1-d_{ij}}
\end{align*}
where $R_i$ is the $N\times1$ vector with a $1$ in its $i^{th}$ element and zeros elsewhere.\footnote{Variants of this likelihood are analyzed by \cite{Chatterjee_et_al_AAP11}, \cite{Charbonneau_EJ17}, \cite{Graham_EM17}, \cite{Jochmans_JBES18}, \cite{Dzemski_RESTAT18} and \cite{Yan_et_al_JASA18}.} 

\subsection{Use case 1: exact goodness-of-fit testing}
Under the null our likelihood, $P_{0}(\mathbf{d};\delta)$, is a member of the exponential family. To see this it is helpful to establish some additional notation. The out- and in-degree sequences equal:
\begin{equation}
\mathbf{S}         =\left(\begin{array}{c} \mathbf{S}_{\mathrm{out}}\\ 
                                           \mathbf{S}_{\mathrm{int}}
\end{array}\right)'
                   =\left(\begin{array}{c} D_{1+},\ldots,D_{N+}\\
                                           D_{+1},\ldots,D_{+N}
\end{array}\right).    
\end{equation}
Here $D_{+i}=\sum_{j}D_{ji}$ and $D_{i+}=\sum_{j}D_{ij}$ equal the in- and out-degree of agents $i=1,\ldots,N$.

The $K\times K$ \textit{cross-link matrix} equals 
\begin{equation}
\mathbf{M}=\sum_{i}\sum_{j}D_{ij}X_{i}X_{j}'.    
\end{equation}
This matrix summarizes the inter-group link structure in the network (homophily). The $kl^{th}$ element of $\mathbf{M}$ records the number of links sent by type $k$ agents (e.g., semiconductor manufacturers) to type $l$ agents (e.g., computer manufacturers).

Let $\mathbf{S},\mathbf{M}$ be a degree sequence and cross-link matrix. We say $\mathbf{S},\mathbf{M}$ is \textit{graphical} if there exists at least one arc set $\mathcal{A}$ such that $G\left(\mathcal{V},\mathcal{A}\right)$ is a simple directed graph with degree sequence $\mathbf{S}$ and cross link matrix $\mathbf{M}$. We call any such network a \emph{realization} of $\mathbf{S},\mathbf{M}$. The set of all possible realizations of $\mathbf{S},\mathbf{M}$ is denoted by $\mathbb{G}_{\mathbf{S},\mathbf{M}}$ ($\mathbb{D}_{\mathbf{S},\mathbf{M}}$ denotes the associated set of adjacency matrices).

With this notation it is easy to verify that the null model belongs to the exponential family (see \cite{Graham_EM17}):
\begin{equation}
P_{0}(\mathbf{d};\delta)=c(\delta)\exp\left(\mathbf{t}'\delta\right),\thinspace:\thinspace\delta\in\Delta\    
\end{equation}
with a (minimally) sufficient statistic for $\delta$ of $\mathbf{t}=\left(\mathrm{vec}\left(\mathbf{m}'\right)',\mathbf{s}_{\mathrm{out}}',\mathbf{s}_{\mathrm{in}}'\right)'$. In words, the $K^{2}+N+N$ sufficient statistics are (i) the cross link matrix, (ii) the out-degree sequence and (iii) the in-degree sequence.

Under $H_{0}$ the conditional likelihood of the event $\mathbf{D}=\mathbf{d}$ is 
\begin{equation} \label{eq:conditional_likelihood}
P_{0}\left(\left.\mathbf{d}\right|\mathbf{T=t}\right)=\frac{P_{0}\left(\mathbf{d};\delta\right)}{\sum_{\mathbf{v}\in\mathbb{D}_{\mathbf{s},\mathbf{m}}}P_{0}\left(\mathbf{v};\delta\right)}=\frac{1}{\left|\mathbb{D}_{\mathbf{s},\mathbf{m}}\right|}
\end{equation}
if $\mathbf{d} \in \mathbb{D}_{\mathbf{s},\mathbf{m}}$ and zero otherwise. Under the null of no strategic interaction all networks with the same in- and out-degree sequences and cross link structure are equally likely. Importantly this conditional likelihood is invariant to the actual value of the nuisance parameter $\delta$. 

By conditioning on $\mathbf{T}$, which is sufficient for $\delta$, we isolate the information in the data that is relevant for assessing model adequacy \citep{Barndorff-Nielsen_Cox_IA1994}. This follows because conditional on $\mathbf{T}$, the null model \emph{completely} specifies the distribution of $\mathbf{D}$. Consequently, the distribution of any statistic of the adjacency matrix, say $R\left(\mathbf{D}\right)$, is also fully specified. Specifically the null distribution $R\left(\mathbf{D}\right)$ is the one induced by a discrete uniform distribution on $\mathbb{D}_{\mathbf{S},\mathbf{M}}$: 
\begin{equation}
    \Pr\left(\left.R\left(\mathbf{D}\right)\leq r\right|\mathbf{T};\theta\in\Theta_{0}\right)=\frac{1}{\left|\mathbb{D}_{\mathbf{S,M}}\right|}\sum_{\mathbf{d}\in\mathbb{D}_{\mathbf{S,M}}}\mathbf{1}\left(R\left(\mathbf{d}\right)\leq r\right).
\end{equation}
To test model goodness-of-fit, we simply check whether the value of $R\left(\mathbf{D}\right)$ in the network in hand is at an extreme quantile of this distribution. If it is, we take this as evidence against the baseline (null) model. 

\subsubsection*{Similarity and conditioning}
A test with critical function $\phi\left(\mathbf{D}\right)$ will have size $\alpha$ if its null rejection probability (NRP) is less than or equal to $\alpha$ for \textit{all} values of the nuisance parameter:
\begin{equation}
\underset{\theta\in\Theta_{0}}{\sup}\,\mathbb{E}_{\theta}\left[\phi\left(\mathbf{D}\right)\right]=\underset{\gamma=\gamma_{0},\delta\in\triangle}{\sup}\,\mathbb{E}_{\theta}\left[\phi\left(\mathbf{D}\right)\right]=\alpha.    
\end{equation}
Since the nuisance parameter $\delta$ is very high dimensional, size control is \emph{a priori} non-trivial. For some intuition as to why consider, as an example, the case where $s_{ij}(\mathbf{d})=\sum_{k}d_{ki}d_{kj}$, such that agents' have a taste for supported links when $\gamma_{0}>0$. A natural test statistic in this case would be some function of $\mathbf{D}$ that is increasing in the number of supported links in the network.\footnote{\cite{Jackson_et_al_AER12} suggest the fraction of links in the network which are supported.} The researcher would then reject the null of $\gamma_{0}=0$ when this statistic is large enough. Unfortunately, the expected number of supported links varies dramatically under the null depending on the value of $\delta$. Certain configurations of $\mathbf{A}$, $\mathbf{B}$ and/or $\lambda$ may result in a network with substantial link clustering (and hence support) even when agents' have no taste for support per se. If we choose a single critical value for rejection then, depending on the values of $\mathbf{A}$, $\mathbf{B}$ and/or $\lambda$, size may be very poor.

To avoid any size distortion induced by variation in $\delta$ over $\Delta\subset\mathbb{R}^{K^{2}+2N}$ we exploit the exponential family structure of our model (under the null). Let $\mathbb{T}=\left\{ \left(\mathbf{s},\mathbf{m}\right)\thinspace:\thinspace\mathbf{s},\mathbf{m}\text{ is graphical}\right\}$  be the set of possible sufficient statistics $\mathbf{T}$. Instead of choosing a single critical value, which may result in under- or over-rejection, depending on the value of $\delta$, we proceed conditionally on $\mathbf{T} \in \mathbb{T}$, varying our critical value with $\mathbf{T}$. In this way we ensure good size control.

Formally, for each $\mathbf{t}\in\mathbb{T}$ we form a test with the property that, for all $\theta\in\Theta_{0}$,
\begin{equation}\label{eq:conditional_similarity}
\mathbb{E}_{\theta}\left[\left.\phi\left(\mathbf{D}\right)\right|\mathbf{T}=\mathbf{t}\right]=\alpha.    
\end{equation}
Such an approach ensures \textit{similarity} of our test since, by iterated expectations,
\begin{equation}\label{eq:similarity}
\mathbb{E}_{\theta}\left[\phi\left(\mathbf{D}\right)\right]=\mathbb{E}_{\theta}\left[\mathbb{E}_{\theta}\left[\left.\phi\left(\mathbf{D}\right)\right|\mathbf{T}\right]\right]=\alpha    
\end{equation}
for any $\theta\in\Theta_{0}$ \citep{Ferguson_MS67}. By proceeding conditionally we ensure that the NRP is unaffected by the value of $\delta$.

For any $\mathbf{t}\in\mathbb{T}$ we can construct an \emph{exact} test, as is required by \eqref{eq:conditional_similarity}, because our model completely specifies the distribution of networks conditional on $\mathbf{T}=\mathbf{t}$ under the null. Condition \eqref{eq:similarity} follows immediately. Using some well-known results from the theory of exponential families, we can make the stronger claim that similarity is only possible by conditioning. 
\begin{lemma} \label{lemma: similarity}
\textsc{(Similarity)} Any similar test of $ H_{0}\thinspace:\thinspace\theta\in\Theta_{0}$ conditions on the realized value of $\mathbf{T}$.  
\end{lemma}
\begin{proof}
By \citet[][Lemma 1, Section 3.6]{Ferguson_MS67} $\mathbf{T}$ is a boundedly complete sufficient statistic for $\theta$ under the null. The claim then follows from \citet[][Theorem 2, Section 5.4]{Ferguson_MS67}. 
\end{proof}

\subsubsection*{Implementation}
To operationalize, let $R(\mathbf{D})$ be some statistic of the adjacency matrix. For example $R(\mathbf{D})$ might be the network reciprocity index \citep{Newman_NetBook10}:
\begin{equation}
R(\mathbf{D})=\frac{2\hat{P}_{11}}{2\hat{P}_{11}+\hat{P}_{01}},    
\end{equation}
where
\begin{equation}
\hat{P}_{01}=\frac{2}{N\left(N-1\right)}\sum_{i=1}^{N-1}\sum_{j=i+1}^{N}\left[D_{ij}\left(1-D_{ji}\right)+\left(1-D_{ij}\right)D_{ji}\right]
\end{equation}
equals the fraction of dyads which take an unreciprocated or ``asymmetric" configuration and
\begin{equation}
\hat{P}_{11}=\frac{2}{N\left(N-1\right)}\sum_{i=1}^{N-1}\sum_{j=i+1}^{N}D_{ij}D_{ji}
\end{equation}
the fraction which take a reciprocated or ``mutual" configuration. 

A conditional test based upon $R(\mathbf{D})$ will have a critical function of
\begin{equation} \label{eq: critical_function}
\phi\left(\mathbf{d}\right)=\left\{ \begin{array}{cc}
1 & R\left(\mathbf{d}\right)>c_{\alpha}\left(\mathbf{t}\right)\\
g_{\alpha}\left(\mathbf{t}\right) & R\left(\mathbf{d}\right)=c_{\alpha}\left(\mathbf{t}\right)\\
0 & R\left(\mathbf{d}\right)<c_{\alpha}\left(\mathbf{t}\right)
\end{array}\right.    
\end{equation}
where the values of $c_{\alpha}\left(\mathbf{t}\right)$ and $g_{\alpha}\left(\mathbf{t}\right)\in\left[0,1\right]$ are chosen to satisfy the requirement that $\mathbb{E}_{\theta}\left[\left.\phi\left(\mathbf{D}\right)\right|\mathbf{T}=\mathbf{t}\right]=\alpha$.

Under the null all adjacency matrices with the $\mathbf{S}=\mathbf{s}$ and $\mathbf{M}=\mathbf{m}$ are equally probable. By enumerating all adjacency matrices in $\mathbb{D}_{\mathbf{s},\mathbf{m}}$ we could exactly compute the null distribution of $R\left(\mathbf{D}\right)$ and hence the critical values $c_{\alpha}\left(\mathbf{t}\right)$ and $g_{\alpha}\left(\mathbf{t}\right)$. In general such a brute force approach will be infeasible.\footnote{In fact very little is known about the set $\mathbb{D}_{\mathbf{s},\mathbf{m}}$; for example we are aware of no method for checking whether a given $\mathbf{s},\mathbf{m}$ pair is graphic. From related settings we believe that the cardinality of $\mathbb{D}_{\mathbf{s},\mathbf{m}}$ will typically be intractably huge even for modestly-sized networks. See \cite{Blitzstein_Diaconis_IM11} for discussion of this point and examples from a related setting.} Therefore a method of approximating the exact null distribution is required. The simulation algorithm introduced below provides such a method.

The intuition behind this test is straightforward. If the network in hand has an ``unusually" large value of $R(\mathbf{D})$ relative to the set of all networks with same in- and out-degree sequences and cross-link matrices, then we reject the null that the baseline model is correctly specified. A rejection is not interpreted as evidence in favor of a particular alternative model. Relatedly, a feature of goodness-of-fit tests, including this one, is that we have may have low, or even, power equal to size in certain directions \citep{Lehmann_Romano_TSH05}.

\subsection{Optimal testing with an explicit alternative}

In this section we discuss how to test when the alternative model space is explicitly specified. That is, when the researcher explicitly specifies the network benefit function in \eqref{eq: payoff_function} and proceeds under the premise that the true network generating process lies either in the null or the (explicitly specified) alternative model space. In such settings a rejection provides evidence that $\gamma_0>0$ (in the context of a specific network benefit function). Naturally the researcher would like to maximize her power to reject, while continuing to maintain similarity. To accomplish this requires choosing the right test statistic.

Because equilibrium selection is not specified under the alternative, likelihood ratio (LR) testing is not feasible \citep[cf.,][]{Chen_et_al_EM2018}. As an alternative to a LR test, we instead choose, for each $\mathbf{t}\in\mathbb{T}$, the critical function, $\phi\left(\mathbf{D}\right)$ to maximize the \textit{derivative} of the (conditional) power function $\beta\left(\gamma,\mathbf{t}\right)=\mathbb{E}\left[\left.\phi\left(\mathbf{D}\right)\right|\mathbf{T}=\mathbf{t}\right]$ evaluated at $\gamma=0$ subject to the (conditional) size constraint $\mathbb{E}_{\theta}\left[\left.\phi\left(\mathbf{D}\right)\right|\mathbf{T}=\mathbf{t}\right]=\alpha$. Such a $\phi\left(\mathbf{D}\right)$ is \textit{locally best} \cite[][Lemma 1, Section 5.5]{Ferguson_MS67}. Remarkably we show that the locally best test does not depend upon the form of the equilibrium selection mechanism $\mathcal{N}(\mathbf{d}, \mathbf{u};\theta)$.

Differentiating the power function we get 
\begin{equation}\label{eq:slope_of_power_function}
\left.\frac{\partial\beta\left(\gamma,\mathbf{t}\right)}{\partial\gamma}\right|_{\gamma=0}	=\mathbb{E}\left[\left.\phi\left(\mathbf{D}\right)\mathbb{S}_{\gamma}\left(\left.\mathbf{D}\right|\mathbf{T};\theta\right)\right|\mathbf{T}=\mathbf{t}\right]    
\end{equation}
with $\mathbb{S}_{\gamma}\left(\left.\mathbf{d}\right|\mathbf{t};\theta\right)$ denoting the conditional score function
\begin{align*} 
\mathbb{S}_{\gamma}\left(\left.\mathbf{d}\right|\mathbf{t};\theta\right) &= \frac{1}{P_{0}\left(\mathbf{d};\delta\right)}\left.\frac{\partial           P\left(\mathbf{d};\theta\right)}{\partial\gamma}\right|_{\gamma=0}-\sum_{\mathbf{v}\in\mathbb{D}_{\mathbf{s},\mathbf{m}}}\left.\frac{\partial P\left(\mathbf{v};\theta\right)}{\partial\gamma}\right|_{\gamma=0} \\ 
&=\frac{1}{P_{0}\left(\mathbf{d};\delta\right)}\left.\frac{\partial P\left(\mathbf{d};\theta\right)}{\partial\gamma}\right|_{\gamma=0}+k\left(\mathbf{t}\right)
\end{align*}
and $k\left(\mathbf{t}\right)$ only depending on the data through $\mathbf{T}=\mathbf{t}$ (Here, and in the balance of this section, it is understood that $\delta$ is evaluated at is population value $\delta_0$). By the Neyman-Pearson lemma, the test with the critical function given by equation \eqref{eq: critical_function} above, where the test statistic, $R\left(\mathbf{d}\right)$, is set equal to the log-likelihood gradient, $\frac{1}{P_{0}\left(\mathbf{d};\delta\right)}\left.\frac{\partial P\left(\mathbf{d};\theta\right)}{\partial\gamma}\right|_{\gamma=0}$, will be locally best within the class of similar tests.

The idea behind the locally best test is as follows. If the likelihood increases sharply as we move away from the null \emph{in the direction of the alternative of interest}, then we take this as evidence against the null. Intuitively if the likelihood gradient in the neighborhood of the null is large, then the likelihood ratio will also be large for simple alternatives close to the null (i.e., when $\gamma \in \left(0,\epsilon\right]$). 

Constructing the locally best critical function requires calculating $\frac{1}{P_{0}\left(\mathbf{d};\delta\right)}\left.\frac{\partial P\left(\mathbf{d};\theta\right)}{\partial\gamma}\right|_{\gamma=0}$. This is not straightforward since it depends on properties of the likelihood under the alternative (and consequently the equilibrium selection function). Nevertheless, we are able to derive the form of this derivative.
\begin{theorem}\label{thm: derivative}
\textsc{(Locally Best Test)} (i) $P\left(\mathbf{d};\theta,\mathcal{N}\right)$ is twice differentiable with respect to $\gamma$ at $\gamma=0$.  Its first derivative at $\gamma=0$ is
\begin{multline} \label{eq: locally_best_statistic_1}
\left.\frac{\partial P\left(\mathbf{d};\theta,\mathcal{N}\right)}{\partial\gamma}\right|_{\gamma=0} =	P_{0}\left(\mathbf{d};\delta\right) \\
	\times\left[\sum_{i\neq j}s_{ij}\left(\mathbf{d}\right)\left\{ d_{ij}\frac{f_{U}\left(\mu_{ij}\right)}{\int_{-\infty}^{v_{ij}}f_{U}\left(u\right)\mathrm{d}u}-\left(1-d_{ij}\right)\frac{f_{U}\left(\mu_{ij}\right)}{\int_{v_{ij}}^{\infty}f_{U}\left(u\right)\mathrm{d}u}\right\} \right],    
\end{multline}
recalling that $\mu_{ij}=A_{i}+B_{j}+X_{j}'\Lambda_{0}X_{i}$ equals the systematic, non-strategic, component of utility generated by arc $ij$ and that $f_{U}$ is the logistic density; (ii) the test statistic $R\left(\mathbf{d}\right)=\frac{1}{P_{0}\left(\mathbf{d};\delta\right)}\left.\frac{\partial P\left(\mathbf{d};\theta\right)}{\partial\gamma}\right|_{\gamma=0}$ yields the locally best test in the direction of the specified alternative within the class of similar tests.
\end{theorem}

The proof of Theorem \ref{thm: derivative}, along with some additional commentary, can be found in Section \ref{app: derivative_proof} of the Supplemental Web Appendix. A key implication of Theorem \ref{thm: derivative} is that the form of the locally best test statistics does \emph{not} depend upon $\mathcal{N}$, the equilibrium selection mechanism. This is essential, since optimal testing would not be feasible otherwise (at least without additional assumptions). One intuition for this finding is that equilibrium is unique with high probability when $\gamma$ is close to zero. This means we can effectively ignore draws of $\mathbf{U}$ which lead to multiple equilibria when differentiating the likelihood.

Indeed, when $\gamma$ is close to zero most players will have a strictly dominant strategy (that is the optimal set of links for them to send will be invariant to the play of their peers). Of course we need more information to recover the gradient with respect to $\gamma$, since this parameter measures the responsiveness of agents to their peers' actions. It turns out that a key scenario used in the derivative calculation involves considering draws of $\mathbf{U}$ where all players \emph{except one} have strictly dominant strategies. The one player without a strictly dominant strategy provides the needed gradient information.

\subsubsection*{Locally best vs. heuristic test statistics}
With a little manipulation we can simplify \eqref{eq: locally_best_statistic_1} to:
\begin{equation} \label{eq: locally_best_statistic_2}
\frac{1}{P_{0}\left(\mathbf{d};\delta\right)}\left.\frac{\partial P\left(\mathbf{d};\theta\right)}{\partial\gamma}\right|_{\gamma=0}=\sum_{i\neq j}\left[d_{ij}-F_{U}\left(\mu_{ij}\right)\right]s_{ij}\left(\mathbf{d}\right)    
\end{equation}
where $F_{U}\left(u\right)=e^{u}/\left[1+e^{u}\right]$ is the logistic CDF. This form of the statistic provides insight into how our test accumulates evidence against the null in practice. Consider the case where $s_{ij}\left(\mathbf{d}\right)=d_{ji}$, as would be true in agents' have a taste for reciprocated links. Observe that $F_{U}\left(\mu_{ij}\right)$ corresponds to the probability of the edge  $ij$ under the null. Therefore the optimal test statistic is large if we observe that many $ij$ links \emph{with low probability under the null} are reciprocated. It is not many reciprocated links that drives rejection per se, but the presence of many ``unexpected" reciprocated links. 

Consider a network of boys and girls with agents exhibiting a strong taste for gender-based homophily. The optimal test statistic in this case is the \emph{conditional} sample covariance of $D_{ij}$ and $D_{ji}$ given $(A_i, B_i, X_i)$ and $(A_j, B_j, X_j)$. The test based upon the reciprocity index is -- essentially -- based upon the \emph{unconditional} covariance. The effect of conditioning is to, for example, given more weight to heterophilous reciprocated links than to homophilous ones. Similarly we give more weight to reciprocated links across low degree agents, than to those across high degree agents. 

\subsubsection*{Implementation}

Two practical issues remain. The first, how to simulate the null distribution of the optimal test statistic, is covered in the next section. Second, although the locally best test statistic does not depend on the details of equilibrium selection, it does depend on $\delta_0$. Although the test will remain admissible when $\delta_0$ is replaced by some other, perhaps arbitrary, $\delta$, it will not be locally best.

A practical solution to this problem is to replace $\delta_0$ with its maximum likelihood estimate (MLE) computed under the null. This particular MLE is studied by \cite{Yan_et_al_JASA18}. In our Monte Carlo experiments, some of which are reported in the Supplemental Web Appendix, we have found that replacing $\delta_0$ with its MLE, results in a test which is nearly as powerful as the infeasibe oracle test based on $\delta_0$, and far more powerful that tests based on ad hoc statistics.

\section{Simulation}\label{sec:simulation}

Because a complete enumeration of $\mathbb{D}_{\mathbf{s},\mathbf{m}}$ is not feasible unless $N$ is very small, making our test practical requires a method of constructing uniform random draws from this set. Such draws can be used to simulate the null distribution of any test statistic of interest.

The problem of simulating networks with fixed degree sequences
is well-studied; with many domain specific applications \citep[e.g.,][]{Sinclair_Book1993}. We add to this problem the additional requirement that the simulated network satisfies the cross-link matrix constraint. 

Prior work on network simulation adopts one of two basic approaches. The first approach begins with an empty graph and randomly adds links. Links need to be added such that the end graph satisfies the degree sequence constraint. \cite{Blitzstein_Diaconis_IM11} develop an algorithm along these lines. They cleverly use checks for graphicality of a degree sequence, available in the discrete math literature, to add links in a way which constrains the end graph to be in the target set.\footnote{See also \cite{Genio2010} and \cite{Kim2012}. \cite{Graham_Pelican_BookCh2020} provide a textbook discussion of the \cite{Blitzstein_Diaconis_IM11} algorithm.} 

The second approach, to which our new method belongs, uses Markov Chain Monte Carlo (MCMC). Specifically an initial graph, satisfying the target constraints, is randomly rewired many times to create a new graph from the target set. Key to this approach is ensuring that each rewiring is compatible with the target constraints (e.g., maintains the network's degree sequence). The algorithm also needs to be constructed carefully to ensure that the end graph is a \emph{uniform} random draw from the target set. \cite{Sinclair_Book1993}, \cite{Rao_et_al_Sankhya96}, \cite{McDonald_et_al_SN07}, \cite{Berger2010} and \cite{Tao_NS16} all developed MCMC methods for simulating graphs (or digraphs) with given degree sequences.

We are aware of no method of generating adjacency matrix draws from $\mathbb{D}_{\mathbf{s},\mathbf{m}}$. The novelty of this problem, relative to the work described above, is the presence of the additional cross link matrix constraint, $\bf{M}$. In the discrete math literature the cross link matrix constraint corresponds to what is called a partition adjacency matrix (PAM) constraint. \cite{czabarka2017algebraic} conjecture that determining whether a given $\mathbf{s}, \mathbf{m}$ pair is graphical, the PAM realization problem, is NP-complete. If their conjecture is correct (and NP $\neq$ P), using a \cite{Blitzstein_Diaconis_IM11} type algorithm to draw from $\mathbb{D}_{\mathbf{s},\mathbf{m}}$ is not feasible. 

This leaves MCMC methods. \cite{Erdoes2017} showed that naively incorporating a PAM constraint into existing MCMC algorithms destroys their correctness. In this section we introduce a new MCMC algorithm that \emph{does} generate uniform random draws from $\mathbb{D}_{\mathbf{s},\mathbf{m}}$. This algorithm is of independent interest. Before describing the algorithm we introduce some additional definitions and notation.

\subsection{Notation and definitions}
We start by defining an alternating walk.

\theoremstyle{definition}

\begin{samepage}
\begin{definition}{\textsc{(Alternating Walk)}}\label{def:alternating_walk}
An alternating walk $H$ is sequence of (ordered) dyads of the form
\begin{equation}
    H:=\left(i_{1},i_{2}\right),\left(i_{3},i_{2}\right),\left(i_{3},i_{4}\right),\ldots,\left(i_{l},i_{l-1}\right)
\end{equation}
or 
\begin{equation}
    H:=\left(i_{2},i_{1}\right),\left(i_{2},i_{3}\right),\left(i_{4},i_{3}\right),\ldots,\left(i_{l-1},i_{l}\right)
\end{equation}
with $i_{k}\in\mathcal{V}\left(G\right)$,  $i_{k}\neq i_{k+1}$,  $i_{k}\neq i_{k-1}$and\\
(i) if  $\left(i_{k},i_{k-1}\right)\in\mathcal{A}\left(G\right)$, then $ \left(i_{k},i_{k+1}\right)\notin\mathcal{A}\left(G\right)$\\
(ii) if $\left(i_{k},i_{k-1}\right)\notin \mathcal{A}\left(G\right)$, then $ \left(i_{k},i_{k+1}\right)\in \mathcal{A}\left(G\right)$\\
(ii) if $\left(i_{k-1},i_{k}\right)\in\mathcal{A}\left(G\right)$, then $ \left(i_{k+1},i_{k}\right)\notin\mathcal{A}\left(G\right)$\\
(iv) if $\left(i_{k-1},i_{k}\right)\not \in \mathcal{A}\left(G\right)$, then $ \left(i_{k+1},i_{k}\right)\in\mathcal{A}\left(G\right)$\\
for all $k=2,\ldots,l-1$.
\end{definition}
\end{samepage}

For brevity we will often refer to a walk simply by its node sequence, writing $H:=i_{1}i_{2},\ldots,i_{l}$. To unpack Definition \ref{def:alternating_walk} it is easiest to consider an example. In Figure \ref{fig: feasible_schlaufen_sequence}, Panel B, three altering walks are shown (the links not present are depicted as dotted arrows).

Observe that for $H:=i_{1}i_{2},\ldots,i_{l}$, the adjacency matrix entries $D_{i_{1}i_{2}},D_{i_{3}i_{2}},\ldots,D_{i_{l}i_{l-1}}$ alternate between ones and zeros (or zeros and ones). This observation suggests a method of constructing an alternating walk via a sequence of ``hops" across the adjacency matrix: pick row $i_{1}$ of the adjacency matrix and move horizontally to column $i_{2}$, where $i_{2}$ corresponds to one of the agents to which $i_{1}$ directs a link, next move vertically to row $i_{3}$, where $i_{3}$ is an agent which does not direct a link to $i_{2}$, and so on.\footnote{This description is essentially due to \cite[][p. 124]{Tao_NS16}.} We call the horizontal moves \textit{active steps} and vertical moves \textit{passive steps}. Figure \ref{fig: constructing_alternating_walks} provides an example construction. The different cases in Definition \ref{def:alternating_walk} correspond to walks beginning/ending with passive/active steps.

\begin{figure}
\caption{\label{fig: constructing_alternating_walks}Constructing an alternating walk}
\begin{centering}
\includegraphics[scale=0.60]{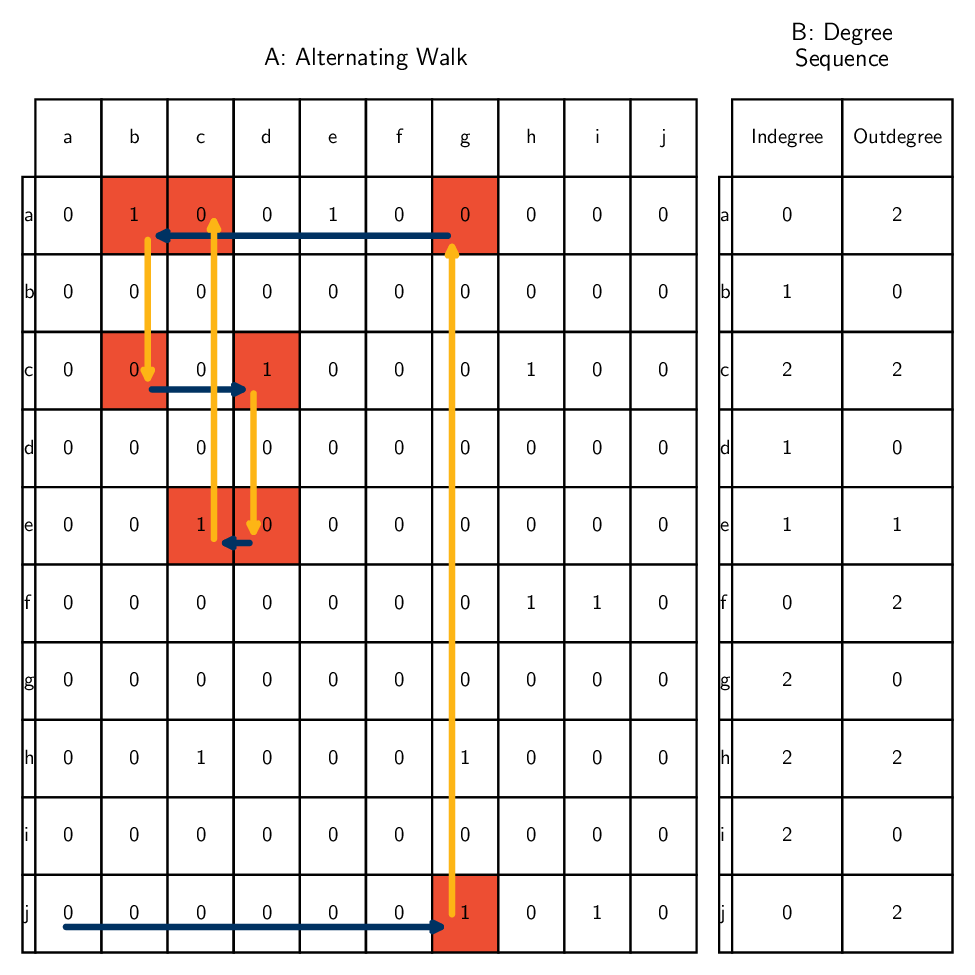}
\par\end{centering}
\caption*{
\underline{Source:} Authors' calculations. \hfill\break
\underline{Notes:} Panel A depicts an alternating walk $j,g,a,b,c,d,e,c,a$ constructed using the adjacency matrix. The same altering walk is colored blue in Figure \ref{fig: constructing_alternating_walks}. Agent labels are given in the first column and row of the table. To construct such a walk randomly we begin by choosing an agent at random. Here agent $j$ is chosen, with an ex ante probability of $\frac{1}{10}$ since there are ten agents in the network. Next we take an active step where one of agent $j$'s outlinks is chosen at random. Here we choose the outlink to agent $g$, an event with an ex ante probability of $\frac{1}{2}$ since agent $j$ has just two outlinks. Following the active step comes a passive step. In a passive step we move vertically to the row of an agent which does not direct a link to the current agent. Here we choose $a$ from the set $\left\{ a,b,c,d,e,f,i\right\}$  uniformly at random (i.e., with an ex ante probability of $\frac{1}{7}$). We continue with active and passive steps until we choose to stop or can proceed no further. Panel B reports the indegree and outdegree of each agent in the network. Observe that in active steps the probability of any feasible choice equals the inverse of the outdegree of the current agent. In passive steps the probability of any feasible choice equals the inverse of the number of nodes minus the indegree of the node chosen in the prior step minus 1 (since $i_{k}\neq i_{k+1}$). We can also construct alternating walks by the above procedure, but instead starting with a passive step. The shaded cells in the table shows which edges (ones) and non-edges (zeros) are in the walk.
}
\end{figure}

The length of an alternating walk equals the number of ordered dyads used to define it. An important type of alternating walk, which following \cite{Tao_NS16}, we call an \textit{alternating cycle}, is central to our algorithm.

\begin{definition}{\textsc{(Alternating Cycle)}}\label{def:alternating_cycle}
The alternating walk $C$ is an alternating cycle if $i_{1}=i_{l}$ and $C$ has even length.
\end{definition}

The length of an alternating cycle is at least four. Let $D_{i_{1}i_{2}},D_{i_{3}i_{1}},\ldots,D_{i_{l}i_{l-1}}$ be the sequence of adjacency matrix entries associated with alternating cycle $C$ in $D$. These entries necessarily form a sequence of zeros and ones (or ones and zeros).

Consider constructing an alternative digraph, say $\mathbf{D}'$, by replacing all the “ones” in the alternating cycle $C$ with “zeros” and all “zeros” with “ones”. Rewiring $\mathbf{D}$ in this way is degree preserving: $\mathbf{D}'$ has the same in- and out-degree sequence as $\mathbf{D}$. We refer to such operations as switching the cycle (since we switch the zeros and ones).

We use random alternating walks on the network in order to find alternating cycles. We then use these alternating cycles to rewire the network. This motivates the definition of what we call a schlaufe. A schlaufe is either an alternating walk which contains an alternating cycle (as the last part of the walk) or it is an alternating walk which cannot be continued. More precisely

\theoremstyle{definition}
\begin{definition}{\textsc{(Schlaufe)}}\label{def:schlaufe}
An alternating walk $H:=i_{1}i_{2}\ldots i_{l}$ is a \emph{schlaufe} if either \hfill\break
(i) There is a node $i_{k}\in\left\{ i_{1}i_{2}\ldots i_{l}\right\}$  with $k\neq l$ such that $i_{k}=i_{l}$ and $\left(k-l\right)\bmod2=0$. Furthermore for any two nodes $i_{j}$ and $i_{h}$ in $\left\{ i_{1}i_{2}\ldots i_{l-1}\right\}$  with $i_{j}=i_{h}$ and $j\neq h$ it holds that $\left(j-h\right)\bmod2=1$. \hfill\break
(ii) At node $i_{l}$ there is no other node $i_{l+1}$ such that the alternating walk could be extended with the unmarked link $\left(i_{l},i_{l+1}\right)$.
\end{definition}

In German schlaufe corresponds to “loop”, “bow” or “ribbon” (its plural is schlaufen); the latter translation is evocative of our meaning here. In the first case the schlaufe will coincide with an alternating walk which includes exactly one alternating cycle.\footnote{The requirement that $i_{k}=i_{l}$ and $\left(k-l\right)\bmod2=0$ ensures that $C=i_{k}i_{k+1}\ldots i_{l}$ is an alternating cycle (imposing even length). The “furthermore...” requirement ensures that if another node is visited multiple times it does not form an alternative cycle (imposing non-even length). See Figure \ref{fig: feasible_schlaufen_sequence} for an example.} Visually schlaufen of the first type, with the nodes appropriately placed, will look like loops and ribbons. In the second case the schlaufe does not include an alternating cycle.

Associated with a schlaufe, $R$, is a $K\times K$ violation matrix which records the number of extra links from group $k$ to group $l$ generated by switching the alternating cycle in $R$ (if there is one). Consider an alternating rectangle consisting of two boys and two girls. If initially one boy directs a link to the other and one girl directs a link to the other, then after switching the cycle the violation matrix will equal:\\
\begin{table}[h]
\centering
 \begin{tabular}{|c | c c |} 
 \hline
 Ego \textbackslash Alter & Boy & Girl \\
 \hline
 Boy & -1 & 1 \\ 
 Girl & 1 & -1 \\ [1ex] 
 \hline
 \end{tabular}
\end{table}
\\
After switching the cycle there are too few same gender links and too many mixed gender ones.

We call a sequence of schlaufen $\mathcal{R}=\left(R_{1},\ldots,R_{k}\right)$ feasible if (i) the cycles of the schlaufen are link disjoint and (ii) the sum of their violation matrices is zero (and for $i<k$ the sum of their violation matrices is not zero).

Conventional MCMC adjacency matrix re-wiring algorithms work by switching short cycles (e.g., alternating rectangles and compact alternating hexagons as in \cite{Rao_et_al_Sankhya96}). Switches of this type, while preserving the in- and out-degree sequence of the network will typically generate networks with the wrong inter-group link structure (i.e., non-zero link violation matrices). Our approach to solving this problem involves switching many alternating cycles simultaneously such that their individual link violation matrices sum to zero.

\subsection{The MCMC algorithm}

Let $\mathbf{S}=\mathbf{s}$ and $\mathbf{M}=\mathbf{m}$ be the degree sequence and cross link matrix of the network in hand. In order to a draw, say $\mathbf{D}'$, from $\mathbb{D}_{\mathbf{s},\mathbf{m}}$ we (i) start with a realization of $\left(\mathbf{s},\mathbf{m}\right)$, say $\mathbf{D}$, (ii) randomly construct (link disjoint) schlaufen, and (iii) switch any alternating cycles in them. While switching cycles will preserve the degree sequence, it may – as discussed earlier – result in a graph without the appropriate cross link matrix. In order to ensure that $\mathbf{D}'$ has the appropriate cross link matrix, we construct schlaufen until either the sum of their violation matrices equals zero or we stop randomly. If the sum of the schlaufen violation matrices is zero we move to $\mathbf{D}'$ from $\mathbf{D}$ by switching the cycles, otherwise we set $\mathbf{D}'=\mathbf{D}$. Proceeding in this way ensures that $\mathbf{D}'$ is, in fact, a random draw from $\mathbb{D}_{\mathbf{s},\mathbf{m}}$. After sufficiently many iterations of this process we show that a graph constructed in this way corresponds to uniform random draw from $\mathbb{D}_{\mathbf{s},\mathbf{m}}$. A formal statement of the procedure is provided by Algorithm \ref{alg: markov_draw}.

\begin{algorithm}
\caption{\textsc{Markov Draw Algorithm}}

\textbf{\underline{Inputs:}} An adjacency matrix $\mathbf{d}\in\mathbb{D}_{\mathbf{s},\mathbf{m}}$;
a mixing time $\tau$

\textbf{\underline{Procedure:}}
\begin{enumerate}
\item Set $t=0$.
\item With probability $1-q$ go to step 3, with probability $q$ go to
step 4.
\item find and mark a schlaufe (see Algorithm \ref{alg: Schlaufen_Detection}):
\begin{enumerate}
\item \textbf{if} the sum of the schlaufen violation matrices is zero, then 
\begin{enumerate}
\item switch the cycles in the schlaufen (changing the adjacency matrix
$\mathbf{d}$), 
\item unmark all links,
\item go to step 4.
\end{enumerate}
\item \textbf{else} 
\begin{enumerate}
\item with probability $\frac{1}{2}$, go to step 3 or 
\item with probability $\frac{1}{2}$, unmark all links and go to step 4.
\end{enumerate}
\end{enumerate}
\item Set $t=t+1$
\begin{enumerate}
\item \textbf{if} $t=\tau$ then return $\mathbf{d}$
\item \textbf{else} go to step 2
\end{enumerate}
\end{enumerate}
\textbf{\underline{Output:}} A uniform random draw $\mathbf{d}$ from $\mathbb{D}_{\mathbf{s},\mathbf{m}}$
\label{alg: markov_draw}
\end{algorithm}

Algorithm \ref{alg: markov_draw} uses a subroutine to find schlaufen. This subroutine, described in Algorithm \ref{alg: Schlaufen_Detection}, finds and marks a schlaufe in the graph.

\begin{algorithm}
\caption{\textsc{Schlaufen Detection Algorithm}}

\textbf{\underline{Inputs:}} An adjacency matrix $\mathbf{d}\in\mathbb{D}_{\mathbf{s},\mathbf{m}}$ (this
network may have marked links in it)

\textbf{\underline{Procedure:}}
\begin{enumerate}
\item Choose an agent/node, say $i$, at random.
\item Mark agent $i$ as active and
\begin{enumerate}
\item \textbf{if} feasible, randomly choose one of $i's$ (unmarked) outlinks,
say to $j$, and go to step 3;
\item \textbf{else} (i.e., no unmarked outlinks available) go to step 6.
\end{enumerate}
\item Mark edge $ij$, chosen in step 2 and
\begin{enumerate}
\item \textbf{if }agent $j$ is already marked passive, then go to step
6;
\item \textbf{else} go to step 4.
\end{enumerate}
\item Mark agent $j$, chosen in step 3, as passive and
\begin{enumerate}
\item \textbf{if} feasible, randomly choose an agent, say $k$, from among
those who \emph{do not} direct links to $j$, and go to step 5,
\item \textbf{else }go to step 6.
\end{enumerate}
\item Mark edge $kj$, with $k$ the agent chosen in step 4, as passive
and
\begin{enumerate}
\item \textbf{if} agent $k$ is already marked active, then go to step 6;
\item \textbf{else} go to step 2.
\end{enumerate}
\item return the (marked) adjacency matrix, the constructed schlaufe and
its violation matrix.
\end{enumerate}
\textbf{\underline{Output:}} A schlaufe, its violation
matrix and a marked adjacency matrix.
\label{alg: Schlaufen_Detection}
\end{algorithm}

To illustrate our method in more detail consider the network depicted in Panel A of Figure \ref{fig: feasible_schlaufen_sequence}. This network consists of two types of agents: gold (light) and blue (dark). The cross link matrix for the graph is given in Panel D. In Panels B and C a sequence of three schlaufen is shown. The first schlaufen is $R_{1}=jgabcdeca$. It is constructed through a sequence of active and passive steps as described earlier (see also the notes to Figure \ref{fig: constructing_alternating_walks} above). We begin by choosing agent $j$ randomly with a probability of $\frac{1}{10}$ (since there are ten agents in the network). We then take an active step, randomly choosing one of the two agents to which $j$ directs a link (i.e., either agent $g$ or $i$). Here we choose agent $g$. Next we take a passive step. Specifically we choose an agent at random from the set of agents that \emph{do not} direct a link to $g$ (the agent chosen in the previous active step). The probability associated with our choice in this passive step is $\frac{1}{7}$; this corresponds to the reciprocal number of agents in the network (i.e., $10$) minus the indegree the current agent (i.e., $2$) minus one (since self-loops are not allowed). We continue taking active and passive steps in this way until we visit $a$ for the second time. At this point we stop since our schlaufe now includes the alternating cycle $C_{1}=abcdeca$. Note that $c$ is also visited twice, but also that $cdec$ is not an alternating cycle since it is not of even length (see Definition \ref{def:alternating_cycle}). 

\begin{figure}
\caption{\label{fig: feasible_schlaufen_sequence}A feasible schlaufen sequence}

\begin{centering}
\includegraphics[scale=0.60]{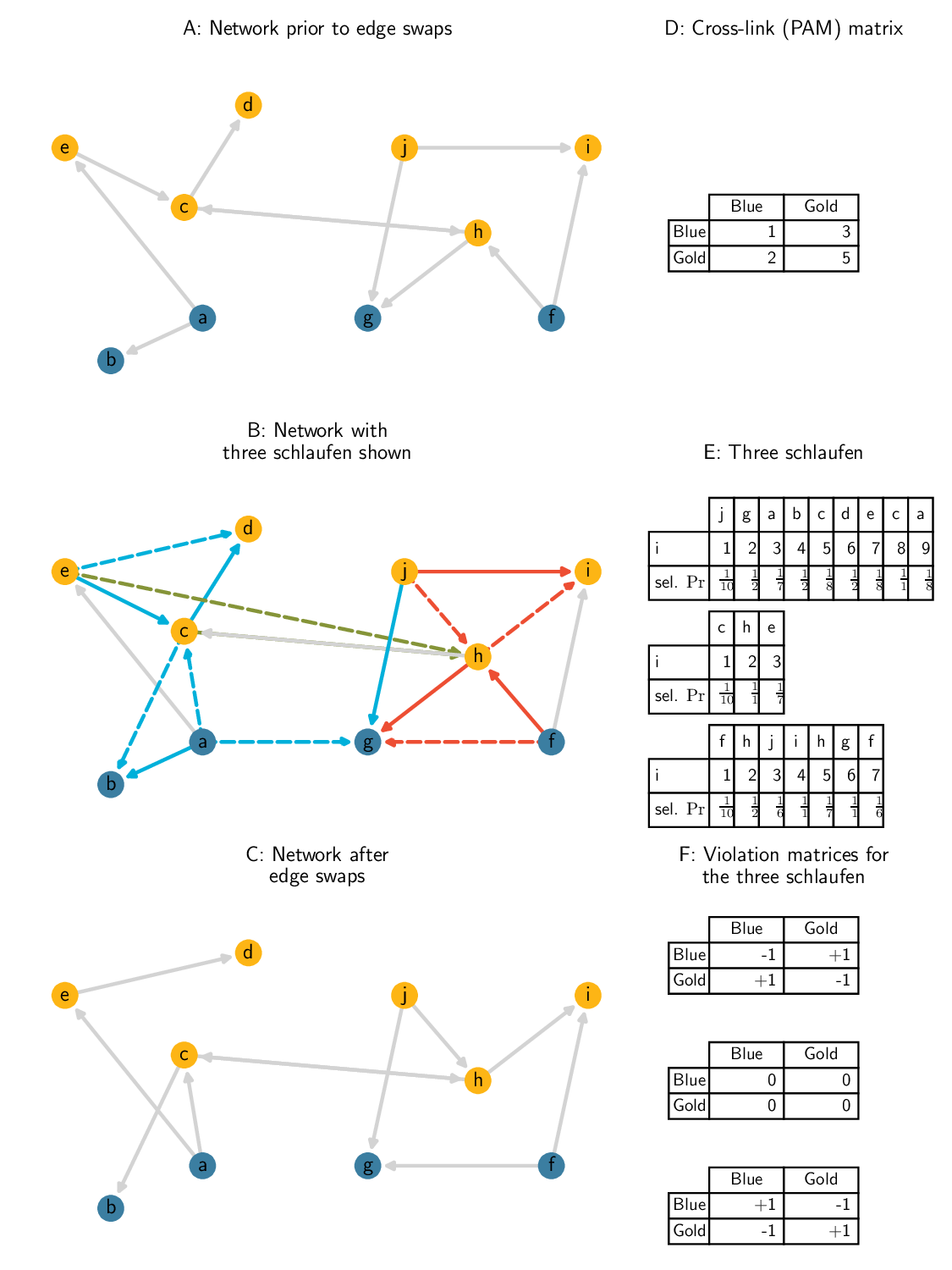}
\par\end{centering}
\caption*{
\underline{Source:} Authors' calculations. \hfill\break
\underline{Notes:} See the discussion in the main text. The figure depicts three link disjoint schlaufen with violation matrices which sum to zero. Panel E reports the (ex ante) probability that a given node was selected as the schlaufe was constructed. See equation \eqref{eq:schlaufen}.}
\end{figure}

As seen in the example we can calculate the probability of a schlaufe $R$ as we go through the algorithm (see Panel E). In Step 1 of Algorithm \ref{alg: Schlaufen_Detection} an agent is chosen with probability $\frac{1}{N}$. Next let $r_D^a(i)$ be the cardinality of the set of feasible out links in an active step. This set consists of all the out links of node $i$, which are not already marked in $\bf{D}$. Similarly, let $r_D^p(i)$ be the cardinality of the set of feasible outlinks in an passive step. That set consists of all the links $ij$ for which $ji$ is not in $\mathbf{D}$ and which are not already marked. The probability of $R = (i_1,..,i_l)$ can now be written as 
\begin{equation} \label{eq:schlaufen}
p_\mathbf{D}(R)= \frac{1}{N} \prod_{k=1}^{l-1}\left( \frac{1}{r_\mathbf{D}^a(i_k)}\left[k\bmod2\right] + \frac{1}{r_\mathbf{D}^p(i_k)}\left[(k-1)\bmod2\right]\right)
\end{equation}

In step 2 of Algorithm \ref{alg: markov_draw} we attempt to find a sequence of schlaufen with probability $1-q$ and do not change the adjacency matrix otherwise. In step 3, a schlaufen sequence $\mathcal{R} = (R_1, .. ,R_h)$ is constructed/found. After each detected schlaufe in this sequence, say $R_k$, any cycle in it is marked. Let $\mathbf{D}_k$ be the graph with the cycles of $R_1,..,R_{k-1}$ marked. After each schlaufe added the construction is stopped with probability $\frac{1}{2}$ . The probability of finding a cycle $R_k$ is $p_{\mathbf{D}_k}(R_k)$ as given in equation \eqref{eq:schlaufen} above. The total probability of a feasible schlaufen sequence $\mathcal{R}$ is therefore
\begin{equation} \label{eq:prob}
p_\mathbf{D}(\mathcal{R}) = (1-q)\frac{1}{2^{(h-1)}}\prod_{i=1}^h p_{\mathbf{D}_k}(R_k).
\end{equation}

\subsection{Correctness}

To show that our algorithm does indeed generate a uniform random draw from the set $\mathbb{D}_{\mathbf{s},\mathbf{m}}$ we use standard Markov chain theory (e.g., Chapters 7 and 10 of \cite{Mitzenmacher_Upfal_PC05}).

The random rewiring of the network implemented by Algorithm \ref{alg: markov_draw} can be described as a Markov chain. To show that, for $\tau$ large enough, it returns a uniform random draw from $\mathbb{D}_{\mathbf{s},\mathbf{m}}$ we prove that the stationary distribution of the Markov chain generated by Algorithm \ref{alg: markov_draw} is uniform on $\mathbb{D}_{\mathbf{s},\mathbf{m}}$. To show this it is helpful to develop a graphical representation of the Markov chain.

We denote the state graph of the Markov chain by $\Phi =
(\mathcal{V}_{\phi} , \mathcal{A}_{\phi})$. Its underlying vertex set $\mathcal{V}_\phi$ is the set of all elements in $\mathbb{D}_{\bf{s},\bf{m}}$. That is each node in our state graph is a network with degree sequence $\bf{S=s}$ and cross link matrix $\bf{M=m}$.
For network $\mathbf{D}$ in $\mathbb{D}_{\bf{s},\bf{m}}$, we denote by $v_{\mathbf{D}}$ the corresponding vertex in $\mathcal{V}_\phi$. The arc set $\mathcal{A}_\phi$ is defined as follows.
\begin{enumerate}
\item For all vertices we add the self loop $(v_{\mathbf{D}}, v_{\mathbf{D}})$ with (probability) weight $q$ (see Step 2 of Algorithm \ref{alg: markov_draw}).
\item  Let $\mathbf{D}$ and $\mathbf{D}'$ be two different networks in $\mathbb{D}_{\bf{s},\bf{m}}$. Let ${\mathbf{D}} \Delta {\mathbf{D}}'$ equal the union of the set of edges in ${\mathbf{D}}$, but not in ${\mathbf{D}}'$ and the set of edges in ${\mathbf{D}}'$, but not in ${\mathbf{D}}$. For each feasible  schlaufen-sequence $\mathcal{R}$, with cycle  edge set equal to ${\mathbf{D}} \Delta {\mathbf{D}}'$ we add the edge $(v_{\mathbf{D}}, v_{{\mathbf{D}}'})$ and assign to it probability weight $p_\mathbf{D}(\mathcal{R})$.
\item  Finally we add a directed loop $(v_{\mathbf{D}}, v_{\mathbf{D}})$ if the probability of all arrows leaving $v_{\mathbf{D}}$, introduced in points 1 and 2 immediately above, do not sum to 1. The probability of this loop is 1 minus the sum of the probability of all other outward arrows.
\end{enumerate}
The probability of any arc $a \in \mathcal{A}_\phi$ is denoted by $p(a)$. Note, by definition, the state graph can have parallel arcs and loops.

With these definitions in place we can prove correctness of the algorithm. First we show that the probability of the algorithm moving from graph $\mathbf{D}$ to $\mathbf{D}'$ coincides with the probability of moving in the reverse direction.
\begin{lemma} \label{lemma:symetrie}
For any two vertexes $v_{\mathbf{D}}, v_{\mathbf{D}'}$ the transition probability attached to $(v_{\mathbf{D}}, v_{\mathbf{D}'})$ equals that attached to $(v_{\mathbf{D}'}, v_{\mathbf{D}})$.
\end{lemma}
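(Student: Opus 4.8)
The plan is to prove that the one‑step transition matrix of the chain defined by Algorithm \ref{alg: markov_draw} is symmetric; this is precisely the assertion of the lemma and, combined with the standard Markov‑chain facts invoked at the start of the subsection, will later force the stationary distribution to be uniform on $\mathbb{D}_{\mathbf{s},\mathbf{m}}$. For $G=G'$ the claim is immediate, so fix distinct $G,G'\in\mathbb{G}_{\mathbf{s},\mathbf{m}}$. By point~2 of the arc‑set construction of $\Phi$, the transition weight on $(v_G,v_{G'})$ equals $\sum_{\mathcal{R}}p_G(\mathcal{R})$, the sum running over all feasible schlaufen sequences $\mathcal{R}$ whose cycle edge set is exactly $G\Delta G'$; the loop weights from points~1 and~3 attach only to $(v_G,v_G)$ and are irrelevant when $G\neq G'$. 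It therefore suffices to construct a bijection $\mathcal{R}\mapsto\mathcal{R}'$ onto the feasible schlaufen sequences in $G'$ with cycle edge set $G'\Delta G=G\Delta G'$ such that $p_G(\mathcal{R})=p_{G'}(\mathcal{R}')$ term by term.

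The bijection I would use sends each alternating walk of $\mathcal{R}$ to the walk on the \emph{same node set}, traversed in $G'$ (in the orientation that begins with an active step). Switching the cycles of $\mathcal{R}$ interchanges present and absent arcs exactly along $G\Delta G'$, so along each walk the present/absent pattern is toggled; by Definition \ref{def:alternating_walk} the toggled pattern is again a valid alternating walk in $G'$ whose unique cycle is the switched cycle, and switching $\mathcal{R}'$ in $G'$ returns $G$. Since switching a cycle preserves every in‑ and out‑degree, the map is an involution‑type bijection between the two sequence sets and preserves the number $h$ of schlaufen; hence the manifestly symmetric prefactors $(1-q)$, $2^{-(h-1)}$ and the leading $\tfrac1N$ of each schlaufe in \eqref{eq:prob} and \eqref{eq:schlaufen} agree on the two sides.

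The crux is the equality $p_G(\mathcal{R})=p_{G'}(\mathcal{R}')$ at the level of the per‑step factors $r^a$ and $r^p$ of \eqref{eq:schlaufen}. The key structural fact is that within any alternating cycle the ``row'' nodes (first coordinates of the incident dyads, i.e.\ the tails of active steps) and the ``column'' nodes (second coordinates, shared by passive steps) are \emph{invariant under switching}: a node that is the source of a present cycle arc in $G$ remains the source of a present cycle arc in $G'$, with its out‑degree unchanged, and dually for column nodes and in‑degrees. Consequently the multiset of active counts, governed by out‑degrees, and the multiset of passive counts $\{\,N-1-\mathrm{indeg}(\cdot)-(\text{marks})\,\}$, governed by in‑degrees, are identical for $\mathcal{R}$ in $G$ and $\mathcal{R}'$ in $G'$, so the two products in \eqref{eq:schlaufen} coincide as unordered products even though individual factors are permuted. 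Working out a single rectangle $abcda$ and the compact hexagon of Figure \ref{fig: examples_of_alternating_cycles} exhibits this cancellation explicitly.

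The main obstacle I anticipate is the bookkeeping of the marking function $m$, both across a sequence of schlaufen and within long walks that revisit nodes or carry a non‑switching tail (a type‑(i) lasso, or a terminal type‑(ii) schlaufe). One must verify, step by step, that $r^a_{G_k}(i_k)=r^a_{G'_k}(i'_k)$ and $r^p_{G_k}(i_k)=r^p_{G'_k}(i'_k)$, where $G_k,G'_k$ denote the partially‑marked graphs at the $k$‑th step on the two sides. The point is that the arcs excluded by marking are exactly the cycle arcs already processed: at an active step these are either absent out‑arcs (never counted) or out‑arcs previously traversed from that node, with the mirror situation at passive steps, so the deletions are matched under the bijection and the surviving counts are still the preserved degree‑based quantities. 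Making this invariance precise is the one genuinely delicate computation; once it is in hand the termwise identity $p_G(\mathcal{R})=p_{G'}(\mathcal{R}')$, and hence the symmetry of the transition probabilities, follow at once.
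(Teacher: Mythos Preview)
Your overall strategy---construct a weight-preserving bijection between feasible schlaufen sequences in $G$ and in $G'$---is exactly the paper's approach, and your identification of the degree-invariance of ``row'' versus ``column'' nodes under switching is the right structural observation. However, the bijection you describe is not well-defined. You propose to send each walk to ``the walk on the same node set, traversed in $G'$ (in the orientation that begins with an active step)''. For a pure alternating cycle this can be read as reversing the traversal direction, which is correct. But for a lasso-shaped schlaufe $R_k=(i_1,\ldots,i_m,\ldots,i_l)$ with a non-trivial tail $i_1,\ldots,i_m$ and cycle $i_m,\ldots,i_l$, neither keeping the node sequence nor reversing the entire walk produces a valid schlaufe in $G'$: the tail arcs are \emph{not} in $G\Delta G'$ and so are not toggled, whereas the cycle arcs are. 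Your claim that ``the toggled pattern is again a valid alternating walk in $G'$'' is therefore false at the junction node $i_m$, where the tail and cycle meet.

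The paper resolves this by defining $\bar R_k=(i_1,\ldots,i_m,i_{l-1},\ldots,i_{m+1},i_l)$: the tail is left intact and only the cycle portion is reversed. One must then check that this reversal maps the cycle nodes to positions of the \emph{same parity} (because $l-m$ is even), so that active steps remain active and passive steps remain passive; this is precisely what makes your row/column invariance observation pay off in the product \eqref{eq:schlaufen}. The marking bookkeeping you flag as the main obstacle is comparatively routine once this bijection is in place: at each node the number of marked outlinks in $G_k$ from $R_j$ ($j<k$) equals the number in $G'_k$ from $\bar R_j$, so $r^a_{G_k}(i)=r^a_{G'_k}(i)$ and $r^p_{G_k}(i)=r^p_{G'_k}(i)$. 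The genuine gap in your proposal is thus the definition of the bijection for lassos, not the mark accounting.
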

\begin{proof}
See appendix \ref{proof_symetrie}.
\end{proof}
Next we show the state graph is strongly connected. This means our Algorithm moves from any ${\mathbf{D}} \in \mathbb{D}_{\bf{s},\bf{m}}$ to any other ${\mathbf{D}'} \in \mathbb{D}_{\bf{s},\bf{m}}$ with positive probability.
\begin{lemma}\label{lemma:connected}
The state graph $\Phi$ is strongly connected.
\end{lemma}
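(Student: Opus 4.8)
The plan is to exhibit, for any ordered pair of distinct networks $G, G' \in \mathbb{G}_{\mathbf{s},\mathbf{m}}$, a directed path in $\Phi$ from $v_G$ to $v_{G'}$ carrying positive probability. The natural starting point is the classical result of \cite{Rao_et_al_Sankhya96} that any two digraphs sharing the degree sequence $\mathbf{s}$ differ by a collection of alternating cycles. Concretely, I would first argue that the symmetric difference $G \Delta G'$ decomposes into link-disjoint alternating cycles $C_{1},\ldots,C_{r}$ (in the sense of Definition \ref{def:alternating_cycle}) whose simultaneous switching carries $G$ into $G'$. This is the directed analogue of the standard symmetric-difference decomposition: since $G$ and $G'$ have identical in- and out-degrees at every node, the edges in $G\setminus G'$ (to be deleted) and those in $G'\setminus G$ (to be added) balance at each node, so tracing alternating active/passive steps partitions $G\Delta G'$ into alternating cycles.

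The decisive extra ingredient, relative to the degree-sequence-only setting, is the cross-link constraint. Let $V_{j}$ denote the violation matrix of $C_{j}$. Because $G$ and $G'$ share the \emph{same} cross-link matrix $\mathbf{m}$, switching all of $C_{1},\ldots,C_{r}$ leaves $\mathbf{m}$ unchanged, so $\sum_{j=1}^{r}V_{j}=0$. This is precisely the property that lets the switches be organized into moves of our algorithm. I would order the cycles and cut the sequence at each point where the running sum of violation matrices first returns to zero. Each resulting block is then a \emph{feasible} schlaufen sequence in the sense defined just before Algorithm \ref{alg: markov_draw} --- its cycles are link-disjoint (being sub-cycles of an edge-disjoint family) and its violation matrices sum to zero with no vanishing proper prefix --- and switching such a block preserves both $\mathbf{s}$ and $\mathbf{m}$, so every intermediate network again lies in $\mathbb{G}_{\mathbf{s},\mathbf{m}}$. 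Each block thus corresponds, by the construction of the arc set $\mathcal{A}_{\phi}$, to an arc between consecutive intermediate vertices.

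It then remains to check that each such arc has positive weight, i.e.\ that Algorithm \ref{alg: Schlaufen_Detection} constructs the required schlaufen with positive probability. Here I would note that every alternating cycle $C_{j}$ is, in particular, a valid alternating walk, so starting Algorithm \ref{alg: Schlaufen_Detection} at an appropriate node and making the finitely many active and passive choices that trace out $C_{j}$ is feasible at each step and therefore occurs with strictly positive probability, as recorded in \eqref{eq:schlaufen}; edge-disjointness guarantees that the edges of the later cycles remain unmarked and hence available when they are sought. Combining this with the positive probabilities of the sequence-continuation and stopping decisions of Algorithm \ref{alg: markov_draw} yields $p_{G}(\mathcal{R})>0$ in \eqref{eq:prob} for each block, so every arc on the path carries positive weight. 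Chaining the blocks produces a positive-probability directed walk from $v_{G}$ to $v_{G'}$; since the pair $G,G'$ was arbitrary (or, alternatively, by the reversibility in Lemma \ref{lemma:symetrie}), $\Phi$ is strongly connected.

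I expect the main obstacle to be the decomposition of the first paragraph together with its interaction with the algorithm: one must verify not merely that $G\Delta G'$ splits into alternating cycles, but that the resulting cycles conform to the specific alternating-walk structure of Definitions \ref{def:alternating_cycle} and \ref{def:schlaufe} and can actually be produced by the active/passive step rule of Algorithm \ref{alg: Schlaufen_Detection}. The cross-link constraint itself is handled cleanly by the identity $\sum_{j}V_{j}=0$, but care is needed to ensure that the greedy ``cut at the first return to zero'' batching never strands an infeasible remainder, and that degenerate schlaufen of type (ii) in Definition \ref{def:schlaufe}, which carry no cycle and leave the graph unchanged, do not interfere with the construction.
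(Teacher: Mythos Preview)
Your proposal is correct and follows essentially the same route as the paper's own proof: decompose the symmetric difference $G\Delta G'$ into link-disjoint alternating cycles, observe that cycles are schlaufen whose violation matrices sum to zero because $G$ and $G'$ share $\mathbf{m}$, and then batch the ordered sequence into feasible schlaufen-sequences (cutting at each first return to zero), each of which furnishes an arc in $\Phi$. The paper's argument is terser and omits the positive-probability check you spell out, but the structure is identical; your concerns about stranding an infeasible remainder are unfounded since the total sum vanishing forces the final block to close as well.
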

\begin{proof}
See appendix \ref{connected}.
\end{proof}

With these two lemmata it is easy to show that the stationary distribution is uniform on $\mathbb{D}_{\bf{s},\bf{m}}$. This gives us the main result of the section.
\begin{theorem} \label{theorem:correctness}
Algorithm \ref{alg: markov_draw} is a random walk on the state graph $ \Phi $ which samples uniformly a network from $\mathbb{D}_{\bf{s},\bf{m}}$ for $\tau \rightarrow \infty$.
\end{theorem}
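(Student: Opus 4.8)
The plan is to assemble the theorem from the two lemmas via the standard convergence theorem for finite Markov chains, so that the proof itself is short and the real content lives in Lemmas \ref{lemma:symetrie} and \ref{lemma:connected}. First I would observe that Algorithm \ref{alg: markov_draw} induces a time-homogeneous Markov chain on the finite state space $\mathcal{V}_\phi = \mathbb{G}_{\mathbf{s},\mathbf{m}}$, whose transition matrix $P$ has entries $P_{GG'} = p(v_G, v_{G'})$ equal to the aggregate arc weight from $v_G$ to $v_{G'}$ (summed over all feasible schlaufen sequences $\mathcal{R}$ whose cycle edge set equals $G\Delta G'$, exactly as in the construction of $\mathcal{A}_\phi$). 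The fundamental theorem for finite Markov chains (e.g., Chapters 7 and 10 of \cite{Mitzenmacher_Upfal_PC05}) states that a finite, irreducible, aperiodic chain has a unique stationary distribution $\pi$, and that the distribution of the state at time $\tau$ converges to $\pi$ as $\tau\to\infty$ from any starting state. It therefore suffices to verify three things: irreducibility, aperiodicity, and that the uniform distribution on $\mathcal{V}_\phi$ is stationary. Since adjacency matrices in $\mathbb{D}_{\mathbf{s},\mathbf{m}}$ are in bijection with the graphs in $\mathbb{G}_{\mathbf{s},\mathbf{m}}$, the conclusion phrased over $\mathbb{D}_{\mathbf{s},\mathbf{m}}$ in the algorithm's output follows immediately.

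Irreducibility is immediate from Lemma \ref{lemma:connected}: strong connectivity of $\Phi$ means that from any $G$ one reaches any $G'$ along a directed path of positive-probability arcs, i.e. $P^{k}_{GG'} > 0$ for some $k$. For aperiodicity I would use the self-loop of weight $q$ attached to every vertex in point 1 of the construction of $\mathcal{A}_\phi$ (the probability-$q$ branch in Step 2 of Algorithm \ref{alg: markov_draw}, under which the adjacency matrix is returned unchanged). Provided $q > 0$, this gives $P_{GG} > 0$ for every $G$, so every state has period $1$; combined with irreducibility this makes the entire chain aperiodic.

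For stationarity of the uniform distribution I would invoke Lemma \ref{lemma:symetrie}, which gives $P_{GG'} = P_{G'G}$ for every pair. Writing $\pi_G = 1/\lvert \mathbb{G}_{\mathbf{s},\mathbf{m}}\rvert$ for the uniform distribution, the detailed-balance condition $\pi_G P_{GG'} = \pi_{G'} P_{G'G}$ reduces to the symmetry $P_{GG'} = P_{G'G}$ and hence holds, so the chain is reversible with respect to $\pi$ and $\pi$ is stationary. (Equivalently, symmetry makes $P$ doubly stochastic, and the uniform distribution is stationary for any doubly stochastic transition matrix.) Because irreducibility forces the stationary distribution to be unique, $\pi$ is \emph{the} stationary distribution, and the convergence statement of the fundamental theorem then yields the claim: as $\tau\to\infty$, the returned $\mathbf{d}$ is a uniform draw from $\mathbb{D}_{\mathbf{s},\mathbf{m}}$.

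The step I expect to be the genuine obstacle is not in this assembly but in the two lemmas it rests on, which are deferred to the appendix. In particular, the symmetry in Lemma \ref{lemma:symetrie} is delicate because a move from $G$ to $G'$ may be realized by many distinct feasible schlaufen sequences (the state graph carries parallel arcs), so $P_{GG'} = P_{G'G}$ is a statement about aggregated weight; one must exhibit a weight-preserving correspondence between the sequences realizing $G\to G'$ and those realizing $G'\to G$, and the bookkeeping of the active/passive step probabilities in \eqref{eq:schlaufen}--\eqref{eq:prob} under reversal is where the care is needed. At the level of the theorem, the only extra point worth flagging is the mild requirement $q > 0$, which is precisely what secures aperiodicity and rules out a periodic chain that would fail to converge.
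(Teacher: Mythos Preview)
Your proposal is correct and follows essentially the same route as the paper: both assemble the result from Lemma~\ref{lemma:symetrie} (symmetry $\Rightarrow$ uniform is stationary), Lemma~\ref{lemma:connected} (irreducibility), and the self-loops of weight $q$ (aperiodicity), then invoke the standard finite-chain convergence theorem. The only point the paper makes more explicit is a short paragraph verifying that the algorithm's dynamics actually coincide with the arc structure of $\Phi$ --- i.e., that at Step~2 the chain follows a type-1 loop with probability $q$, that a feasible schlaufen sequence in Step~3 corresponds to a type-2 arc with probability $p_G(\mathcal{R})$, and that all remaining (infeasible) outcomes aggregate into the type-3 residual loop --- whereas you assert this correspondence without walking through the three cases.
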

\begin{proof}
See appendix \ref{correctness1}.
\end{proof}

\begin{landscape}
\begin{figure}
\caption{Nyakatoke Village Risk-Sharing Network}
\begin{centering}
\includegraphics[scale=0.50]{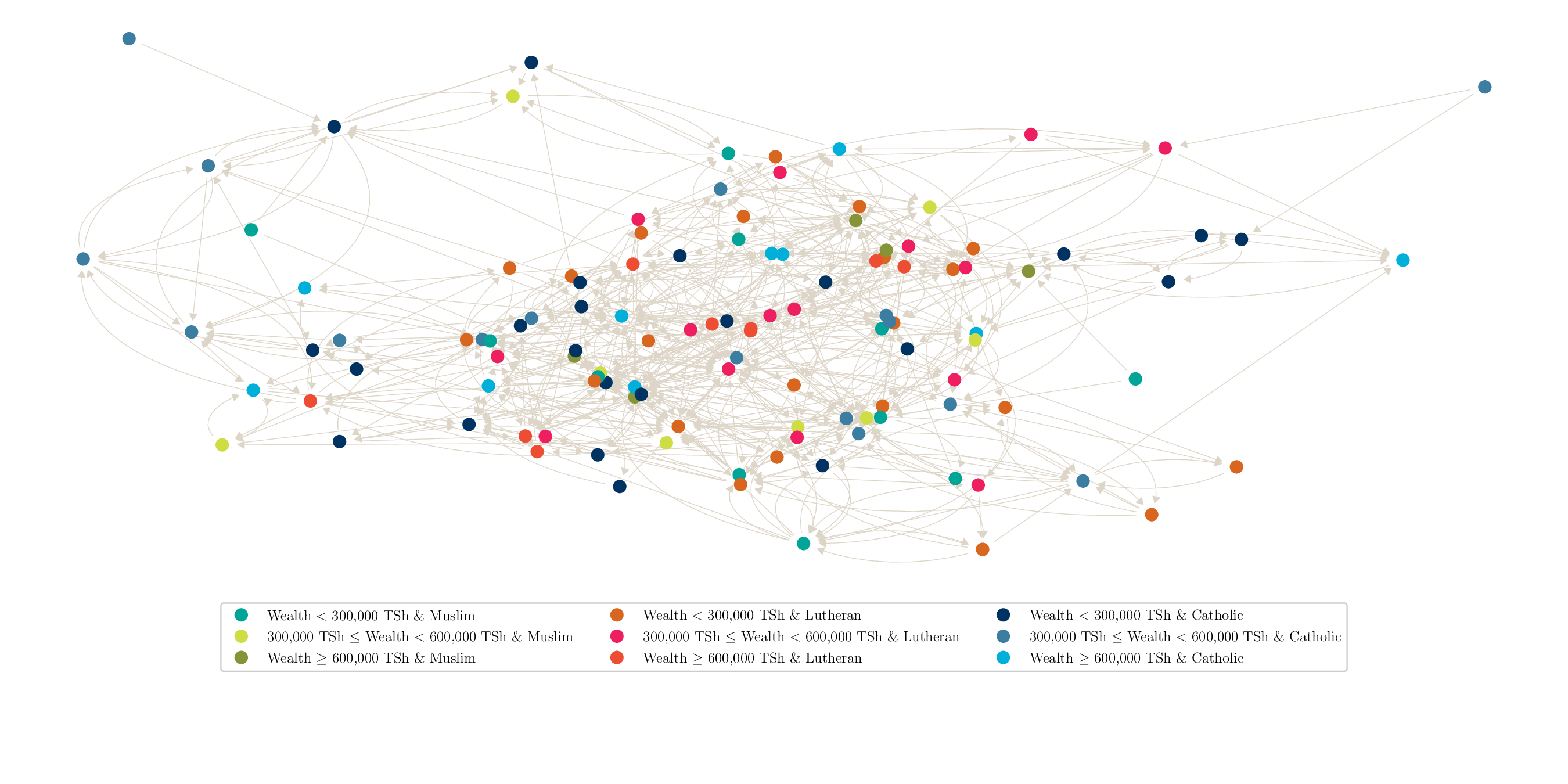}
\par\end{centering}
\caption*{\underline{Source:} \cite{deWeerdt_IAP04} and authors' calculations.\hfill\break
\underline{Notes:} Each household is colored according to their land and livestock wealth (measured in Tanzanian Shillings) and religion. The arrow head on the edges points to the ``alter" household with the link being sent by the tail ``ego" household.} 
\label{fig: fig_nyakatoke_network}
\end{figure}
\end{landscape}

\section{Application: risk-sharing links when agents value bridging capital} \label{sec:application}

\cite{deWeerdt_IAP04} studied the formation of risk-sharing links across 119 households in the rural village of Nyakatoke (located in Tanzania). He asked all adult individuals in the village who they could rely upon for help and, from their responses, constructed a network of directed links across households.\footnote{The prompt used by \cite{deWeerdt_IAP04} is suggestive of both mutuality and directionality, leading to some ambiguity in whether to interpret the collected edges as undirected or directed. \cite{Comola_Fafchamps_EJ2014} present evidence suggesting that the links given by households are directed. Specifically that they indicate to which other households they would turn to in the event of need. It is this interpretation that we give the links here.}. The resulting set of risk-sharing links is shown in Figure \ref{fig: fig_nyakatoke_network}.

Here we assess whether households value ``bridging capital", as suggested by \cite{Burt_StructuralHoles1995} and formalized in game-theoretic terms by \cite{Kleinberg_et_al_ACM2008} and others. 
If $k$ directs a link to $i$ but not to $j$, then $i$, by directing a link to $j$, may position herself to serve as a ``bridge" or ``broker" between $k$ and $j$. See Figure \ref{fig: Network_Benefit_Examples} above.

In the formal model of \cite{Kleinberg_et_al_ACM2008} agents gain utility from positioning themselves on length two paths connecting agents not directly connected themselves; however such utility gains are decreasing in the number of ``rival'' length two paths (i.e., those with other agents in the center). This suggest, for example, a network benefit function of
\begin{equation}
    g_{i}\left(\mathbf{d}\right)=\sum_{j\neq i,k,j}\sum_{k\neq i,j}\frac{D_{ki}D_{ij}\left(1-D_{kj}\right)}{\max\left(1,\sum_{l\neq j,k}D_{kl}D_{lj}\left(1-D_{kj}\right)\right)}.
\end{equation}
In this formulation any ``bridging" capital is shared equally across all agents $l$ on length two paths from $j$ to $k$ (with arc $jk$ absent). For example, if there are two bridging agents situated between $j$ and $k$, they each get half the benefit and so on. The marginal network benefit of edge $ij$ is thus
\begin{equation}
    s_{ij}\left(\mathbf{d}\right)=\sum_{k\neq i,j}\frac{D_{ki}D_{ij}\left(1-D_{kj}\right)}{\max\left(1,\sum_{l\neq j,k}D_{kl}D_{lj}\left(1-D_{kj}\right)\right)},
\end{equation}
from which the form of the locally best test follows.

From \cite{deWeerdt_IAP04} we also know that household land and livestock wealth, as well as religion (Catholic, Lutheran or Muslim), are important drivers of link formation in Nyakatoke. We divide households into three wealth bins, which in conjunction with religion, partitions households into nine groups; $X_i$ consists of the nine resulting group membership dummies with the 81 elements of $\Lambda$ parametrizing any homophily/heterophily across these groups. The remaining null model parameters are the $238=119 \times 2$ household-specific in- and out-degree heterogeneity parameters. This gives $\dim(\delta)=2 \times 119 + 9 \times 9$ = 319 null model ``nuisance" parameters. It is hard to imagine a testing approach with good properties in this setting which would not involve ``conditioning away" the null model parameter.

While the form of the locally best test statistic follows naturally from the form of the \cite{Kleinberg_et_al_ACM2008} network benefit function, it is less clear how to form a heuristic test with power to detect the alternative ``agents like to bridge disconnected groups". After some experimentation we settled on the difference between the 90th and 50th percentiles of the empirical distribution of betweenness-centrality across agents in the network as a suitable \emph{ad hoc} test statistic (other measures of dispersion give similar results). The intuition is that acquiring bridging capital is inherently rivalrous; the addition of links by other agents may reduce one's own network benefit. Competition to accumulate bridging capital should lead to more dispersion in betweenness-centrality across agents (than in a reference set of null model graphs). Winners of this competition (the 90th percentile) will have more bridging capital than the typical agent (the 50th percentile) in the network. We wish to emphasize that the ``ad hoc" descriptor of this statistic is apt. Indeed, an advantage of the formalism of an explicit network benefit function is that gives precision to the alternative of interest (in turn suggesting a suitable, in fact, optimal test statistic).

The left panel of Figure \ref{fig: nyakatoke_brokerage} plots simulation estimates of the distribution of the $90-50$ betweenness-centrality gap across three reference sets of networks: (i) \"{E}rdos-R\'{e}nyi graphs with the same number of links as observed in Nyakatoke, (ii) the set of all graphs with the same in- and out-degree sequences as observed in Nyakatoke, and (iii) the set of all graphs which additionally constrain the number of cross-group links to be the same as observed in Nyakatoke. The vertical line in the figure marks the value of the actual $90-50$ betweenness-centrality gap in Nyakatoke.

The three reference distributions in Panel A allow us to undertake three model adequacy tests: is Nyakatoke well-described by (i) the \"{E}rdos-R\'{e}nyi model, (ii) a directed $\beta$-model which places equal probability on all networks with the same in- and out-degree sequence as in Nyakatoke, or (iii) by the \cite{Charbonneau_EJ17} model described above? In all three cases we reject, but notice that as we enrich the null model the simulated reference distributions shift to the right.\footnote{The incremental effect of additionally controlling for homophily is modest.} Put differently a portion of the dispersion in betweenness-centrality across households observed in Nyakatoke is likely a by-product of degree heterogeneity and homophily. The rightward shifts in the reference distributions as we enrich the null model is indicative of how using a realistic null model may be important for avoiding spurious rejections in practice. That said, our decisive rejection of even the $319$ parameter \cite{Charbonneau_EJ17} model indicates that degree heterogeneity and wealth/religion homophily cannot explain \emph{all} of the inequality in betweenness-centrality we observe across agents in Nyakatoke.

\begin{figure}
\caption{Testing for bridging/brokerage preferences}
\begin{centering}
\includegraphics[scale=0.75]{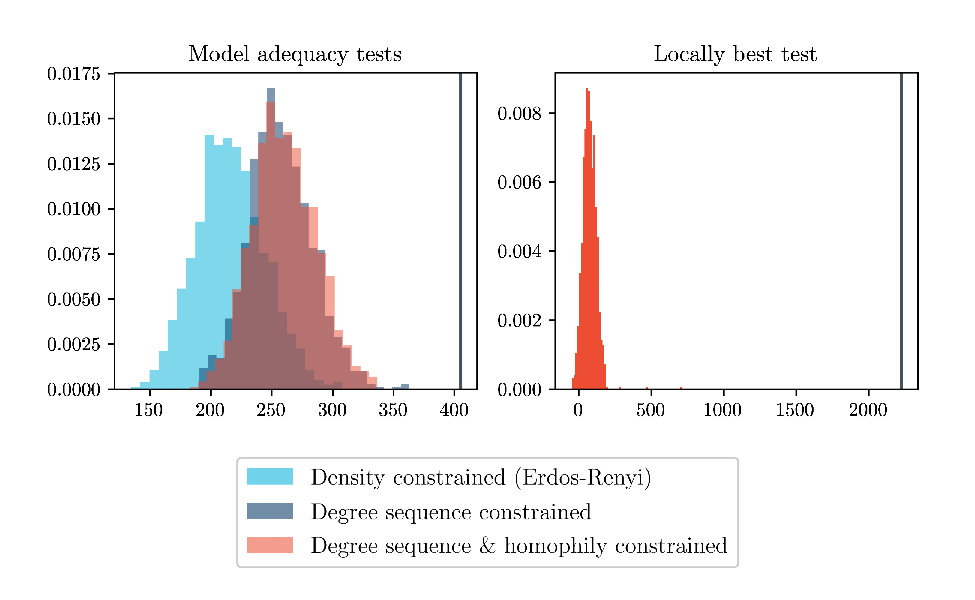}
\par\end{centering}
\caption*{\underline{Source:} \cite{deWeerdt_IAP04} and authors' calculations.\hfill\break
\underline{Notes:} Panel A presents MCMC estimates of the \emph{distribution} of the $90-50$ betweenness-centrality gap across agents for three reference sets of networks (as listed in the legend). Panel B shows the null distribution of the locally best test described in the main test. In this panel the reference set is all networks with the same in- and out-degree sequences and cross-link matrix as observed in Nyakatoke.} 
\label{fig: nyakatoke_brokerage}
\end{figure}

The right panel of Figure \ref{fig: nyakatoke_brokerage} plots the null distribution of the locally best test statistic for the alternative that households gain utility by bridging disconnected pairs of agents (as formalized by \cite{Kleinberg_et_al_ACM2008}). If we are willing to maintain that the true data generating process is either in the null or specified alternative model space, we can interpret a rejection as evidence \emph{for} $\gamma_0$ being positive. To implement this test we replace $\delta_0$ with its maximum likelihood estimate (MLE) computed under the null.\footnote{The computation of this MLE is described in detail by \cite{Dzemski_RESTAT18} and \cite{Yan_et_al_JASA18} and implemented in our Python package \textbf{ugd} for ``uniform graph draw".} As is clear from Panel B of Figure \ref{fig: nyakatoke_brokerage}, we decisively reject the null. 

Panel B is also suggestive of the power gains associated with the locally best test. If we were to standardize each of our test statistics using their respective reference distribution's mean and standard deviation, it is obvious that the locally best test statistic is more extremely positioned in the right tail of its null distribution (the Monte Carlo experiments reported in the Supplemental Web Appendix confirm the power advantages of the locally best test).

Using Algorithm \ref{alg: markov_draw} requires a choice of the mixing time parameter $\tau$. Although the mixing properties of our MCMC procedure are largely unexplored, we have found - by Monte Carlo experimentation -- that choosing $\tau$ such that each edge in the input graph is, on average, swapped at least once before the resulting output is considered a uniform random draw from the target set to yield acceptable results in practice. We use this approach here (also see the Python Juypter Notebook in the Supplemental Materials). The required value for $\tau$ is increasing in the dimension of the nuisance parameter $\delta$ and especially in the dimension of $\Lambda$. Hence the speed of the simulation algorithm declines in both $N$ and $K$.

\section{Limitations and future research} \label{sec:extensions}

The analysis in this paper, like much of the wider econometrics literature on games, is likelihood based. Our null model is fully parametric (albeit flexibly-so), while the alternative, due to the unmodeled NE selection function, is semiparametric. Under correct specification -- use case (ii) -- our test reveals whether $\gamma_0=0$ or  $\gamma_0>0$ (with a researcher-specified \emph{exact} Type I error rate, and a locally best Type II error rate). That is, we present a method for detecting whether agents form links ``strategically" in the presence of any pattern of homophily and degree heterogeneity allowed by the null.

It would be interesting to know whether detecting strategic interaction in the presence of \emph{arbitrary} homophily on observables and degree heterogeneity is possible. We know from the panel data literature that detecting state-dependence in the presence of heterogeneity is non-trivial and that modeling details matter \citep[e.g.,][]{Chamberlain_LALMD85}. Analogous questions arise here.

Our set-up assumes that researcher is able to a priori partition the support of agents' covariates into $K$ regions along which all homophilous sorting occurs. In practice this is an approximation. Developing data-based discretization rules (e.g., using clustering algorithms) and formalizing the nature of the approximations involved would be useful. It is possible that recent results on randomization inference by \cite{Canay_et_al_EM2017} could be useful for such an analysis. 

Key to our set-up is the exponential family structure (under the null) induced by the assumption of logistic random link-specific utility. While this is a strong assumption, it comes with considerable pay-off: we are (i) able to \emph{exactly} control size in (ii) the presence of a high dimensional nuisance parameter while (iii) also making no assumptions about equilibrium selection. Exponential family structures has proved highly fruitful in other areas of econometrics; applications in panel data being most closely connected to the present setting. Our similarity and local optimality results build on classic results in the theory of testing in exponential families (e.g., \cite{Ferguson_MS67} and \cite{Lehmann_Romano_TSH05}).

While obvious, and generic to most testing problems, it is important to understand that our test may have low power in some directions (in extreme cases even power equal to size). As an example imagine agents gain utility from linking with popular agents (as in preferential attachment models), such that $g_{i}\left(\mathbf{d}\right)=\sum_{j\neq i}d_{ij}\left[\sum_{k\neq i}d_{kj}\right]$. This model yields $s_{ij}\left(\mathbf{d}\right)=\sum_{k\neq i}d_{kj}$, which is almost equal to the indegree of agent $j$. Hence the distribution of $s_{ij}\left(\mathbf{D}\right)$ across $\mathbb{D}_{\mathbf{s},\mathbf{m}}$ will be nearly degenerate. Examples of this type are not unique to our setting. See \cite{Lehmann_Romano_TSH05} for general impossibility results.

Finally, while we are able to prove that our simulation algorithm works for $\tau$ ``large enough", we don't currently have a formal handle on the mixing properties of our procedure. This is not just a limitation of our work, but of much of the related work in the discrete math and computer science literature (e.g., \cite{Cooper_Comb_2007} and \cite{Erdosr_Comb_2018}). Our limited simulation experiments suggest relatively fast mixing. \footnote{A simple heuristic is to increase $\tau$ until one's results are not sensitive to further increases in it.}

These limitations notwithstanding, we nevertheless see potential for the widespread use of the methods presented in this paper in empirical social and economic network research (and, with modification, in other settings where strategic interaction is important). We hope that the ability to easily embed formal game-theoretic models of network formation of the type surveyed by, for example, \cite{Jackson_NetBook08} and \cite{Goyal_Book2021}, into heterogeneity-rich dyadic linking models will be attractive to empirical researchers. While not emphasized here, we also expect our simulation algorithm to find use in other settings where binary matrix simulation is an important part of researchers' toolkits \citep[e.g.,][]{Gotelli_EC2000}. Finally our focus on score type tests may represent a fruitful direction for further research on testing in incomplete models \citep[e.g.,][]{Chen_Kaido_WP2021}.

The Supplemental Web Appendix shows how to adapt our results to bi-partite networks. There we show how ideas in this paper might be used to, for example, study airline entry into different routes as in \cite{Ciliberto_Tamer_EM2009}. The set-up allows for complex airline preferences over their own route map as well as how they vary with the route maps chosen by their competitors. We also shows how our simulation algorithm can be used for more traditional conditional likelihood estimation and inference problems. A carefully annotated Python Jupyter, Notebook illustrating how the methods in this paper work in practice, is  available in the Supplemental Materials.

\bibliography{StrategicInteractionTesting.bib}

\newpage
\setcounter{page}{0}
\pagenumbering{arabic}
\setcounter{page}{1}
\appendix
\section{Supplemental Web Appendix}
The appendix includes proofs of the theorems stated in the main text as well as statements and proofs of supplemental lemmata. All notation is as established in the main text unless stated otherwise. Equation numbering continues in sequence with that established in the main text. 

In addition to proofs, Section \ref{app: monte_carlo} of the Appendix summarizes the results of a small set of Monte Carlo experiments and Section \ref{app: applications} discusses additional applications of our MCMC algorithm.

\subsection{Measurability of the likelihood} \label{measurablity}

For the equation \eqref{eq:likelihood} to be well defined we must show that $\mathcal{N}(\mathbf{d},\cdot ;\theta)$ is measurable. For any network $\mathbf{d}$ we can define a function $\mathcal{N}(\mathbf{d},\cdot;\theta)$, which assigns to the realization $\bf{U}=\bf{u}$ a probability weight for the pure strategy which corresponds to $\mathbf{d}$. We now show that there is a measurable function $\mathcal{N}(\mathbf{d},\cdot;\theta)$  satisfying these conditions. 
\\
\\
Observe that every realization $\mathbf{u} \in \mathbb{R}^n$ of the taste shock $\bf{U}$  corresponds to a game in normal/strategic form ($n = N \times (N-1)$ equals the number of random utility shocks in the network formation game). Every game in normal form has a set of Nash equilibria. We  define the set valued function $X:\mathbb{R}^n \rightarrow \{ \sigma | \sigma \subset \Sigma \} $, which assigns to each taste shock $\bf{u} \in \mathbb{R}^n$ the set of Nash equilibria in the game defined by (corresponding to) $\bf{u}$. We next use results from the theory of random sets, as outlined in \cite{Molchanov_RandomSets2017}, to show that there is a measurable equilibrium selection function. 
\\
\\
The theory of random sets analyzes set valued random variables. We want to apply this theory to $X$, the set of NE associated with our game when $\bf{U}=\bf{u}$. In order to do so we have to show measurability of $X$. 
\\
\\
Note the NE are the solutions to a system of $N \times 2^{N-1}$ inequalities, $2^{N-1}$ inequalities for each of the $N$ players. 
\\
\\
We consider one inequality, say, $i$, and claim that the set of mixes strategies which fulfils this inequality, say, $X_i$, is a random set. On either side of the inequality is a convex combination of the payoff entries of the normal form table. The weights of the convex combination are determined by the mixed strategies, each weight is the product of the corresponding mixed strategy weight of each player. The entries of the normal form table depend on the utility function. Importantly, the payoff entries are continuous in the random utility taste shocks. Now consider an arbitrary mixed strategy $\sigma$; the distance function
\[ \rho(\sigma,X_{i})=\inf\{\left.\left\Vert \sigma-x\right\Vert \thinspace\right|\thinspace x\in X_{i}\}\]
is continuous in the taste shock, because the payoff entries of the convex combination are continuous in the taste shock. By statement (iv) of Molchanov's (\citeyear{Molchanov_RandomSets2017}) Fundamental Measurability Theorem for Multifunctions (Theorem 1.3.3 on p. 59) $X_i$ is measurable. \\
\\ 
Next observe that the set of NE coincides with the intersection of the solution sets for each inequality
\[X = \cap_i X_i.\]
By part (iv) of Molchanov's (\citeyear{Molchanov_RandomSets2017}) theorem on the Measurability of Set-Theoretic Operations (Theorem 1.3.25 on p. 69) $X$ is measurable and therefore a random set. 
\\
\\
We know from Nash's existence theorem that for each game there exists a NE. Therefore $X$ is nonempty.  The set of NE is a closed set. We therefore can apply Molchanov's (\citeyear{Molchanov_RandomSets2017}) Fundamental Selection Theorem (Theorem 1.4.1 on p. 77) and find a measurable selection $\xi: \mathbb{R}^n \rightarrow \Sigma$ which assigns to each taste shock $\mathbf{u}$, a NE.
\\
\\
Let $h_{\mathbf{d}}: \Sigma \rightarrow [0,1]$ be the function which assigns to every mixed strategy the probability of $\mathbf{d}$ by multiplying the mixed strategies weights corresponding to $\mathbf{d}$. Since multiplication is a measurable operation $h_\mathbf{d}$ is measurable and $ \mathcal{N}(\mathbf{d},\cdot;\theta):= h_{\mathbf{d}} \circ \xi$ satisfies the desired properties.
\\

\subsection{Derivation of locally best test statistic} \label{app: derivative_proof}

We begin with a high level overview of our argument; with a formal proof immediately following.

One feature of $s_{ij}(\mathbf{d})$, which will prove central to our analysis, is that is has finite range. To see this observe that since the set of all networks $\mathbb{D}_{N}$ is finite, $s_{ij}\left(\mathbf{d}\right)$ also takes only a finite number of values. Let $\mathbb{S}=\left\{ \underline{s},s_{1},\ldots,s_{M},\overline{s}\right\}$ be the set of possible values for $s_{ij}\left(\mathbf{d}\right)$, ordered from smallest to largest.

An example illustrates. If $s_{ij}(\mathbf{d})=d_{ji}$, as occurs when agents prefer reciprocated links, we have $\mathbb{S}=\left\{0,1\right\}$. If $s_{ij}(\mathbf{d})=\sum_{k}d_{ki}d_{kj}$, as when agents prefer supported links, we have $\mathbb{S}=\left\{0,1,\dots,N-2\right\}$. Finiteness of the cardinality of $\mathbb{S}$ (for a given $N$) plays an important role in our analysis, as will become apparent below.

To understand the likelihood \eqref{eq:likelihood} it is helpful to consider a (relatively) simple example. This example will also help in understanding our derivation of the optimal test statistic below. Assume that $s_{ij}(\mathbf{d})=d_{ji}$ such that agents prefer reciprocated links when $\gamma>0$. In this example $s_{ij}(\mathbf{d})$ equals either zero ($j$ does not reciprocate) or one ($j$ does reciprocate). We can use the two elements of $\mathbb{S}$ to partition the real line into what we will call \textit{buckets}:
\begin{equation}\label{eq:buckets_for_reciprocity}
\mathbb{R}=\left(-\infty,\mu_{ij}\right]\cup\left(\mu_{ij},\mu_{ij}+\gamma\right]\cup\left(\mu_{ij}+\gamma,\infty\right).    
\end{equation}
Here $\mu_{ij}=A_{i}+B_{j}+X_{j}'\Lambda_{0}X_{i}$ equals the systematic component of baseline utility generated by arc $ij$. Next consider the realization of $U_{ij}$, the idiosyncratic utility agent $i$ gets when she directs a link to $j$. If $U_{ij}$ falls into the first bucket in \eqref{eq:buckets_for_reciprocity}, then agent $i$ will always direct a link to $j$; irrespective of whether $j$ chooses to direct a link to $i$ or not. If $U_{ij}$ falls into the middle or \textit{inner} bucket, however, then $i$ will direct a link to $j$ only if $j$ reciprocates. Finally, if $U_{ij}$ falls into the last bucket, then $i$ will never direct a link to $j$ regardless of whether $j$ directs a link to $i$ or not. We will call the first and last buckets in \eqref{eq:buckets_for_reciprocity} \textit{outer} buckets.

If both $U_{ij}$ and $U_{ji}$ fall in their respective \textit{inner} buckets, then the $\{i,j\}$ dyad can either take the empty ($D_{ij}=D_{ji}=0$) or reciprocated ($D_{ij}=D_{ji}=1$) configuration in equilibrium. In contrast, if \textit{either} $U_{ij}$ or $U_{ji}$ falls into an outer bucket, then the $\{i,j\}$ dyad's wiring is uniquely determined. For example if $U_{ij}$ is in the first outer bucket and $U_{ji}$ is in the inner bucket, then the $\{i,j\}$ dyad will take the reciprocated form with probability one. It is a strictly dominant strategy for $i$ to direct an link to $j$ in this case and a best response for $j$ to reciprocate.

For $\mathbf{U}=\mathbf{u}$ and $\theta=\theta_0$, let $J(\mathbf{u};\theta_0)\leq{\binom{N}{2}}$ equal the number of dyads $\{i,j\}$, where both $u_{ij}$ and $u_{ji}$ fall into their inner bucket. For each of these dyads both the empty and reciprocated configuration is an equilibrium outcome. There are therefore $2^{J(\mathbf{u};\theta_0)}$ equilibrium networks in this case; the $\mathcal{N}(\mathbf{d},\mathbf{u};\theta_0)$ function would assign some probability between zero and one to each of these $2^{J(\mathbf{u};\theta_0)}$ networks (summing to one in total).

Recall that $\mathbb{S}=\left\{ \underline{s},s_{1},\ldots,s_{M},\overline{s}\right\}$ equals the possible values of $s_{ij}(\mathbf{d})$, arranged from smallest to largest. We can use these support points to partition $\mathbb{R}$ into a set of intervals $\mathbb{B}$:
\begin{multline}\label{eq:buckets}
\mathbb{R=}	\left(-\infty,\mu_{ij}+\gamma\underline{s}\right]\cup\left(\mu_{ij}+\gamma\underline{s},\mu_{ij}+\gamma s_{1}\right]\cup \\
	\cdots\cup\left(\mu_{ij}+\gamma s_{M},\mu_{ij}+\gamma\overline{s}\right]\cup\left(\mu_{ij}+\gamma\overline{s},\infty\right).    
\end{multline}
The elements of $\mathbb{B}$, called \textit{buckets}, correspond to the intervals listed in \eqref{eq:buckets}. In principle we should write $\mathbb{B}_{ij}$ instead of $\mathbb{B}$, reflecting the dependence of the bucket definitions on the value of $\mu_{ij}$, the systematic non-strategic utility associated with an $i$-to-$j$ link. However, since this dependence is not essential to any of the arguments that follow we leave it implicit. Note that the cardinality of $\mathbb{B}$ does not depend on $\mu_{ij}$, but instead equals $|\mathbb{S}|+1$.

Agent $i$'s linking behavior vis-a-vis $j$ depends on which bucket $U_{ij}$ falls into. For $B\in\mathbb{B}$, if $U_{ij}\in B$, then we say $U_{ij}$ is in, or falls into, bucket $B$. The first and last buckets, respectively $\left(-\infty,\mu_{ij}+\gamma\underline{s}\right]$ and $\left(\mu_{ij}+\gamma\overline{s},\infty\right)$, play an important role in our argument. We call these two buckets \textit{outer buckets}. The rest of the buckets we call \textit{inner buckets}.

If $U_{ij}$ falls into one of these outer buckets then player $i$ has a pure strategy for $d_{ij}$ which is strictly dominating. Specifically if $U_{ij}$ falls into the lowest bucket, then $i$ will direct an link to $j$ regardless of what actions are taken by the other agents in the network. The marginal utility generated by link $ij$ is so large that it remains positive across all possible configurations of the rest of the network; hence $i$ always chooses to direct an link to $j$.

If, instead, $U_{ij}$ falls into the highest bucket, then $i$ will never direct an link to $j$. In this case the marginal utility associated with link $ij$ is so low that it remains negative across all possible configurations of the rest of the network; hence $i$ never chooses to direct a link to $j$.

Finally, if $U_{ij}$ falls into an inner bucket, say $\left(\mu_{ij}+\gamma s_{m},\mu_{ij}+\gamma s_{m+1}\right]$, then agent $i$'s optimal choice for $d_{ij}$ is contingent upon the linking behavior of other agents. If other agents' link actions are such that $s_{ij}(\mathbf{d}) \geq s_m$, then it is a best response for $i$ to link with $j$, but not otherwise.

The vector of idiosyncratic taste shocks, $\mathbf{U}$ contains $n=N(N-1)$ elements; one for each possible arc. Let the boldface subscripts $\mathbf{i=1,2,}\ldots$ index these potential arcs in arbitrary order (e.g., $\mathbf{i}$ maps to some $ij$ and vice-versa). Let $\mathbf{b}\in\mathbb{B}^{n}\overset{def}{\equiv}\mathbb{B}\times\cdots\times\mathbb{B}$ and $\mathbf{U}=\left(U_{\mathbf{1}},\ldots,U_{\mathbf{n}}\right)'$; we have that $\mathbf{U}\in\mathbf{b}$ for $\mathbf{b}\in\mathbb{B}^{n}$ so that each element of $\mathbf{u}$ falls into a bucket.

With the above notation established we can rewrite the likelihood \eqref{eq:likelihood} as:

\begin{equation}\label{eq:likelihood_bucket_decomp}
P\left(\mathbf{d};\theta,\mathcal{N}\right)=\sum_{\mathbf{b}\in\mathbb{B}^{n}}\int_{\mathbf{u}\in\mathbf{b}}\mathcal{N}\left(\mathbf{d},\mathbf{u};\theta\right)f_{\mathbf{U}}\left(\mathbf{u}\right)\mathrm{d}\mathbf{u}
\end{equation}
Expression \eqref{eq:likelihood_bucket_decomp} suggests a derivation by cases approach to finding $\left.\frac{\partial P\left(\mathbf{d};\theta,\mathcal{N}\right)}{\partial\gamma}\right|_{\gamma=0}$. Fortunately a brute force exhaustive approach is not required because it is possible to show that most of the summands in \eqref{eq:likelihood_bucket_decomp} do not influence the derivative at $\gamma=0$.

Let $\tilde{\mathbb{B}}^{n}$ be the set of bucket configurations with at least two inner buckets. If at least two elements of $\mathbf{U}$ fall in inner buckets, then we have that $\mathbf{U}\in\mathbf{b}$ with $\mathbf{b}\in\tilde{\mathbb{B}}^{n}$. If, instead, at most one element of $\mathbf{U}$ falls in an inner bucket, then we have that $\mathbf{U}\in\mathbf{b}$ with
$\mathbf{b}\in\mathbf{\mathbb{B}}^{n}\setminus\tilde{\mathbb{B}}^{n}$. This set-up gives the likelihood decomposition:
\begin{equation}
    P\left(\mathbf{d};\theta,\mathcal{N}\right)=\tilde{P}\left(\mathbf{d};\theta,\mathcal{N}\right)+Q\left(\mathbf{d};\theta,\mathcal{N}\right),
\end{equation}
with
\begin{align}
\tilde{P}\left(\mathbf{d};\theta,\mathcal{N}\right) &=	\sum_{\mathbf{b}\in\mathbf{\mathbb{B}}^{n}\setminus\tilde{\mathbb{B}}^{n}}\int_{\mathbf{u}\in\mathbf{b}}\mathcal{N}\left(\mathbf{d},\mathbf{u};\theta\right)f_{\mathbf{U}}\left(\mathbf{u}\right)\mathrm{d}\mathbf{u} \\
Q\left(\mathbf{d};\theta,\mathcal{N}\right) &=	\sum_{\mathbf{b}\in\tilde{\mathbb{B}}^{n}}\int_{\mathbf{u}\in\mathbf{b}}\mathcal{N}\left(\mathbf{d},\mathbf{u};\theta\right)f_{\mathbf{U}}\left(\mathbf{u}\right)\mathrm{d}\mathbf{u}.
\end{align}

To proove Theorem \ref{thm: derivative} we show that for $\gamma\rightarrow0$
\begin{equation}
P\left(\mathbf{d};\theta,\mathcal{N}\right)=\tilde{P}\left(\mathbf{d};\theta,\mathcal{N}\right)+\mathcal{O}\left(\gamma^{2}\right).    
\end{equation}
Intuitively, this follows from the fact that the chance that two or more elements of $\mathbf{U}$ fall in inner buckets is negligible when $\gamma$ is close to zero (because most of the probability mass for $U_{ij}$ is contained in the two outer buckets when strategic interactions are small). Hence when calculating the optimal test statistic we are free to focus on the cases where either all, or all but one, of the elements of $\mathbf{U}$ fall in outer buckets. We can then show that
\begin{equation}\label{eq:derivative_equality}
    \left.\frac{\partial P\left(\mathbf{d};\theta,\mathcal{N}\right)}{\partial\gamma}\right|_{\gamma=0}=\left.\frac{\partial\tilde{P}\left(\mathbf{d};\theta,\mathcal{N}\right)}{\partial\gamma}\right|_{\gamma=0}.
\end{equation}
Hence to derive the form of $\left.\frac{\partial P\left(\mathbf{d};\theta,\mathcal{N}\right)}{\partial\gamma}\right|_{\gamma=0}$ we need only calculate $\left.\frac{\partial\tilde{P}\left(\mathbf{d};\theta,\mathcal{N}\right)}{\partial\gamma}\right|_{\gamma=0}.$ This calculation is non-trivial, but doable. Details of this calculation are provided in the proof.

\subsubsection*{Preliminary results}
\begin{lemma}\label{lemma:derivative_helper}
Any differentiable function $f\in\mathcal{O}\left(\gamma^{2}\right)$ with $f\left(0\right)=0$ has a derivative of zero at point zero.
\end{lemma}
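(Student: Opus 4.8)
The plan is to compute $f'(0)$ directly from the definition of the derivative and to show the relevant limit vanishes by sandwiching it. Since $f(0)=0$, the derivative at the origin is
\begin{equation*}
f'(0)=\lim_{\gamma\to0}\frac{f(\gamma)-f(0)}{\gamma}=\lim_{\gamma\to0}\frac{f(\gamma)}{\gamma},
\end{equation*}
and differentiability of $f$ guarantees that this limit exists. It therefore suffices to show that the limit equals zero.

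First I would unpack the hypothesis $f\in\mathcal{O}\left(\gamma^{2}\right)$. By definition this means there exist constants $C>0$ and $\delta>0$ such that $\left|f(\gamma)\right|\le C\gamma^{2}$ for all $\left|\gamma\right|<\delta$. Dividing by $\left|\gamma\right|$ for $0<\left|\gamma\right|<\delta$ gives the bound
\begin{equation*}
\left|\frac{f(\gamma)}{\gamma}\right|\le\frac{C\gamma^{2}}{\left|\gamma\right|}=C\left|\gamma\right|.
\end{equation*}

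Finally I would let $\gamma\to0$: the right-hand side $C\left|\gamma\right|$ tends to zero, so by the squeeze theorem the difference quotient $f(\gamma)/\gamma$ also tends to zero. Combining this with the first display yields $f'(0)=0$, as claimed.

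There is essentially no hard step here; the only point requiring care is the precise meaning of $\mathcal{O}\left(\gamma^{2}\right)$ -- namely that it is an asymptotic statement as $\gamma\to0$ and is two-sided, so the bound holds on a punctured neighborhood of the origin rather than merely for positive $\gamma$. An equivalent and equally short route, which I could note as an alternative, is to invoke differentiability to write $f(\gamma)=f'(0)\gamma+o(\gamma)$; were $f'(0)\neq0$, the leading term would be of exact order $\gamma$, contradicting membership in $\mathcal{O}\left(\gamma^{2}\right)$.
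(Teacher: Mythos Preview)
Your proof is correct and follows essentially the same approach as the paper: both compute the derivative directly from the difference quotient, use $f(0)=0$ to simplify, and then bound $|f(\gamma)/\gamma|$ by $C|\gamma|$ using the $\mathcal{O}(\gamma^{2})$ hypothesis. If anything, your version is slightly cleaner in its use of absolute values and the explicit appeal to the squeeze theorem.
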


\begin{proof}
For $f\in\mathcal{O}\left(\gamma^{2}\right)$ we have, for some $C>0$ and $\epsilon>0$, that
\begin{equation}
\left|f\left(\gamma\right)\right|<C\gamma^{2}    
\end{equation}
for all $\gamma\in\left[-\epsilon,\epsilon\right]$. The derivative of $f$ at $\gamma=0$ equals
\begin{equation}
f'\left(0\right)
=\underset{\gamma\rightarrow0}{\lim}\frac{f\left(\gamma\right)-f\left(0\right)}{\gamma}
=\underset{\gamma\rightarrow0}{\lim}\frac{f\left(\gamma\right)}{\gamma},    
\end{equation}
with the second equality because $f\left(0\right)=0$. As $\gamma\rightarrow0$, we will have $\gamma<\epsilon$ so that
\begin{equation}
f'\left(0\right)
=\underset{\gamma\rightarrow0}{\lim}\frac{f\left(\gamma\right)}{\gamma}\leq\underset{\gamma\rightarrow0}{\lim}\frac{C\gamma^{2}}{\gamma}
=\underset{\gamma\rightarrow0}{\lim}C\gamma    
\end{equation}
which goes to zero as $\gamma\rightarrow0$ as claimed.
\end{proof}


\subsubsection*{Proof of Theorem \ref{thm: derivative}} 

We begin with the likelihood decomposition \eqref{eq:likelihood_bucket_decomp} given above. The number of summands in \eqref{eq:likelihood_bucket_decomp} depends on the partition that $s_{ij}(\mathbf{d})$ induces on $\mathbb{R}$. For a positive $\gamma$, the number neither depends on the exact value of $\gamma$, nor on the other covariates and parameters. Intuitively, as long as $\gamma$ is positive, there is a positive probability that $\mathbf{U}$ falls in any combination of buckets. The number of summands in \eqref{eq:likelihood_bucket_decomp} is typically large. 
The buckets $\mathbf{b}$ of $\mathbb{B}^{n}$ and the function $\mathcal{N}$ depend on $\gamma$. 

We have that
\begin{align*}
\frac{\partial P\left(\mathbf{d};\theta,\mathcal{N}\right)}{\partial\gamma}&=\frac{\partial}{\partial\gamma}\left\{ \sum_{\mathbf{b}\in\mathbb{B}^{n}}\int_{\mathbf{u}\in\mathbf{b}}\mathcal{N}\left(\mathbf{d},\mathbf{u};\theta\right)f_{\mathbf{U}}\left(\mathbf{u}\right)\mathrm{d}\mathbf{u}\right\} \\
&=\sum_{\mathbf{b}\in\mathbb{B}^{n}}\frac{\partial}{\partial\gamma}\int_{\mathbf{u}\in\mathbf{b}}\mathcal{N}\left(\mathbf{d},\mathbf{u};\theta\right)f_{\mathbf{U}}\left(\mathbf{u}\right)\mathrm{d}\mathbf{u},
\end{align*}
The switching of summation and derivative operator is possible because the number of summands does not depend on $\gamma$. We could try to take the derivative of each summands integral boundaries and of  $\mathcal{N}(\mathbf{d},.;\theta)$.  But there is no need to boil the ocean, because regardless of $\mathcal{N}(\mathbf{d},.;\theta)$ most of the summands are 0.
To show this we consider three sets of summands.

\subsubsection*{Case 1: more than two buckets in $\mathbf{B}$ are inner buckets}
Recall that the boldface subscripts $\mathbf{i=1,2,}\ldots$ index the $n=N\left(N-1\right)$ directed dyads in arbitrary order. Consider a set of buckets $\mathbf{b}$ where two or more of them are inner buckets. Without loss of generality assume that the $L\geq2$ inner buckets correspond to $b_{\mathbf{1}},\ldots,b_{\mathbf{L}}$ of $\mathbf{b}=\left(b_{\mathbf{1}},\ldots,b_{\mathbf{n}}\right)$. The shape of the $\mathbf{l}^{th}$ bucket is $\left(\gamma\underline{s}_{\mathbf{l}},\gamma\bar{s}_{\mathbf{l}}\right]$ with $\underline{s}_{\mathbf{l}}<\bar{s}_{\mathbf{l}}$ coinciding with the bucket borders induced by the precise form of strategic interaction specified under the alternative. We normalize the dyad-specific systematic utility component $\mu_{ij}=0$ without loss of generality.

Recall that $\tilde{\mathbb{B}}^{n}$ is the set of bucket configurations with two or more inner buckets. For any $\mathbf{b}\in\tilde{\mathbb{B}}^{n}$ we can derive the upper bound:

\begin{align*}
\int_{\mathbf{u}\in\mathbf{b}}\mathcal{N}\left(\mathbf{d},\mathbf{u};\theta\right)f_{\mathbf{U}}\left(\mathbf{u}\right)\mathrm{d}\mathbf{u}	    &=\int_{\mathbf{u}\in\mathbf{b}}\mathcal{N}\left(\mathbf{d},\mathbf{u};\theta\right)\left[\prod_{\mathbf{i}}f_{U}\left(u_{\mathbf{i}}\right)\right]\mathrm{d}\mathbf{u} \\
	&\leq\int_{\gamma\underline{s}_{\mathbf{1}}}^{\gamma\overline{s}_{\mathbf{1}}}f_{U}\left(u_{\mathbf{1}}\right)\times\cdots\times\int_{\gamma\underline{s}_{\mathbf{L}}}^{\gamma\overline{s}_{\mathbf{L}}}f_{U}\left(u_{\mathbf{L}}\right)\int_{\mathbf{u}_{-L}\in\mathbf{b}_{-L}}f_{\mathbf{U}_{-L}}\left(\mathbf{u}_{-L}\right)\mathrm{d}\mathbf{u} \\
	&<\int_{\gamma\underline{s}_{\mathbf{1}}}^{\gamma\overline{s}_{\mathbf{1}}}f_{U}\left(u_{\mathbf{1}}\right)\times\cdots\times\int_{\gamma\underline{s}_{\mathbf{L}}}^{\gamma\overline{s}_{\mathbf{L}}}f_{U}\left(u_{\mathbf{L}}\right)\mathrm{d}u_{\mathbf{1}}\cdots\mathrm{d}u_{\mathbf{L}} \\
	&<\int_{\gamma\underline{s}_{\mathbf{1}}}^{\gamma\overline{s}_{\mathbf{1}}}1\times\cdots\times\int_{\gamma\underline{s}_{\mathbf{L}}}^{\gamma\overline{s}_{\mathbf{L}}}1\mathrm{d}u_{\mathbf{1}}\cdots\mathrm{d}u_{\mathbf{L}} \\
	&=\gamma^{L}\left(\overline{s}_{\mathbf{1}}-\underline{s}_{\mathbf{1}}\right)\times\cdots\times\left(\overline{s}_{\mathbf{L}}-\underline{s}_{\mathbf{L}}\right)    
\end{align*}
where $\mathbf{u}_{-L}$ denotes the vector $\mathbf{u}$ after removal of its first $L$ components and similarly for $\mathbf{b}_{-L}$. The first equality follows from independence of the components of $\mathbf{u}$, the second (weak) inequality from the fact that $\mathcal{N}\left(\mathbf{d},\mathbf{u};\theta\right)\leq1$ for all $\mathbf{u}\in\mathbb{U}$. The third (strict) inequality follows because $f_{\mathbf{U}_{-L}}\left(\mathbf{u}_{-L}\right)$ is a density and the integration is not over all of $\mathbb{R}^{n-L}$. The fourth (strict) inequality arises because when $f_{U}\left(u\right)$ is the logistic density we have that $f_{U}\left(u\right)=F_{U}\left(u\right)\left[1-F_{U}\left(u\right)\right]<1$ for all $u$ on a compact interval of the real line. We conclude that any summand where $\mathbf{b}$ has two or more inner buckets is $\mathcal{O}\left(\gamma^{2}\right)$ for $\gamma\rightarrow0$.

We have, directly from this argument, that $Q\left(\mathbf{d};\theta,\mathcal{N}\right)\in\mathcal{O}\left(\gamma^{2}\right)$ and furthermore that $Q\left(\mathbf{d};\left(0,\delta'\right)',\mathcal{N}\right)=0$ (since inner buckets have zero probability when $\gamma=0$). Hence, by Lemma \ref{lemma:derivative_helper}, we have that
\begin{equation*}
    \left.\frac{\partial Q\left(\mathbf{d};\theta,\mathcal{N}\right)}{\partial\gamma}\right|_{\gamma=0}=0.
\end{equation*}
This is enough to show equation \eqref{eq:derivative_equality} above. This simplification is essential to the overall result, as it allows us to proceed without knowing any details about the form of the equilibrium selection rule $\mathcal{N}$ when $\mathbf{U}$ takes values which admit multiple NE networks.

\subsubsection*{Case 2: No bucket in $\mathbf{b}$ is an inner bucket (i.e., all buckets are outer buckets)}
If all components of $\mathbf{u}$ fall in either their first or last buckets, then the network is uniquely defined. This occurs because agent-level preferences for forming (or not forming) a link are so strong that they do not depend on the presence or absence of other links in the network. Each agent i either prefers to send a link to j, regardless of the actions taken by others, or does not wish to send a link. Put differently, each agent has a pure link formation strategy which is strictly dominating in such games; therefore $\mathcal{N}\left(\mathbf{d},\mathbf{u};\theta\right)$ is either zero or one.

For a particular network $\mathbf{d}$, $\mathcal{N}\left(\mathbf{d},\mathbf{u};\theta\right)=1$ if, for all (directed) dyads $ij$ such that $d_{ij}=1$, we have that $u_{ij}$ falls in the first bucket and for all dyads $ij$ such that $d_{ij}=0$ we have that $u_{ij}$ falls in the last bucket. These considerations give the equality
\begin{align}\label{eq:case_two_derivative}
\int_{\mathbf{u}\in\mathbf{b}}\mathcal{N}\left(\mathbf{d},\mathbf{u};\theta\right)f_{\mathbf{U}}\left(\mathbf{u}\right)\mathrm{d}\mathbf{u}	
    &= \prod_{i\neq j}\left[\int_{-\infty}^{\mu_{ij}+\gamma\underline{s}}f_{U}\left(u_{ij}\right)\mathrm{d}u_{ij}\right]^{d_{ij}}\left[\int_{\mu_{ij}+\gamma\bar{s}}^{\infty}    f_{U}\left(u_{ij}\right)\mathrm{d}u_{ij}\right]^{1-d_{ij}} \\
	&= \prod_{i\neq j}\left[F_{U}\left(\mu_{ij}+\gamma\underline{s}\right)\right]^{d_{ij}}\left[1-F_{U}\left(\mu_{ij}+\gamma\bar{s}\right)\right]^{1-d_{ij}}    
\end{align}
Taking logarithms of the expression above, differentiating with respect to $\gamma$, evaluating at $\gamma=0$, and multiplying by $P_{0}\left(\mathbf{d};\delta\right)$ yields a derivative for summands where all buckets in $\mathbf{b}$ are outer buckets of
\begin{equation}
P_{0}\left(\mathbf{d};\delta\right)\sum_{i\neq j}\left[d_{ij}\underline{s}\frac{f_{U}\left(\mu_{ij}\right)}{F_{U}\left(\mu_{ij}\right)}-\left(1-d_{ij}\right)\bar{s}\frac{f_{U}\left(\mu_{ij}\right)}{1-F_{U}\left(\mu_{ij}\right)}\right].
\end{equation}

\subsubsection*{Case 3: Exactly one bucket in $\mathbf{b}$ is an inner bucket}

If all but one component of $\mathbf{u}$ falls into its first or last bucket, then the resulting network is uniquely defined except for the presence or absence of one arc, say, $ij$. For any such draw of $\mathbf{u}$, since all other links are formed according to a strictly dominating strategy, player $i$ will either benefit from forming the $ij$ arc or not. Hence $\mathcal{N}\left(\mathbf{d},\mathbf{u};\theta\right)$ is also either zero or one in this case.

For a particular network $\mathbf{d}$, $\mathcal{N}\left(\mathbf{d},\mathbf{u};\theta\right)$ will equal one if two conditions hold. First, for all directed dyads $kl\neq ij$ such that $d_{kl}=1$ we have that $u_{kl}$ falls in the first bucket and for all dyads $kl\neq ij$ such that $d_{kl}=0$ we have that $u_{kl}$ falls in the last bucket. Second, for the dyad $ij$ with $u_{ij}$ falling in an inner bucket, we require that if $u_{ij}\in\left[\mu_{ij}+\gamma\underline{s},\mu_{kl}+\gamma s_{ij}\left(\mathbf{d}\right)\right)$ that $d_{ij}=1$, while if $u_{ij}=\left[\mu_{kl}+\gamma s_{ij}\left(\mathbf{d}\right),\mu_{ij}+\gamma\bar{s}\right)$ we require that $d_{ij}=0$. The overall likelihood contribution for this case therefore equals:
\begin{align*}
\int_{\mathbf{u}\in\mathbf{b}}\mathcal{N}\left(\mathbf{d},\mathbf{u};\theta\right)f_{\mathbf{U}}\left(\mathbf{u}\right)\mathrm{d}\mathbf{u}
   =&	\prod_{kl\neq ij}\left[\int_{-\infty}^{\mu_{kl}+\gamma\underline{s}}f_{U}\left(u_{kl}\right)\mathrm{d}u_{kl}\right]^{d_{kl}}\left[\int_{\mu_{kl}+\gamma\bar{s}}^{\infty}f_{U}\left(u_{kl}\right)\mathrm{d}u_{kl}\right]^{1-d_{kl}} \\
	&\times\left[\int_{\mu_{ij}+\gamma\underline{s}}^{\mu_{ij}+\gamma s_{ij}\left(\mathbf{d}\right)}f_{U}\left(u_{ij}\right)\mathrm{d}u_{ij}\right]^{d_{ij}}\left[\int_{\mu_{ij}+\gamma s_{ij}\left(\mathbf{d}\right)}^{\mu_{ij}+\gamma\bar{s}}f_{U}\left(u_{ij}\right)\mathrm{d}u_{ij}\right]^{1-d_{ij}} \\
   =&	\prod_{kl\neq ij}\left[F_{U}\left(\mu_{kl}+\gamma\underline{s}\right)\right]^{d_{kl}}\left[1-F_{U}\left(\mu_{kl}+\gamma\bar{s}\right)\right]^{1-d_{kl}} \\
	&\times\left[F_{U}\left(\mu_{ij}+\gamma s_{ij}\left(\mathbf{d}\right)\right)-F_{U}\left(\mu_{ij}+\gamma\underline{s}\right)\right]^{d_{ij}} \\
	&\times\left[F_{U}\left(\mu_{ij}+\gamma\bar{s}\right)-F_{U}\left(\mu_{ij}+\gamma s_{ij}\left(\mathbf{d}\right)\right)\right]^{1-d_{ij}}.
\end{align*}
Recall that $s_{ij}\left(\mathbf{d}\right)$ is the ``strategic" part of the marginal utility agent $i$ gets if he forms $ij$. Because the last two terms in $\left[\cdot\right]$ in the expression above are zero at $\gamma=0$ we only need to consider their derivative (by the product rule the other term equals zero at $\gamma=0$). Differentiating the last two terms with respect to $\gamma$ (and multiplying by the balance of preceding terms) yields	
\begin{align*}
    &\prod_{kl\neq ij}\left[F_{U}\left(\mu_{kl}+\gamma\underline{s}\right)\right]^{d_{kl}}\left[1-F_{U}\left(\mu_{kl}+\gamma\bar{s}\right)\right]^{1-d_{kl}} \\
	&\times\left[s_{ij}\left(\mathbf{d}\right)f_{U}\left(\mu_{ij}+\gamma s_{ij}\left(\mathbf{d}\right)\right)-\underline{s}f_{U}\left(\mu_{ij}+\gamma\underline{s}\right)\right]^{d_{ij}} \\
	&\times\left[\bar{s}f_{U}\left(\mu_{ij}+\gamma\bar{s}\right)-s_{ij}\left(\mathbf{d}+ij\right)f_{U}\left(\mu_{ij}+\gamma s_{ij}\left(\mathbf{d}+ij\right)\right)\right]^{1-d_{ij}} \\
   =&	\prod_{i\neq j}\left[F_{U}\left(\mu_{ij}+\gamma\underline{s}\right)\right]^{d_{ij}}\left[1-F_{U}\left(\mu_{ij}+\gamma\bar{s}\right)\right]^{1-d_{ij}} \\
	&\times\left[s_{ij}\left(\mathbf{d}\right)\frac{f_{U}\left(\mu_{ij}+\gamma s_{ij}\left(\mathbf{d}\right)\right)}{F_{U}\left(\mu_{ij}+\gamma\underline{s}\right)}-\underline{s}\frac{f_{U}\left(\mu_{ij}+\gamma\underline{s}\right)}{F_{U}\left(\mu_{ij}+\gamma\underline{s}\right)}\right]^{d_{ij}} \\
	&\times\left[\bar{s}\frac{f_{U}\left(\mu_{ij}+\gamma\bar{s}\right)}{F_{U}\left(\mu_{ij}+\gamma\bar{s}\right)}-s_{ij}\left(\mathbf{d}\right)\frac{f_{U}\left(\mu_{ij}+\gamma s_{ij}\left(\mathbf{d}\right)\right)}{F_{U}\left(\mu_{ij}+\gamma\bar{s}\right)}\right]^{1-d_{ij}}.
\end{align*}
Summing this expression over all potential arcs (and evaluating at $\gamma=0$) gives a total contribution of “one inner bucket in $\mathbf{b}$” summands to the derivative of:
\begin{multline}\label{eq:case_three_derivative}
P_{0}\left(\mathbf{d};\delta\right)\sum_{i\neq j}d_{ij}\left[s_{ij}\left(\mathbf{d}\right)\frac{f_{U}\left(\mu_{ij}\right)}{F_{U}\left(\mu_{ij}\right)}-\underline{s}\frac{f_{U}\left(\mu_{ij}\right)}{F_{U}\left(\mu_{ij}\right)}\right] \\
+\left(1-d_{ij}\right)\left[\bar{s}\frac{f_{U}\left(\mu_{ij}\right)}{F_{U}\left(\mu_{ij}\right)}-s_{ij}\left(\mathbf{d}\right)\frac{f_{U}\left(\mu_{ij}\right)}{F_{U}\left(\mu_{ij}\right)}\right].
\end{multline}
Summing \eqref{eq:case_two_derivative} and \eqref{eq:case_three_derivative} then gives the expression in the statement of Theorem \ref{thm: derivative}. Using similar methods we can show that $P\left(\mathbf{d};\theta\right)$ can be differentiated with respect to $\gamma$ twice as claimed.

\subsection{MCMC Proofs}
\subsubsection*{Proof of Lemma \ref{lemma:symetrie}} \label{proof_symetrie}
Let $A_{\mathbf{D},\mathbf{D}'}$ be the set of arcs form the node $v_\mathbf{D}$ to the node $v_{\mathbf{D}'}$. We construct a bijection $\varphi: A_{\mathbf{D},\mathbf{D}'} \rightarrow A_{\mathbf{D}',\mathbf{D}}$. Then we show that the probability of an arc $p(a)$ is equal to $p(\varphi(a))$. If that is proven, the probability of a transition form $v_\mathbf{D}$ to $v_{\mathbf{D}'}$ is 
\begin{align*}
\sum_{a \in A_{\mathbf{D}',\mathbf{D}}} p(a) & =  \sum_{a \in A_{\mathbf{D}',\mathbf{D}}} p(\varphi (a))\\
&=  \sum_{\varphi^{-1} (a') \in A_{\mathbf{D}',\mathbf{D}}} p(a')\\
&=  \sum_{a' \in \varphi(A_{\mathbf{D}',\mathbf{D}})} p(a')\\
&=  \sum_{a' \in A_{\mathbf{D},\mathbf{D}'}} p(a')
\end{align*}
which is the probability for a transition from $v_{\mathbf{D}'}$ to $v_\mathbf{D}$.\\
For the construction of the bijection consider that every arc $A_{\mathbf{D},\mathbf{D}'}$ corresponds uniquely to a schlaufen-sequence $\mathcal{R}=(R_1,..,R_{h})$. Let $R_k = (i_1, .., i_m,..,i_l)$ with $i_m$ the start of the cycle (if there is no cycle in $R$, we set $R=\bar{R}$). We define $\bar R_k = (i_1, .., i_m, i_{l-1}, .. i_{m+1}, i_l)$ and $\bar{\mathcal{R}} = (\bar R_1,..,\bar R_{h})$.

Note that the $R_1,..,R_{h}$ are link disjoint and as soon as the cycle of $R_k$ is switched $\bar{ R}_k$ is a schlaufe. The violation matrix of $\bar R_k$ is the negative violation matrix of $R_k$. This implies that if $\mathcal{R}$ is a feasible schlaufen sequence for $G$ which defined an arc in  $A_{\mathbf{D},\mathbf{D}'}$ then $\bar{ \mathcal{R}}$ is a feasible schlaufen-sequence for $\mathbf{D}'$ and defines an arc $A_{\mathbf{D}',\mathbf{D}}$.

We define now $\varphi$ as the function which maps the arc in $ A_{\mathbf{D},\mathbf{D}'}$ with schlaufen sequence  $\mathcal{R}$ to the arc in $A_{\mathbf{D}',\mathbf{D}}$ with schlaufen sequence $ \bar{\mathcal{R}}$. By construction $\varphi$ is injective, which implies  $ |A_{\mathbf{D},\mathbf{D}'}| \leq | A_{\mathbf{D}',\mathbf{D}}|$. By symmetry we conclude  $ |A_{\mathbf{D}',\mathbf{D}}| \geq | A_{\mathbf{D},\mathbf{D}'}|$, which implies $ |A_{\mathbf{D}',\mathbf{D}}| = | A_{\mathbf{D},\mathbf{D}'}|$ and that $\varphi$ is bijective.

It remains to show that the probability of an arc $p(a)$ is equal to $p(\varphi(a))$. For any node there are equally many feasible active / passive outlinks in $\mathbf{D}$ as in $\mathbf{D}'$. If for a node one outlink is marked due to an link in $R_k$ then for the same node one outlink is marked in $\bar{R}_k$. Therefore $r_{G_k'}(i)$ is equal to  $r_{G_k}(i)$ for an active as well as a passive step. Looking at equation \eqref{eq:schlaufen} the $p_{G_k}(R_k)$ is only different from $p_{G_k'}(\bar{R}_k)$ in the numbering of the factors. But in a cycle of a schlaufe the start node $i_m$ and the end node $i_l$ are such that $m-l \mod 2 = 0$. The reordering leaves even indexes even and odd indexes odd. Therefore $p_G(R_k)$ = $p_G'(\bar{R}_k)$. From equation \eqref{eq:prob} it follows directly that $p_G(\mathcal{R})=p_{G'}(\bar{\mathcal{R}})$ which completes the proof.

\subsubsection*{Proof of Lemma \ref{lemma:connected}}\label{connected}

The symmetric difference of two realizations of $\mathbb{D}_{s,m}$, which we denote by $\mathbf{D}$ and $\mathbf{D}'$ is a set of alternating cycles. Cycles are in particular schlaufen. We order them arbitrarily as $(R_1, ..,R_h)$.
The sum of the violation matrices is 0. Therefore $(R_1, ..,R_h)$ is either a feasible schlaufen-sequence or a concatenation of feasible schlaufen-sequences. In the first case there is an arc from $v_\mathbf{D}$ to $v_{\mathbf{D}'}$. 
In the second case, all the feasible schlaufen-sequence define an arc to a new node, resulting in a directed path starting at $v_\mathbf{D}$ and ending in $v_{\mathbf{D}'}$. Thus between any two vertexes in $\Phi$ there is a directed path.

\subsubsection*{Proof of Theorem \ref{theorem:correctness}}\label{correctness1}
Every time the Algorithm \ref{alg: markov_draw} arrives at step 2 a new arc of the state graph is crossed. At step 2 the algorithm follows a loop arc of type 1 with probability $q$. Otherwise it proceeds to step 3. In step 3 a schlaufen-sequence $\mathcal{R}$ is constructed. If the violation matrices of this schlaufen-sequence sum up to 0, the its cycles are switched and an arc of type 2 is followed with probability $p_G(\mathcal{R}$). If the violation matrices do not sum up to 0, then an arc of type 3 is followed. All the cases in which the violation matrices don't sum up to 0 correspond to the residual probability. Therefore Algorithm \ref{alg: markov_draw} is a a random walk on the state graph $\Phi$.

According to Lemma \ref{lemma:symetrie} $\Phi$ is (weighted) symmetric and according to Lemma \ref{lemma:connected} it is strongly connected. Due to the self-loops, $\Phi$ is not bipartite. Therefore the limit distribution is uniform.

\subsection{Monte Carlo experiments}\label{app: monte_carlo}
In this appendix we summarize the results of a small number of Monte Carlo experiments. These experiments illustrate the excellent size control and good power properties of our tests. For the Monte Carlo experiments we work with the general utility function
introduced in the main paper. We assume that $A_{i}\in\mathbb{A}\overset{def}{\equiv}\left\{ \alpha_{L},\alpha_{H}\right\} $,
$B_{i}\in\mathbb{B}\overset{def}{\equiv}\left\{ \beta_{L},\beta_{H}\right\}$ and $X_{i}\in\mathbb{X}\overset{def}{\equiv}\left\{ 0,1\right\}$. We
assume that each support point in $\mathbb{A}\times\mathbb{B}\times\mathbb{X}$ occurs with equal probability (i.e., with probability equal to $\frac{1}{8}$).

Observe that their are four types of sending agents: ($A_{i}=\alpha_{L},X_{i}=0$), ($A_{i}=\alpha_{H},X_{i}=0$), ($A_{i}=\alpha_{L},X_{i}=1$) and $(A_{i}=\alpha_{H},X_{i}=1)$.
Similarly there are four types of receiving agents. The null model is therefore fully described by $16=4\times4$ linking probabilities. These probabilities are, in turn, a function of the $8$ model parameters. We set these parameters as follows: $-\alpha_{L}=\alpha_{H} =0.7$,  $-\beta_{L}=\beta_{H}=0.5$, $\lambda_{00}=\lambda_{11}=-2$ and $\lambda_{10}=\lambda_{01}=-4$. This yields a null model expected graph density of about 0.10.

Our parameter choices generate meaningful degree heterogeneity and homophily under the null. Across $1,000$ Monte Carlo simulations with $N=100$, average network density was 0.1, average reciprocity was 0.1, and the \emph{average} standard deviation of in- and out-degree, was 7.5.

We set the network benefit function to $g_{i}\left(\mathbf{d}\right)=\sum_{j}d_{ij}\left(\sum_{k}d_{ik}d_{kj}\right)$ as is appropriate when agents prefer transitive ties. To simulate a network under the alternative we draw $\mathbf{U}$ and then, starting with an empty adjacency matrix, iterate until all links with positive marginal utility are present, and all those with negative marginal utility are absent. By Tarski's Theorem this finds us the least dense pure strategy Nash Equilibrium.

We compare the performance of three tests: (i) the infeasible locally best test that is based upon the true value of $\delta = \delta_{0}$; (ii) the
feasible version of this test which replaces $\delta$ with its maximum likelihood estimate computed under the null; finally, (iii) we construct an
\emph{ad hoc} test based upon the transitivity index. This last test is the one most often used in practice (where it is compared to a reference value derived from a simple random graph null).

Figure \ref{fig: monte_carlo_results} summarizes our findings. The horizontal axis of the figure correspond to different values of the strategic interaction parameter, $\gamma_{0}$; the vertical axis to the rejection frequency. With 1000 Monte Carlo replications the standard error of our simulation estimate of size is $\sqrt{(0.05\left(1-0.05\right)/1000)}\approx0.007$.

\begin{figure}[htp]
    \caption{Power Analysis}
    \centering
    {{\includegraphics[width=15cm]{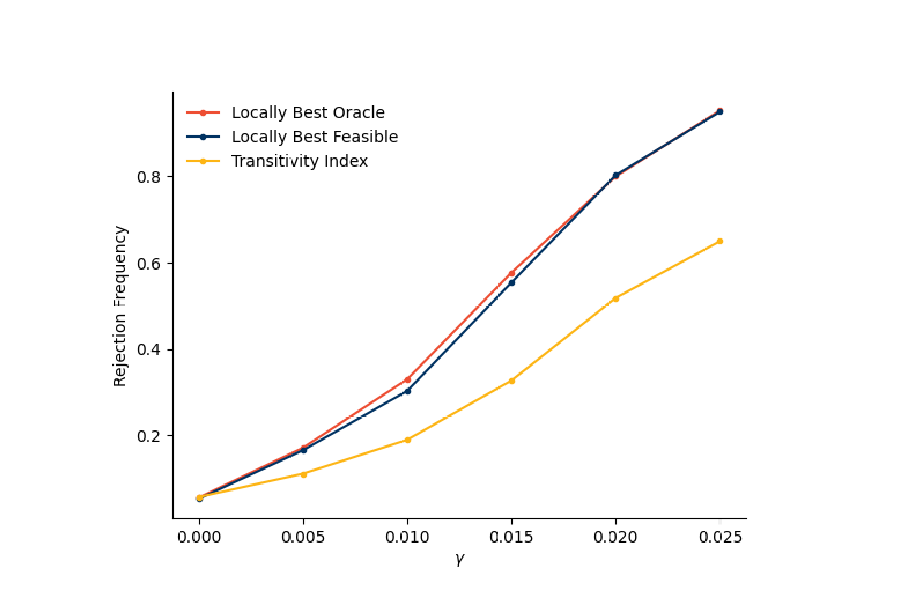} }}%
    \caption*{\underline{Source:} Authors' calculations. \hfill\break
    \underline{Notes:}  The figures plot the frequency with which $H_{0}:\gamma_{0}=0$ is rejected across $1,000$ Monte Carlo replications for networks with $N = 100$ agents. The y-axis reports the estimated rejection frequency, the x-axis gives the value of the strategic interaction parameter, $\gamma$. The minimal pure strategy NE is use to simulate each network. For each simulation a total of 100 MCMC draws from $\mathbb{D}_{\mathbf{s},\mathbf{m}}$ were used to compute critical values. The mixing time was chosen such that (approximately) every edge is twice randomly modified before the network is considered a uniform draw. The marginal utility function equals $A_{i}+B_{j}+X_{i}'\Lambda_{0}X_{j}+\gamma_{0}s_{ij}(\mathbf{d})-U_{ij}$ with $s_{ij}\left(\mathbf{d}\right)=\sum_{k}d_{ik}d_{kj}+\sum_{k\neq j}d_{ik}d_{jk}$. The distribution of $\left(A_{i},B_{i},X_{i}\right)$ and the model parameters are as described in the main text.}
    \label{fig: monte_carlo_results}
\end{figure}

As expected, the actual size of our test is indistinguishable (i.e., equal up to simulation error) from its nominal size. For the designs considered here the power gains associated with using the locally best test statistic derived in Section \ref{sec:test} are considerable. Furthermore the feasible locally best test, which replaces $\delta_0$ with its MLE (computed under the null), performs almost as well as the infeasible locally best test based on the actual value of $\delta_0$.

The Monte Carlo experiments highlight that the locally best test, which upweights episodes of ``unexpected" transitivity, is more powerful than the ad hoc test based on comparing the transitivity index with its null distribution. Note both tests are valid and correctly-sized.

Next we consider the behavior of our test under mis-specification. Specifically we consider a data generating process where the link-specific random utility shocks are Gaussian instead of logistic. We set the variance of the Gaussian distribution to $\pi^2/3$ so that their scale is the same as in the logistic case. All other model parameters remain as defined above. We simulate $1,000$ networks with Gaussian errors, but then proceed ``as if" our logistic assumption were true. Note our sufficiency, conditioning, and similarity arguments are no longer valid.

The results of these experiments are summarized in Figure \ref{fig: MC_Gaussian}. Our tests are conservative in these experiments, with actual sizes below their nominal $0.05$ level. The power properties of our tests remain good. Note the ``oracle" test is based on an evaluation of a logit probability function at the true utility function coefficients (from the the Gaussian model). The feasible test is based upon quasi maximum likelihood estimates of the pseudo-true utility function coefficients (under the null). There is no a priori reason to expect one test to perform better than the other in these designs (indeed one might expect the feasible test to do better under misspecification). In this example the two power curves cross.

This is just one experiment, but it provides some suggestive evidence that our test may still be useful in settings with modest departures from the logistic assumption.

\begin{figure}[htp]
    \caption{Test Behavior Under a Gaussian Random Utility Distribution}
    \centering
    \center{}
    {{\includegraphics[width=15cm]{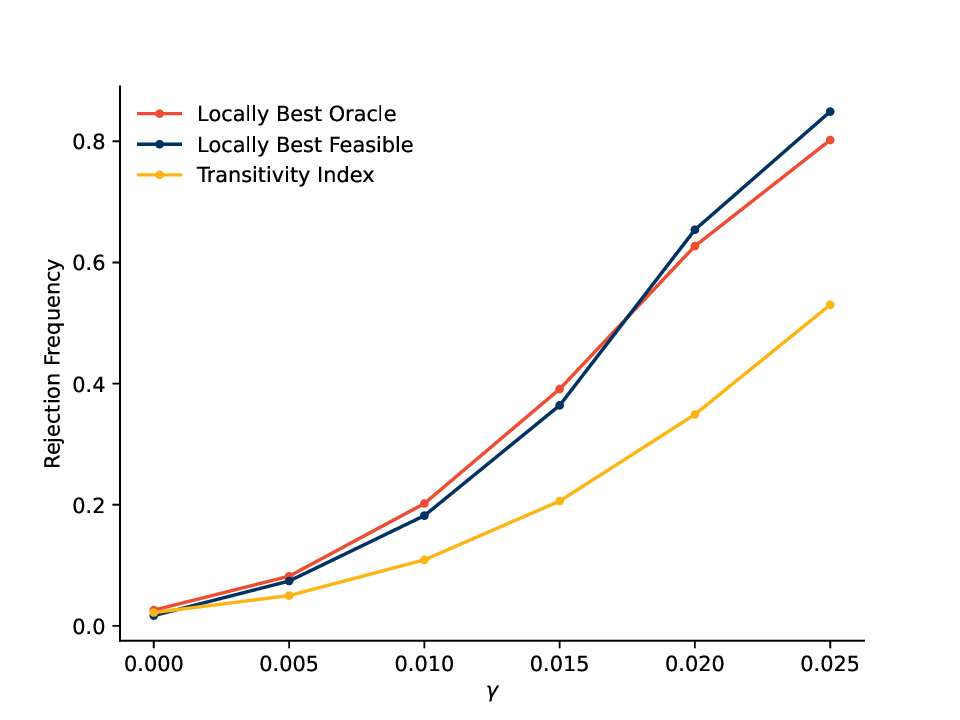} }}%
    \caption*{\underline{Source:} Authors' calculations. \hfill\break
              \underline{Notes:} All features of these experiments are as reported in the notes to Figure \ref{fig: monte_carlo_results} above, with the exception that the link-specific random utility shocks are Gaussian (with a variance of $\pi^2/3$). Estimation proceeds ``as if" the logistic assumption nevertheless holds.}
    \label{fig: MC_Gaussian}              
\end{figure}

\subsection{Additional applications}\label{app: applications}
\subsubsection*{Bi-partite networks}

Let $i=1,\ldots,N_{1}$ index a set of firms deciding which of $j=1,\ldots,N_{2}$ markets to enter or not. To be concrete consider the problem of airlines deciding which routes they will operate in. The $K_{1}\times1$ binary vector $X_{i}^{f}$ indicates what type firm $i$ is (e.g., a legacy carrier or a low cost airline); the binary vector $X_{j}^{m}$ indicates what type market $j$ is (e.g., small, medium or large). Let $\mathbf{E}=\left[E_{ij}\right]_{i=1,\ldots,N_{1},j=1,\ldots,N_{2}}$ be the $N_{1}\times N_{2}$ matrix which records which markets each firm chooses to enter (“$\mathbf{E}$” for market entry matrix).

The payoff firm $i=1,\ldots,N_{1}$ gets from a given industry-wide pattern of market entry $\mathbf{E}=\mathbf{e}$ equals
\begin{equation}
    \nu_{i}\left(\mathbf{e}_{i},\mathbf{e}_{-i};\theta,\mathbf{U}_{i}\right)=\gamma_{0}g_{i}\left(\mathbf{e}\right)+\sum_{j=1}^{N_{2}}e_{ij}\left(A_{i}+B_{j}+X_{i}^{f\prime}\Delta X_{j}^{m}-U_{ij}\right).
\end{equation}
Here $A_{i}$ captures unobserved heterogeneity across firms; for example some firms may be able to operate at systematically lower cost and hence profitably enter more markets (corresponding to high values of $A_{i}$). Likewise $B_{j}$ captures heterogeneity across markets; some markets may be intrinsically more profitable than others and hence many firms operate in them (corresponding to high values of $B_{j}$). The term $X_{i}^{f\prime}\Delta X_{j}^{m}$ allows certain types of markets to be systematically more attractive to certain types of firms. Finally $U_{ij}$ is a firm-by-market specific logistic profit shock.

In a simple entry game, we might set $g_{i}\left(\mathbf{e}\right)=-\sum_{j=1}^{N_{2}}e_{ij}\left[\sum_{k=1}^{N_{1}}e_{kj}\right]$, such that entry into market $j$ is less attractive to firm $i$ when many other firms $k$ also enter market $j$ \citep[e.g.,][]{Ciliberto_Tamer_EM2009}. This payoff function implies independence of entry decisions across markets.

An advantage of thinking about entry decisions as resulting in a bi-partite network, with arcs from firms to markets, is that it makes it easy to consider more complex preference structures. These preference structures can allow for interdependence in entry decisions across markets.

For example, in the model of \cite{Jia_EM08} entry into market $j$ is more attractive if the firm also enters other nearby markets. In this case we might set
\begin{equation}
    g_{i}\left(\mathbf{e}\right)=-\sum_{j=1}^{N_{2}}e_{ij}\left[\sum_{k=1}^{N_{1}}e_{kj}\right] + \lambda\left(\mathbf{e}_{i}\right)
\end{equation}
where $\lambda\left(\mathbf{e}_{i}\right)$ varies inversely with some measure of the spatial spread of those markets $i$ enters (calibrated to measure how the operating costs of the firm vary with the geographic dispersion of the markets entered). In the airline example it may be more costly, or less profitable, to operate across markets that are disconnected. That is an airline may prefer route maps which allow a customer to travel to all other airports in their network without having to fly with competitor. In this example $\lambda\left(\mathbf{e}_{i}\right)$ might equal a decreasing function of the number of connected components in $i$'s route network. Alternatively it could vary with the number of location pairs which can be reached on either a direct flight or by making just one connection.

We can also allow for more complex forms of spatial competition. If $j$ indexes airline routes, then we might set
\begin{equation}
    g_{i}\left(\mathbf{e}\right)=-\sum_{j=1}^{N_{2}}e_{ij}\phi_{j}\left(e_{-i}\right)
\end{equation}
with $\phi_{j}\left(\mathbf{e}_{-i}\right)$ returning how many competitor airlines operate in routes with origin and destination airports both within a one hour drive of the corresponding route $j$ airports (e.g., entry into the SFO-LAX market may depend on how many competitors operate on the OAK-BUR, SFO-BUR, and OAK-LAX routes as well as the number which operate on the SFO-LAX route). One apparent advantage of this “network” perspective is that it allows for the incorporation of complex cross-market complementarities as well as rich forms of spatial competition.

To connect this entry model to the directed network problem defined in the main text we define the $N\times N$ matrix:
\begin{equation}
    \mathbf{D}=\left[\begin{array}{cc}
                        \mathbf{0} & \mathbf{E}\\
                        \mathbf{0} & \mathbf{0}
                    \end{array}\right]
\end{equation}
and then proceed as described in the paper.

The network $\mathbf{D}$ is bipartite. There are no firm-to-firm or market-to-market links. Furthermore, only firms may direct links (with arcs denoting market entry decisions). These features of the problem induce the special structure of  $\mathbf{D}$ above. The cross link matrix will also have a structure analogous to that of the adjacency matrix.

In this example the null reference distribution will assign a zero to many ``links" with probability one; this is not a problem for our MCMC simulation algorithm. The implementation available in the Python \textbf{ugd} package can handle degree sequences and cross link matrices with zero elements.

The null set of networks corresponds to all networks where the same number of firms enter each market as observed in the network in hand and the observed patterns of entry are the same as in the network in hand. For example, the number of low cost airlines entering small, medium and large markets is held fixed and so on.

A natural test statistic would be
\begin{equation}
    R\left(\mathbf{D}\right)=\sum_{i=1}^{N_{1}}\sum_{j=1}^{N_{2}}\left(E_{ij}-p_{ij}\left(\hat{\delta}\right)\right)s_{ij}\left(E_{ij}\right)    
\end{equation}
with $p_{ij}\left(\hat{\delta}\right)$ an estimate of the probability that firm $i$ enters market $j$ under the null that $\gamma_{0}=0$ and $s_{ij}\left(\mathbf{E}\right)$ the marginal network benefit of entering market $j$ for firm $i$ holding all other own and competitor route choices fixed. 

As the above example makes clear, the application of the methods proposed in the main text to bipartite networks is conceptionally straightforward. This, in turn, expands the class of many player games to which our methods apply. Note that questions of test power are game specific.

\subsubsection*{Conditional inference in dyadic models}

Following a suggestion in \cite{Graham_EM17}, our MCMC algorithm can also be used for conditional maximum likelihood estimation (CMLE) of non-strategic dyadic logit models. Let $D_{ij}=1$ if country $i$ attacks country $j$ within some researcher-defined time period. Let $X_{i}$ be a $K \times 1$ vector of country types (e.g., a partition of countries into broad geographic regions), finally let $Z_{ij}=1$ if both $i$ and $j$ are democracies and zero otherwise. We posit the following model for the initiation of conflict by $i$ against $j$
\begin{equation}
    D_{ij}=\mathbf{1}\left(A_{i}+B_{j}+X_{i}'\Lambda_{0}X_{j}+Z_{ij}'\beta_{0}-U_{ij}\geq0\right),
\end{equation}
for $i\neq j$, $i,j=1,\ldots,N$ and $U_{ij}$ logistic.

According to democratic peace theory, democracies are less likely to engage in conflict with other democracies such that $\beta_{0}<0$ \citep[e.g.][]{Oneal_Russett_WP99}. Note that any level (or monadic) effect of democracy on the propensity to initiate conflict generally is absorbed into the ego effects $\left\{ A_{i}\right\} _{i=1}^{N}$ (out-degree effects), while any level effect on the propensity to be militarily targeted by others is absorbed into the alter effects $\left\{ B_{j}\right\} _{j=1}^{N}$ (in-degree effects). Systematic cross-regional patterns in the costs and benefits of conflict are controlled for by the “homophily” term $X_{i}'\Lambda_{0}X_{j}=W_{ij}'\lambda_{0}$.

The conditional likelihood of the network in hand, $\mathbf{D}$, here the observed pattern of conflict among nations, is
\begin{align*}
        L\left(\mathbf{D};\delta_{0},\beta_{0}\right) 
        =& \prod_{i\neq         j}\left[\frac{\exp\left(W_{ij}'\lambda_{0}+R_{i}'\mathbf{A}+R_{j}'\mathbf{B}+Z_{ij}'\beta_{0}\right)}{1+\exp\left(W_{ij}'\lambda_{0}+R_{i}'\mathbf{A}+R_{j}'\mathbf{B}+Z_{ij}'\beta_{0}\right)}\right]^{D_{ij}}\\
	    & \times\left[\frac{1}{1+\exp\left(W_{ij}'\lambda_{0}+R_{i}'\mathbf{A}+R_{j}'    \mathbf{B}+Z_{ij}'\beta_{0}\right)}\right]^{1-D_{ij}}\\
        =& c\left(\mathbf{X},\mathbf{Z};\delta_{0},\beta_{0}\right)\prod_{i\neq     j}\exp\left(\mathbf{T}'\delta_{0}\right)\exp\left(\left[\sum_{i\neq j}D_{ij}Z_{ij}\right]'\beta_{0}\right)   
\end{align*}
with $R_{i}$, as earlier, a $N \times 1$ vector with a $1$ in its $i^{th}$ element and zeros elsewhere and $c\left(\mathbf{X},\mathbf{Z};\delta_{0},\beta_{0}\right)$ not varying with $\mathbf{D}$. Here $\mathbf{T}$ includes the vectorized cross-link matrix as well as the out- and in-degree sequences as discussed in the main text.

Let $\mathbf{\tau}\left(\mathbf{v}\right)$ be the sufficient statistics for $\delta$ in network $\mathbf{v}$. Conditioning on $\mathbf{T=t}$ yields
\begin{align*}
        \Pr\left(\left.\mathbf{D}=\mathbf{d}\right|\mathbf{X}= 
        \mathbf{x},\mathbf{Z}=\mathbf{z},\mathbf{T}=\mathbf{t};\delta_{0},\beta_{0}\right)
        =& \frac{\prod_{i\neq j}\exp\left(\mathbf{t}'\delta_{0}\right)\exp\left(\left[\sum_{i\neq j}d_{ij}z_{ij}\right]'\beta_{0}\right)}{\sum_{\mathbf{v}\in\mathbb{D}_{\mathbf{s},\mathbf{m}}}\prod_{i\neq j}\exp\left(\mathbf{\tau}\left(\mathbf{v}\right)'\delta_{0}\right)\exp\left(\left[\sum_{i\neq j}v_{ij}z_{ij}\right]'\beta_{0}\right)}\\
	    =& \frac{\exp\left(\left[\sum_{i\neq j}d_{ij}z_{ij}\right]'\beta_{0}\right)}{\sum_{\mathbf{v}\in\mathbb{D}_{\mathbf{s},\mathbf{m}}}\exp\left(\left[\sum_{i\neq j}v_{ij}z_{ij}\right]'\beta_{0}\right)}\\
	    =& \Pr\left(\left.\mathbf{D}=\mathbf{d}\right|\mathbf{Z}=\mathbf{z},\mathbf{T}=\mathbf{t},\beta_{0}\right),
\end{align*}
where the second equality follows from the fact that $\mathbf{\tau}\left(\mathbf{v}\right)=\mathbf{t}$ for all $\mathbf{v}$ in $\mathbb{D}_{\mathbf{s},\mathbf{m}}$.

Taking logs yields, after some manipulation,
\begin{multline}
    \ln\Pr\left(\left.\mathbf{D}=\mathbf{d}\right|\mathbf{Z}=\mathbf{z},\mathbf{T}=\mathbf{t},\beta_{0}\right)	= \left[\sum_{i\neq j}d_{ij}z_{ij}\right]'\beta_{0}\\
    -\ln\left(\mathbb{E}\left[\exp\left(\left[\sum_{i\neq j}D_{ij}z_{ij}\right]'\beta_{0}\right)\right]\right)-\ln\left|\mathbb{D}_{\mathbf{s},\mathbf{m}}\right|,\label{eq: conditional_dyadic_likelihood}
\end{multline} 
where the expectation in the second term to the right of the equality is taken with respect to the discrete uniform distribution on $\mathbb{D}_{\mathbf{s},\mathbf{m}}$. This expectation can be estimated using our simulation algorithm. The third term is invariant to $\beta_{0}$ and can consequently be ignored. Basing estimation and inference upon \eqref{eq: conditional_dyadic_likelihood} allows a researcher to learn about $\beta_{0}$ in the presence of out- and in-degree heterogeneity as well as potentially complex patterns of homophily.

In our democratic peace theory example, $Z_{ij}$ is binary. In this case $\sum_{i\neq j}D_{ij}Z_{ij}$ is simply a count of how many wars are initiated by democracies against other democracies. To compute the expectation in \eqref{eq: conditional_dyadic_likelihood} we need an estimate of the distribution of this count induced by the discrete uniform distribution on $\mathbb{D}_{\mathbf{s},\mathbf{m}}$.

\end{document}